\documentclass[11pt]{article}

\usepackage[margin=1in,footskip=0.25in]{geometry}

\date{\today}

\newif\ifsubmission
\submissiontrue

\usepackage[sort&compress,square,comma,numbers]{natbib}

\usepackage[pagebackref]{hyperref}
\hypersetup{
    colorlinks=true,      linkcolor=PineGreen, 
citecolor=NavyBlue,      filecolor=Purple,    urlcolor=Purple         }

\newif\ifcolt
\coltfalse

\usepackage[utf8]{inputenc}
\usepackage[T1]{fontenc}

\let\citet\cite
\let\citep\cite

\usepackage{fourier}
\DeclareMathAlphabet{\mathcal}{OMS}{cmsy}{m}{n}

\usepackage{amsmath,amsthm}
\usepackage{aliascnt}

\usepackage{float}
\usepackage{graphicx}

\usepackage[capitalize, nameinlink, noabbrev]{cleveref}

\newtheorem{theorem}{Theorem}
\numberwithin{theorem}{section}

\newcommand{\makealiasedtheorem}[3]{\newaliascnt{#1}{theorem}\newtheorem{#1}[#1]{#2}\aliascntresetthe{#1}\crefname{#1}{#2}{#3}\Crefname{#1}{#2}{#3}}

\theoremstyle{plain}
\makealiasedtheorem{prop}{Proposition}{Propositions}
\makealiasedtheorem{conjecture}{Conjecture}{Conjectures}
\makealiasedtheorem{corollary}{Corollary}{Corollaries}
\makealiasedtheorem{lem}{Lemma}{Lemmas}
\makealiasedtheorem{example}{Example}{Examples}
\makealiasedtheorem{fact}{Fact}{Facts}
\makealiasedtheorem{claim}{Claim}{Claims}
\makealiasedtheorem{assumption}{Assumption}{Assumptions}
\makealiasedtheorem{setting}{Setting}{Settings}
\makealiasedtheorem{observation}{Observation}{Observations}
\makealiasedtheorem{problem}{Problem}{Problems}
\makealiasedtheorem{importedtheorem}{Imported Theorem}{Imported Theorems}
\makealiasedtheorem{inftheorem}{Informal Theorem}{Informal Theorems}
\makealiasedtheorem{importedlemma}{Imported Lemma}{Imported Lemmas}
\makealiasedtheorem{incompletetheorem}{Incomplete Theorem}{Incomplete Theorems}

\theoremstyle{definition}
\makealiasedtheorem{construction}{Construction}{Constructions}
\makealiasedtheorem{definition}{Definition}{Definitions}

\theoremstyle{remark}
\makealiasedtheorem{remark}{Remark}{Remarks}

\newenvironment{cor}
    {\begin{corollary}
    }
    {
    \end{corollary}
    }

\numberwithin{equation}{section}

\crefname{equation}{}{}
\Crefname{equation}{Equation}{Equations}
\crefname{item}{item}{items}
\Crefname{item}{Item}{Items}

\usepackage{amsmath}
\usepackage{amssymb}

\usepackage{algorithm, algpseudocode, algorithmicx}
\usepackage{bm}
\usepackage{bbm}
\usepackage{booktabs}
\usepackage{cancel}
\usepackage{caption}
\usepackage{commath}
\usepackage{enumitem}
\usepackage{mathtools}
\usepackage{nicefrac}       \usepackage{relsize} \usepackage{subcaption}
\usepackage[dvipsnames,table]{xcolor}
\usepackage{xspace}
\usepackage{url}
\usepackage{breakcites}
\usepackage{stmaryrd}

\usepackage{titlesec}
\titlespacing*{\section}{0pt}{12pt}{5pt}
\titlespacing*{\subsection}{0pt}{11pt}{5pt}
\titlespacing*{\subsubsection}{0pt}{11pt}{5pt}
\titlespacing*{\paragraph}{0pt}{6pt}{1em}

\floatstyle{plaintop}
\restylefloat{table}

\usepackage{crossreftools}
\renewcommand{\algorithmicrequire}{\textbf{Input:}}
\renewcommand{\algorithmicensure}{\textbf{Output:}}

\algrenewcommand\alglinenumber[1]{\sf\scriptsize\color{NavyBlue}{#1}}
\algrenewcommand\algorithmicrequire{\textbf{Input:}}
\algrenewcommand\algorithmicensure{\textbf{Output:}}

\usepackage{contour}
\usepackage[normalem]{ulem}

\contourlength{0.8pt}

\makeatletter
\def\hlinewd#1{\noalign{\ifnum0=`}\fi\hrule \@height #1 \futurelet
	\reserved@a\@xhline}
\makeatother

\renewcommand*{\backref}[1]{}
\renewcommand*{\backrefalt}[4]{\ifcase #1 (No citations.)\or
  (Cited on page #2.)\else
  (Cited on pages #2.)\fi
}

\makeatletter
\newtheorem*{rep@theorem}{\rep@title}
\newcommand{\newreptheorem}[2]{\newenvironment{rep#1}[1]{\def\rep@title{\Cref{##1}, restated}\begin{rep@theorem}}{\end{rep@theorem}}}
\makeatother
\makeatletter
\newtheorem*{rep@lemma}{\rep@title}
\newcommand{\newreplemma}[2]{\newenvironment{rep#1}[1]{\def\rep@title{\Cref{##1} Restated}\begin{rep@lemma}}{\end{rep@lemma}}}
\makeatother
\makeatletter
\newtheorem*{rep@definition}{\rep@title}
\newcommand{\newrepdefinition}[2]{\newenvironment{rep#1}[1]{\def\rep@title{\Cref{##1}, restated}\begin{rep@definition}}{\end{rep@definition}}}
\makeatother
\makeatletter
\newtheorem*{rep@corollary}{\rep@title}
\newcommand{\newrepcorollary}[2]{\newenvironment{rep#1}[1]{\def\rep@title{\Cref{##1}, restated}\begin{rep@corollary}}{\end{rep@corollary}}}
\makeatother
\newreptheorem{theorem}{Theorem}
\newreplemma{lemma}{Lemma}
\newrepdefinition{definition}{Definition}
\newrepcorollary{corollary}{Corollary}

\usepackage[most]{tcolorbox}
\newcommand{\actionbox}[1]{\begin{tcolorbox}[colback=white,colframe=black,width=\columnwidth,boxsep=5pt,arc=4pt]
    \centering \emph{#1}
\end{tcolorbox}}

\newcommand{\e}{\mathrm{e}}
\DeclareMathOperator{\TV}{\mathrm{D}_{\mathrm{TV}}}

\renewcommand{\top}{\protect{\smash{\intercal}}}

\let\daggerFake\dagger
\renewcommand{\dagger}{{\smash{\daggerFake}}}

\newcommand{\defeq}[0]{\ensuremath{\;{\vcentcolon=}\;}\xspace}

\definecolor{ltyellow}{rgb}{1, 1, 0.9}
\usepackage{soul}
\sethlcolor{ltyellow}
\newcommand{\warn}[1]{\emph{#1}}

\DeclareMathOperator{\TriBlockKrylov}{TriBlkKrylov}
\DeclareMathOperator{\BlockKrylov}{BlkKrylov}
\DeclareMathOperator{\BilinearPower}{BilinearPower}
\DeclareMathOperator{\Wishart}{\textsc{Wishart}}

\DeclareMathOperator{\Haar}{\textsc{haar}}

\DeclareMathOperator{\polylog}{polylog}

\let\norm\relax
\DeclarePairedDelimiter\norm{\lVert}{\rVert}

\DeclareMathOperator*{\Var}{Var}
\DeclareMathOperator*{\E}{\mathbb{E}}
\DeclareMathOperator{\tr}{tr}
\DeclareMathOperator{\diag}{diag}

\newcommand{\etal}{\text{et al.}\xspace}

\newcommand{\eps}[0]{\ensuremath{\varepsilon}}
\let\epsilon\eps

 \newcommand{\vecalt}[1]{\boldsymbol{#1}}

\newcommand{\mat}[1]{\bm{#1}} \renewcommand{\vec}[1]{\bm{#1}}

 \newcommand{\bmat}[1]{\begin{bmatrix} #1 \end{bmatrix}} \newcommand{\flatbmat}[1]{[\begin{matrix} #1 \end{matrix}]}  \newcommand{\sbmat}[1]{\left[\begin{smallmatrix} #1 \end{smallmatrix}\right]}

\newcommand{\coloneqq}{\defeq}
\newcommand{\prob}{\mathbb{P}}
\renewcommand{\Pr}{\prob}
\newcommand{\Id}{\mathbf{I}}
\newcommand{\order}{\cO}
\newcommand{\orderish}{\widetilde\cO}

\definecolor{color1}{HTML}{2437E6}
\definecolor{color2}{HTML}{D12757}

\newcommand{\mA}{\ensuremath{\mat{A}}\xspace}

\newcommand{\mD}{\ensuremath{\mat{D}}\xspace}

\newcommand{\mG}{\ensuremath{\mat{G}}\xspace}

\newcommand{\mI}{\ensuremath{\mathbf{I}}\xspace}

\newcommand{\mM}{\ensuremath{\mat{M}}\xspace}

\newcommand{\mQ}{\ensuremath{\mat{Q}}\xspace}

\newcommand{\mT}{\ensuremath{\mat{T}}\xspace}
\newcommand{\mU}{\ensuremath{\mat{U}}\xspace}
\newcommand{\mV}{\ensuremath{\mat{V}}\xspace}
\newcommand{\mW}{\ensuremath{\mat{W}}\xspace}

\newcommand{\mY}{\ensuremath{\mat{Y}}\xspace}
\newcommand{\mZ}{\ensuremath{\mat{Z}}\xspace}

\renewcommand{\eqref}{\cref}

\let\oldthebibliography\thebibliography
\renewcommand{\thebibliography}[1]{\oldthebibliography{#1}\setlength{\itemsep}{0pt}\setlength{\parskip}{0pt}}

\newcommand{\vb}{\ensuremath{\vec{b}}\xspace}
\newcommand{\vc}{\ensuremath{\vec{c}}\xspace}

\newcommand{\ve}{\ensuremath{\mathbf{e}}\xspace}

\newcommand{\vg}{\ensuremath{\vec{g}}\xspace}

\newcommand{\vr}{\ensuremath{\vec{r}}\xspace}

\newcommand{\vu}{\ensuremath{\vec{u}}\xspace}
\newcommand{\vv}{\ensuremath{\vec{v}}\xspace}

\newcommand{\vx}{\ensuremath{\vec{x}}\xspace}
\newcommand{\vy}{\ensuremath{\vec{y}}\xspace}
\newcommand{\vz}{\ensuremath{\vec{z}}\xspace}

\newcommand{\vzero}{\ensuremath{\vec{0}}\xspace}

\newcommand{\vdelta}{\ensuremath{\vecalt{\delta}}\xspace}

\newcommand{\cA}{\ensuremath{{\mathcal A}}\xspace}
\newcommand{\cB}{\ensuremath{{\mathcal B}}\xspace}

\newcommand{\cE}{\ensuremath{{\mathcal E}}\xspace}
\newcommand{\cF}{\ensuremath{{\mathcal F}}\xspace}

\newcommand{\cM}{\ensuremath{{\mathcal M}}\xspace}
\newcommand{\cN}{\ensuremath{{\mathcal N}}\xspace}
\newcommand{\cO}{\ensuremath{{\mathcal O}}\xspace}

\newcommand{\cS}{\ensuremath{{\mathcal S}}\xspace}
\newcommand{\cT}{\ensuremath{{\mathcal T}}\xspace}

\newcommand{\bbC}{\ensuremath{{\mathbb C}}\xspace}

\newcommand{\bbN}{\ensuremath{{\mathbb N}}\xspace}
\newcommand{\bbO}{\ensuremath{{\mathbb O}}\xspace}

\newcommand{\bbR}{\ensuremath{{\mathbb R}}\xspace}

\newcommand{\rC}{\ensuremath{\mathrm{C}}\xspace}

\DeclareMathOperator{\cond}{cond}

\title{The matrix-vector complexity of $\mA\vx=\vb$}

\author{
    Michał Dereziński \\ University of Michigan \\ \texttt{derezin@umich.edu}
    \and
    Ethan N. Epperly \\ UC Berkeley \\ \texttt{eepperly@berkeley.edu}
    \and
    Raphael A. Meyer \\ UC Berkeley \& ICSI \\ \texttt{ram900@berkeley.edu}
}

\begin{document}
\maketitle

\begin{abstract}
    Matrix--vector algorithms, particularly Krylov subspace methods, are widely viewed as the most effective algorithms for solving large systems of linear equations.
    This paper establishes lower bounds on the worst-case number of matrix--vector products needed by such an algorithm to approximately solve a general linear system.
    The first main result is that, for any matrix--vector algorithm which is allowed the use of randomization and can perform products with both a matrix and its transpose, $\Omega(\kappa \log(1/\varepsilon))$ matrix--vector products are necessary to solve a linear system with condition number $\kappa$ to accuracy $\varepsilon$, matching an upper bound for conjugate gradient on the normal equations.
    The second main result is that one-sided algorithms, which lack access to the transpose, must use $n$ matrix--vector products to solve an $n \times n$ linear system, even when the problem is perfectly conditioned.
    Both main results include explicit constants that match known upper bounds up to a factor of four.
    These results rigorously demonstrate the limitations of matrix--vector algorithms and confirm the optimality of widely used Krylov subspace algorithms.
\end{abstract}

\section{Introduction}

Solving systems of linear algebraic equations is a fundamental problem in computer science and mathematics, and algorithms for this task are used in almost every area of modern computation, including machine learning \citep{mohri2018foundations}, scientific computing \citep{greenbaum_iterative_1997}, and optimization \citep{boyd2004convex}.
Given an invertible matrix $\mA \in \bbR^{n \times n}$ and a vector $\vb \in \bbR^n$, the objective is to find the vector $\vx \in \bbR^n$ for which $\mA \vx = \vb$.
The standard algorithms for this task are variants of Gaussian elimination, which run in $\order(n^\omega)$ operations when implemented using fast matrix multiplication.\footnote{Here, $\omega \le 2.371\ldots$ is the matrix multiplication exponent.}
Determining the optimal complexity for the linear system problem is a major open question \citep{spielman_solve_2024}.

If we are willing to settle for an \warn{approximate} solution to $\mA\vx = \vb$, the design space for algorithms becomes larger.
Significant attention has gone to the class of matrix--vector algorithms, which learn about \mA \emph{only} through the matrix--vector product (matvec) primitives $\vz \mapsto \mA\vz$ and $\vz \mapsto \mA^\top \vz$.
These methods are commonly understood to be among the only approaches for solving large linear systems \citep{greenbaum_iterative_1997}.
The most popular matrix--vector algorithms for solving linear systems are \emph{Krylov subspace methods}, such as the conjugate gradient and GMRES algorithms, which see wide use in practice.
Given the importance of these methods, it is natural to ask:
\actionbox{
How many matvecs are neccessary and sufficient to approximately solve \(\mA\vx=\vb\)?
}
\noindent 
To specify this problem completely, we must determine an accuracy requirement for an approximate solution \(\tilde\vx\).
For this article, we seek a vector $\tilde\vx$ with a small relative residual:
\begin{equation} \label{eq:approximate-solution}
    \norm{{\mA\tilde\vx - \vb}}_2 \le \varepsilon \norm{\vb}_2 \quad \text{for specified } \varepsilon > 0.
\end{equation}
Throughout this paper, $\norm{\cdot}_2$ will denote the $\ell_2$ norm of a vector or the spectral norm of a matrix.
See \cref{app:other-metrics} for a discussion of other error metrics.

\subsection{Background and research questions} \label{sec:background}

Most work on matrix--vector complexity for the $\mA\vx = \vb$ problem focuses on the case when \mA is symmetric positive definite (SPD).
In this case, the standard Krylov subspace algorithms are conjugate gradient and MINRES \cite[Algs.~2 \& 4]{greenbaum_iterative_1997}.
The latter achieves the guarantee \cref{eq:approximate-solution} using $\order(\sqrt{\kappa}\, \log(\nicefrac1\eps))$ matvecs.
Here, $\kappa \coloneqq \cond(\mA) \coloneqq \norm{\mA}_2 \norm{\smash{\mA^{-1}}}_2 = \sigma_{\mathrm{max}}(\mA) / \sigma_{\mathrm{min}}(\mA)$ is the \smash{\emph{condition number}}.
Classical lower bounds confirm that \(\Omega(\sqrt{\kappa}\,\log(\nicefrac1\eps))\) matvecs are necessary for any deterministic algorithm to achieve this guarantee \citep[Sec.~7.2]{nemirovskij1983problem}. However, with the rapidly increasing use of randomization in computational linear algebra \citep{martinsson2020randomized,randlapack_book,derezinski2024recent}, a lower bound against deterministic algorithms is insufficient to understand the difficulty of solving a given problem.
This limitation was partially addressed by
\ifcolt 
\citet{BHSW20},
\else
Braverman \etal \citep{BHSW20},
\fi 
who proved a weaker \(\Omega(\sqrt\kappa/\polylog(\kappa))\) lower bound for SPD linear systems against randomized~algorithms.

It is natural to inquire how the matrix--vector complexity (of both deterministic and randomized algorithms) changes when we consider general linear systems for which \mA is not SPD.
In this setting, we can distinguish between a \emph{two-sided matrix--vector algorithm}, which can perform matrix--vector products with both \mA and $\mA^\top$, and a \emph{one-sided matrix--vector algorithm}, which can only perform matrix--vector products with \mA.
Both one- and two-sided algorithms are studied and used in practice \citep{greenbaum_iterative_1997}. 

To solve $\mA\vx = \vb$ using a two-sided matrix--vector algorithm, we can convert the problem into an SPD problem by passing to the \emph{normal equations}
\begin{equation*}
    (\mA^\top \mA)\vx = \mA^\top \vb.
\end{equation*}
The matrix $\mA^\top \mA$ is SPD,
so the conjugate gradient method can solve this system.
The condition number of the normal equations is larger, \(\cond(\mA^\top\mA) = \kappa^2\), implying that conjugate gradient on the normal equations
requires $\order(\sqrt{\kappa^2} \kern.05em \log(\nicefrac1\eps)) = \order(\kappa \log(\nicefrac1\eps))$ matvecs to produce an $\varepsilon$-accurate solution \citep[Eqn.~2.12]{nachtigal_how_1992}.
This is known to be tight for deterministic algorithms \citep[p.~180]{traub_information_1988}. 
For randomized algorithms, no existing lower bounds extend beyond the SPD case.
We ask:
\actionbox{
Does \emph{every} two-sided algorithm require $\Omega(\kappa\log(\nicefrac1\eps))$ matvecs in the worst case?
}

The most popular one-sided matrix--vector algorithm is the GMRES algorithm \citep{saad_gmres_1986}.
This algorithm has many desirable properties: When it is applied to an SPD system, it achieves the same $\order(\sqrt{\kappa}\kern.08em  \log(\nicefrac1\eps))$ complexity as MINRES, and it converges rapidly on many problems in practice.
However, this algorithm also has a fundamental weakness: There exists a perfectly conditioned linear system ($\kappa=1$) on which GMRES requires a full $n$ iterations to solve the system to any accuracy level $\varepsilon < 1$ \cite[Ex.~C]{nachtigal_how_1992}.
This raises the question:
\actionbox{
Does \emph{every} one-sided algorithm require $\Omega(n)$ matvecs in the worst case, even when $\kappa\!=\!1$?
}

\subsection{Our results} \label{sec:results}

This paper answers both of these questions in the affirmative.
Our first result concerns two-sided matrix--vector algorithms.

\begin{theorem}[Linear systems: Lower bound against two-sided algorithms]
    \label{thm:intro-two-sided-lower-bound}
    Fix \(\eta>0\).
    There does not exist any randomized algorithm that takes inputs \(\kappa\), \(\eps\), and \vb, which computes fewer than \(\frac{1-\eta}{4}\kappa\log(\nicefrac1\eps)\) two-sided matrix--vector products with \mA, and which returns a vector \(\tilde\vx\) such that
\begin{equation*}
        \norm{\mA\tilde\vx - \vb}_2 \le \varepsilon \norm{\vb}_2 \quad \text{with prob.\ } \ge \frac{5}{6}
    \end{equation*}
for all numbers \(\eps>0\) and \(\kappa \geq 1\) and matrices \mA of condition number $\cond(\mA)\le \kappa$. 
\end{theorem}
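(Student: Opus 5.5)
The plan is to prove the lower bound for a hard random distribution of matrices (Yao's principle). Since products with $\mA^\top$ are available, I take $\mA$ itself symmetric, so two-sided queries are the same as one-sided queries and cost nothing extra. Concretely, I take $\mA = \mQ^\top \mD \mQ$ (or a symmetric indefinite variant) with $\mQ$ Haar-distributed orthogonal. The spectrum of $\mD$ lies in $[-1,-1/\kappa]\cup[1/\kappa,1]$, or in a shifted interval chosen so that $\cond(\mA)\le\kappa$. The indefiniteness is what changes the rate from $\sqrt\kappa$ to $\kappa$. For a symmetric indefinite spectrum $\pm[1/\kappa,1]$, the best polynomial $p$ with $p(0)=1$ and degree $k$ satisfies $\max_{\lambda}|p(\lambda)|\approx (1-2/\kappa)^{k/2}$, equivalently $\sqrt{\kappa^2}$ behaviour in $\lambda^2$. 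This forces $k\gtrsim \tfrac{\kappa}{2}\log(1/\eps)$ up to the constant, and the target constant $1/4$ presumably reflects a factor-of-two loss from the randomized argument.

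\textbf{Step 1 (reduction to Krylov/polynomial algorithms).} By rotational invariance of $\mQ$, conditioned on the first $t$ query vectors and responses, the algorithm's knowledge is captured by a block Krylov-type subspace. I would argue, following the standard Wishart/Haar "adaptive queries can be assumed to be fresh Gaussian directions" technique (as in Braverman \etal), that an algorithm making $t$ queries starting from $\vb$ and $t$ random vectors does no better than one that knows $\mA$ restricted to $\BlockKrylov$ spaces. Its residual is then essentially $p(\mA)\vb$ plus contributions on the unexplored complement. The key claim is this: when $n\gg t$ and the eigenvalues of $\mD$ are sampled from a suitable density, say an equilibrium/arcsine-type measure on the two intervals, the spectral measure of $\mA$ seen by $\vb$ concentrates. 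Consequently the error of any estimate $\tilde\vx$ built from the observations is lower bounded by $\min_{\deg p\le t,\,p(0)=1}\|p(\mA)\vb\|$. The random extra query directions can be shown to be asymptotically useless, since they carry information about eigenvectors nearly orthogonal to $\vb$.

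\textbf{Step 2 (polynomial lower bound).} I then need to show that $\min_{p}\bigl(\int |p(\lambda)|^2\,d\mu(\lambda)\bigr)^{1/2}\ge \eps$ whenever $t< \tfrac{1-\eta}{4}\kappa\log(1/\eps)$, where $\mu$ is the limiting spectral measure. For a measure with equilibrium-like density on $\pm[1/\kappa,1]$, the $L^2(\mu)$ minimal residual decays at the same exponential rate as the sup-norm Chebyshev bound, namely $\exp(-2t/\kappa\cdot(1+o(1)))$ per degree in $\lambda^2$. Comparing exponents then gives the constant. To get explicit constants I would pick a measure whose orthogonal polynomials are explicit, for instance Jacobi polynomials under the map $\lambda\mapsto\lambda^2$, and compute the Christoffel function at $0$ directly.

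\textbf{Main obstacle.} The main difficulty is Step 1 with the correct constant: handling adaptive randomized queries and finite $n$. I would first try the conditional-distribution argument. Condition on $\mA\mathcal{K}$, show that the unobserved part of $\mQ$ is Haar on the orthogonal complement, and show that $\mA\vx=\vb$ then demands knowledge of $\vb$'s components along uninformed eigenvectors. I would then bound the success probability below $5/6$ by a concentration argument as $n\to\infty$, absorbing the $\eta$ slack.
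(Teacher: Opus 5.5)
There is a genuine gap. Your Step~1 is, in effect, the whole theorem, and the proposal does not prove it.

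The claim that every randomized adaptive algorithm making $t$ matvecs has residual at least $\min_{\deg p\le t,\,p(0)=1}\norm{p(\mA)\vb}_2$ is exactly the statement ``randomization cannot beat Krylov.'' That is what has to be shown; for SPD systems, Braverman et al.\ obtained only a polylogarithmically lossy bound of this kind. Saying the random directions are ``asymptotically useless'' is not an argument. The algorithm sees cross-moments such as $\vb^\top\mA^{j}\vg_i$ and may combine all observations nonlinearly, and a solving lower bound must control the \emph{joint} law of the transcript and the target $\mA^{-1}\vb$.

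The mechanism you propose does not go through either. For $\mA=\mQ^\top\mD\mQ$ with fixed $\mD$, no columns of $\mQ$ are ever observed: the data $\mA\mZ=\mQ^\top\mD\mQ\mZ$ constrain $\mQ$ quadratically. So there is no hidden Haar factor on a complement to isolate. The only remaining symmetry is rotation of the complement of $\operatorname{range}[\mZ\ \ \mA\mZ]$. The conditional law of the unobserved block of $\mA$ depends on the observed blocks, through the constraint that the full matrix have spectrum $\mD$. The paper's hidden Haar theorem is specific to one-sided access to an orthogonal $\mQ$ itself, which is why it is used only for the one-sided bound.

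Separately, $(1-2/\kappa)^{k/2}\approx e^{-k/\kappa}$ gives $k\approx\kappa\log(1/\eps)$, not $\tfrac\kappa2\log(1/\eps)$. Your account of the constant $\tfrac14$ is therefore guesswork, not a derivation.

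The missing idea is to avoid analyzing residuals of adaptive algorithms altogether. The paper instead reduces to a two-point indistinguishability problem for a scalar spectral quantity, in three moves.

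\emph{Reduction to trace estimation.} The paper draws $\vb$ uniformly from the sphere of radius $\sqrt n$ and takes $\eps=1/n$. Then $\vb^\top\tilde\vx$ estimates $\tr(\mA^{-1})$ to within $5\norm{\mA^{-1}}_{\mathrm F}$ by Girard--Hutchinson, so $\log(1/\eps)$ enters as $\log n$.

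\emph{Two spectra with matched moments.} Hardness of the estimation task comes from sup-norm inapproximability of $1/x$ on $[-\kappa,-1]\cup[1,\kappa]$, reduced to $[1,\kappa^2]$ by taking the odd part of $p$. After discretizing to $t+2$ nodes, LP duality yields diagonal $\mD,\tilde\mD$ on common nodes with equal moments up to degree $t$. Once $n\gtrsim t^6/\alpha^2$, they also satisfy $|\tr(\mD^{-1})-\tr(\tilde\mD^{-1})|\gg\norm{\mD^{-1}}_{\mathrm F}$.

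\emph{Indistinguishability of transcripts.} Both are conjugated by the same Haar matrix. The reduction of Chewi et al.\ then simulates any $\ell$-query adaptive algorithm from a triangular block Krylov sequence, whose law is governed by the bilinear moments $\vg_i^\top\mD^j\vg_{i'}$ with $j\le 2\ell$. With matched moments these reduce to mean-shifted independent Wishart matrices that are close in total variation. Hence about $t/2$ queries cannot tell the two ensembles apart.

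Because only transcript distributions enter, randomization and adaptivity are handled automatically. Only sup-norm inapproximability is needed, so your $L^2(\mu)$/Christoffel-function computation in Step~2 becomes unnecessary. The price is that the hard instance has $\eps=1/n$ with $n$ polynomially larger than $\kappa$. Your route would give a bound at fixed $\eps$ as $n\to\infty$ if Step~1 were actually proved, but as written it is not a proof.
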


This result shows that conjugate gradient on the normal equations is optimal from the perspective of worst-case complexity.\footnote{We note that there are reasons in practice to avoid iterative methods based on the normal equations, including numerical stability \cite[Sec.~8.2]{paige_lsqr_1982} and better-than-worst case performance \citep{nachtigal_how_1992}.}
A strength of this result is its explicitness, with the complexity bound featuring no hidden constants.
Conjugate gradient on the normal equations achieves the accuracy guarantee \cref{eq:approximate-solution} in \((1+o(1))\kappa \log(\nicefrac1\eps)\) matvecs \cite[Eqn.~3.6]{greenbaum_iterative_1997}, so the lower and upper bounds match to a factor of roughly 4.

Our second result concerns one-sided, transpose-free algorithms.

\begin{theorem}[Linear systems: Lower bound against one-sided algorithms]
    \label{thm:intro-one-sided-lower-bound}
    Consider any randomized algorithm that takes a vector \vb and a one-sided matrix--vector oracle for a matrix \mA and outputs a vector \(\tilde\vx\), and suppose $n$ is larger than some universal constant.
Then: \begin{align*}
        \text{Achieving }\ \norm{\mA\tilde\vx - \vb}_2 &\le \frac{1}{2} \norm{\vb}_2 &&\text{with prob.\ } \ge \frac{2}{3} &&\text{requires at least $\lceil n/2\rceil$ matvecs}; \\
        \text{Achieving }\ \norm{\mA\tilde\vx - \vb}_2 &\le \frac{0.2}{\sqrt{n}} \norm{\vb}_2 &&\text{with prob.\ } \ge \frac{2}{3} &&\text{requires $n$ matvecs}.
    \end{align*}
Moreover, the hard instance $\mA$ can be taken to be \emph{orthogonal} ($\cond(\mA) = 1$).
\end{theorem}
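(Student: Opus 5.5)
The plan is to use Yao's minimax principle. It suffices to exhibit a distribution over orthogonal matrices $\mA$ (with a fixed right-hand side, say $\vb = \ve_1$) such that every \emph{deterministic} adaptive one-sided algorithm making $k$ matvecs fails with probability greater than $1/3$. The hard distribution will be $\mA = \mQ$ with $\mQ$ drawn from the Haar measure on the orthogonal group $\bbO(n)$, possibly composed with a fixed shift or permutation. Then $\vx^\star = \mQ^\top \vb$, and because $\mA$ is orthogonal, $\norm{\mA\tilde\vx - \vb}_2 = \norm{\tilde\vx - \mQ^\top\vb}_2$. So the task becomes estimating the vector $\mQ^\top \vb$ from the observations $\mQ\vz_1,\dots,\mQ\vz_k$, where the queries $\vz_i$ may be chosen adaptively.

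\textbf{Conditional distribution.} After $k$ queries, we may assume without loss of generality that the queries are orthonormal, with span $Z$. Consider the information $\mQ\vz_1,\dots,\mQ\vz_k$, together with $\vb$, which lies in the algorithm's knowledge. By right- and left-invariance of Haar measure, conditioned on these observations, $\mQ$ is uniform over orthogonal maps sending $\vz_i \mapsto \mQ\vz_i$. Equivalently, $\mQ$ restricted to $Z^\perp$ is a Haar-random isometry from $Z^\perp$ onto $W^\perp$, where $W = \mathrm{span}(\mQ\vz_i)$. Adaptivity is handled by applying this invariance inductively: each new query depends only on past observations, and the conditional law keeps this form at every step.

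Now decompose $\vb = \vb_W + \vb_{W^\perp}$. The part $\mQ^\top\vb_W$ is known exactly. The remaining part $\mQ^\top\vb_{W^\perp}$ is uniform on the sphere of radius $\norm{\vb_{W^\perp}}_2$ inside $Z^\perp$, a space of dimension $n-k$. The best estimator, the conditional mean, predicts $0$ for this part. Any estimator then incurs error at least roughly $\norm{\vb_{W^\perp}}_2$ with constant probability, and at least $\norm{\vb_{W^\perp}}_2/\sqrt{n-k}$-scale error when $n-k=1$ (where the sphere is two points $\pm$).

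\textbf{Controlling $\norm{\vb_{W^\perp}}_2$.} The key remaining quantity is $\norm{P_{W^\perp}\vb}_2$, and this is where the algorithm could cheat: it may query $\vz$ close to $\mQ^\top\vb$ once it has partial information. Under the conditional law, however, before each query the direction $\mQ^\top\vb_{W^\perp}$ is uniform in $Z^\perp$. A new query $\vz$ in $Z^\perp$ therefore captures only a $\mathrm{Beta}(1/2,(n-k-1)/2)$ fraction of the remaining norm of $\vb$. Consequently $\norm{\vb_{W^\perp}}_2^2$ after $k$ steps equals in distribution $\prod_{j}(1-\beta_j)$, which has mean exactly $(n-k)/n$. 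This follows because, with $\vb$ fixed, $\mQ^\top\vb$ is uniform on the sphere, so $P_Z \mQ^\top\vb$ for an adaptively built $Z$ behaves like a projection onto a fixed $k$-dimensional subspace.

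\textbf{The two regimes.} For $k < n/2$, we get $\norm{\vb_{W^\perp}}_2^2 \approx (n-k)/n > 1/2$ with high concentration. The unknown part is a uniform vector on a sphere of that radius in dimension $n-k \ge n/2$, so any fixed guess has error at least about that radius with probability tending to one. This gives residual $> \tfrac12\norm{\vb}_2$ with probability $>1/3$ for large $n$. For $k = n-1$, the unknown component is $\pm$ a fixed unit vector times $\norm{\vb_{W^\perp}}_2$, with each sign equally likely. Any guess errs by at least $\norm{\vb_{W^\perp}}_2$ with probability $\ge 1/2$. Since $\norm{\vb_{W^\perp}}_2^2 \sim \mathrm{Beta}(1/2,(n-1)/2)$, we have $\norm{\vb_{W^\perp}}_2 > 0.2/\sqrt n$ with probability about $\Pr[|g|>0.2] \approx 0.84$. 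The failure probability is then at least roughly $0.42 > 1/3$.

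\textbf{Main obstacle.} The delicate step is rigorously justifying the adaptive conditional-invariance argument, namely that $P_{W^\perp}\vb$ has the same law as under nonadaptive queries. I would prove it by a martingale/induction argument on the conditional Haar measure, using the stabilizer subgroup that fixes the observed pairs.
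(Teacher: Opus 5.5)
Your proposal is correct and follows essentially the same route as the paper. Both use Yao's principle, a Haar-random orthogonal $\mQ$ with $\vb=\ve_1$, the conditional (hidden Haar) law of $\mQ$ on the unqueried subspace established by induction, and sphere (anti)concentration in the two regimes. The only cosmetic difference is that the paper reads off the law of $\norm{\vb_{W^\perp}}_2^2$ directly: since $\mQ\mU$ is itself Haar, $\mV^\top\ve_1$ is the first $t$ coordinates of a uniform unit vector. Your sequential Beta-product computation gives the same $\mathrm{Beta}((n-k)/2,k/2)$ law.
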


This result confirms that GMRES's requirement of all $n$ matvecs is optimal, even for perfectly conditioned systems.
More precisely, to solve $\mA\vx = \vb$ to constant accuracy requires accessing half the matrix ($n/2$ matvecs), and solving to accuracy $\order(1/\sqrt{n})$ requires reading \warn{the entire} matrix.
Unless we assume some structure on \mA \emph{beyond its condition number}, access to the transpose \(\mA^\top\) is \emph{necessary} to solve linear systems of equations with better than trivial complexity.

Algorithms like conjugate gradient and GMRES are already known to be optimal among certain classes of algorithms, like within the class of deterministic single-vector Krylov methods (see
\ifcolt
\citet[Ch.~2]{greenbaum_iterative_1997}
\else 
\cite[Ch.~2]{greenbaum_iterative_1997}
\fi and
\ifcolt
\citet[Sec.~7.2]{nemirovskij1983problem}).
\else 
\cite[Sec.~7.2]{nemirovskij1983problem}).
\fi
However, prior works do not fully answer if these methods are optimal within the larger class of general randomized matrix--vector algorithms.\footnote{
\ifcolt
\cite{BHSW20}
\else 
Braverman \etal \citep{BHSW20}
\fi 
prove that \(\Omega(\sqrt\kappa / \polylog(\kappa))\) matvecs are needed to solve a SPD linear system to constant error, partially resolving this question.}
Indeed, there have been many examples of randomized algorithms that achieve surprising improvements to the complexity of problems in linear algebra \citep{rokhlin_fast_2008,clarkson_numerical_2009,clarkson_low-rank_2017,musco_sublinear_2017,meyer_hutch_2021,peng_solving_2024,DerezinskiSidfordSODA26}.
However, despite all efforts, no improvements to the worst-case complexity of solving general linear systems have materialized.
Our research demonstrates why this is the case: Linear system solving is \emph{not} amenable to such a speedup, in both the one- and two-sided cases.

\paragraph{Our Techniques.~}
We use different techniques to handle the one- and two-sided cases.
The two-sided lower bound (\cref{thm:intro-two-sided-lower-bound}) uses an approach based on the \emph{nonexistence} of good polynomial approximations, extending an argument of
\ifcolt
\citet{chewi2024query}.
\else 
Chewi \etal \cite{chewi2024query}.
\fi
We show that since no polynomial can approximate the function \(f(x)=\nicefrac1x\) to error \(\eps\) on the split interval \(\cS \defeq [-\kappa,-1]\cup[1,\kappa]\) with degree less than \(t=\cO(\kappa\log(1/\eps))\), no algorithm can use fewer than \(\cO(t)\) matvecs to estimate \(\tr(f(\mA)) = \tr(\mA^{-1})\) for all matrices with eigenvalues in \cS.
We then make a reduction, showing that any linear system solver satisfying \cref{eq:approximate-solution} can be used to estimate \(\tr(\mA^{-1})\), resulting in a proof of \cref{thm:intro-two-sided-lower-bound}.
See \cref{sec:two-sided}~for~details.

The one-sided lower bound (\cref{thm:intro-one-sided-lower-bound}) follows from a new \emph{hidden Haar theorem} (\cref{thm:hidden-haar}), which shows that a one-sided matrix--vector algorithm cannot learn much information about the rows of a matrix unless it expends \(\Omega(n)\) matvecs.
\Cref{thm:intro-one-sided-lower-bound} follows because, for an orthogonal linear system $\mQ\vx = \vb$, computing the solution \(\mQ^{-1}\vb = \mQ^\top\vb\) requires learning a linear combination of \mQ's rows.
See \cref{sec:one-sided} for details.

\subsection{Extension: Stronger lower bounds for solving SPD systems}

The focus of this article is on solving general, not necessarily SPD, linear systems.
Indeed, the prior work of Braverman \etal \cite{BHSW20} has already shown that \(\Omega(\sqrt\kappa/\polylog(\kappa))\) matvecs are needed for a randomized algorithm to solve a SPD linear system.
However, our analysis is able to provide tighter lower bounds for the SPD case as well, removing unnecessary polylogarithmic factors in the condition number $\kappa$ and obtain the correct logarithmic dependence on the inverse-accuracy $\nicefrac1\eps$:
\begin{theorem}[SPD Linear systems: Lower bound against two-sided algorithms]
    \label{thm:intro-spd-lower-bound}
    Fix \(\eta>0\).
    There does not exist any randomized algorithm that takes inputs \(\kappa\), \(\eps\), and \vb, which computes fewer than \(\frac{1-\eta}{4} \sqrt\kappa\log(\nicefrac1\eps)\) two-sided matrix-vector products with \mA, and which returns a vector \(\tilde\vx\) such that
    \[
        \norm{\mA\tilde\vx-\vb}_2 \leq \eps\norm{\vb}_2
        \qquad
        \text{with prob. } \geq \frac56
    \]
    for all numbers \(\eps > 0\) and \(\kappa \geq 1\) and SPD matrices \mA of condition number \(\cond(\mA) \leq \kappa\).
\end{theorem}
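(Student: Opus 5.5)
We will adapt the two-sided argument outlined in the ``Our Techniques'' paragraph. The change is that the split interval $[-\kappa,-1]\cup[1,\kappa]$ is replaced by the single interval $\cS \defeq [1,\kappa]$, which is where the spectrum of an SPD matrix lives. The plan has three ingredients.

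\emph{Ingredient 1: polynomial inapproximability on $[1,\kappa]$.} We show that no polynomial $p$ of degree $t < \frac{1-\eta'}{2}\sqrt\kappa\log(\nicefrac1\eps)$ satisfies $\max_{x\in[1,\kappa]} |x\,p(x) - 1| \le \eps$, or equivalently approximates $1/x$ to relative accuracy $\eps$ there. Since $q(x) = 1 - x\,p(x)$ is a polynomial of degree $t+1$ with $q(0)=1$, the classical Chebyshev extremal property gives
\[
\max_{[1,\kappa]}|q| \;\ge\; \frac{1}{T_{t+1}\bigl(\tfrac{\kappa+1}{\kappa-1}\bigr)} \;\approx\; 2\Bigl(\tfrac{\sqrt\kappa-1}{\sqrt\kappa+1}\Bigr)^{t+1} \;\approx\; 2e^{-2(t+1)/\sqrt\kappa}.
\]
This quantity exceeds $\eps$ unless $t \gtrsim \tfrac12\sqrt\kappa\log(\nicefrac1\eps)$, which yields the explicit constant. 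For the split interval, the substitution $x\mapsto x^2$ instead produces $\kappa$ in place of $\sqrt\kappa$, consistent with \cref{thm:intro-two-sided-lower-bound}.

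\emph{Ingredient 2: from inapproximability to a matvec lower bound for trace estimation.} Following Chewi \etal, we use LP/minimax duality to convert the nonexistence of a good polynomial into two spectral measures $\mu_0,\mu_1$ on $[1,\kappa]$. These measures share their first $\approx t$ moments, yet $\int x^{-1}\,d\mu_0$ and $\int x^{-1}\,d\mu_1$ differ by a margin tied to $\eps$. Random matrices $\mQ\diag(\lambda)\mQ^\top$, with $\mQ$ Haar and eigenvalues drawn according to each measure, are then hard to tell apart with fewer than $\approx t/2$ two-sided matvecs; the factor $2$ arises because, for symmetric $\mA$, each adaptive query reveals Krylov information of degree at most one higher. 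Yet the two ensembles have different values of $\tr(\mA^{-1})/n$. We reuse the indistinguishability argument developed for \cref{thm:intro-two-sided-lower-bound} as a black box, with $\cS=[1,\kappa]$ substituted.

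\emph{Ingredient 3: reduction from solving to trace estimation.} Take $\vb$ uniform on the sphere and independent of $\mA$. We have $\E[\vb^\top\mA^{-1}\vb] = \tr(\mA^{-1})/n$, and this estimator concentrates under Haar rotation. A solver output satisfying $\norm{\mA\tilde\vx-\vb}_2\le\eps\norm{\vb}_2$ gives $|\vb^\top\tilde\vx - \vb^\top\mA^{-1}\vb| \le \eps\norm{\mA^{-1}}_2 \le \eps$. Hence $\vb^\top\tilde\vx$, computed with one extra inner product and no additional matvecs, distinguishes the two ensembles with probability $\ge 5/6$. This is a contradiction once the two trace values are separated by more than about $2\eps$ plus the concentration error. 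Choosing $n$ large and absorbing slack into $\eta$ gives the $\frac{1-\eta}{4}\sqrt\kappa\log(\nicefrac1\eps)$ bound.

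The main obstacle will be making the separation in Ingredient 2 quantitatively compatible with Ingredient 3. The dual measures certify a gap in the $1/x$ integral of size comparable to the approximation error, but that gap must be at least a constant multiple of $\eps$ after normalization by $\norm{\mA^{-1}}_2=1$. Losing constants here is exactly what degrades $\frac12$ to $\frac14$. We will handle this by running the inapproximability step at accuracy $c\eps$ for a suitable constant $c$, which costs only an additive $O(\sqrt\kappa)$ term and is absorbed into $\eta$ for small $\eps$.
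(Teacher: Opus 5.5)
Your architecture matches the paper's. You prove inapproximability of $1/x$ on $[1,\kappa]$ (the paper cites Kraus et al., \cref{implem:non-split-inapprox}, where you use the Chebyshev bound for residual polynomials). You feed this into the indistinguishability machinery behind \cref{thm:black-box-matvec-lower-bound-generic}, and finish with a reduction from solving to trace estimation.

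The genuine difference is Ingredient~3. The paper's proof just reruns \cref{cor:lin-sys-lb}: it ties accuracy to dimension, $\eps = 1/n$, and goes through \cref{prop:trace-upper-bound} with $\norm{\vb}_2=\sqrt n$. \Cref{thm:black-box-matvec-lower-bound} then needs $n\gtrsim t^6/\alpha^2$, i.e.\ $\alpha \gtrsim t^3 n^{-1/2} = t^3\eps^{1/2}$. You instead take $\norm{\vb}_2=1$, use $\norm{\mA^{-1}}_2\le 1$, and choose $n\gg\eps^{-2}$ independently, so the normalized trace gap only has to exceed about $2\eps$.

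This matters for the constant. $\alpha$ is at most of order $\e^{-2(t+1)/\sqrt\kappa}$, and the machinery protects only $\lfloor t/2\rfloor$ matvecs. The requirement $\alpha\gtrsim t^3\eps^{1/2}$ therefore permits only $t\approx\frac14\sqrt\kappa\log(\nicefrac1\eps)$, i.e.\ about $\frac18\sqrt\kappa\log(\nicefrac1\eps)$ matvecs. Your requirement $\alpha\gtrsim\mathrm{poly}(\kappa,t)\,\eps$ allows $t\approx\frac12\sqrt\kappa\log(\nicefrac1\eps)$ and hence the stated $\frac14$. The paper's reduction could be repaired by using $\norm{\mA^{-1}}_{\mathrm F}\ge\sqrt n/\kappa$ on the hard instances to allow $\eps=1/(\kappa\sqrt n)$, which is your normalization in disguise.

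Your quantitative bookkeeping is wrong in places, though fixably.

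\emph{The size of the gap.} The machinery does not deliver a gap comparable to the approximation error. To make the bilinear power sequences close, the two dual weight vectors are averaged toward each other, keeping only a $\mathrm c\,t^{-3}$ fraction of the gap (\cref{implem:poly-inapprox-nonzero}, as instantiated in the proof of \cref{thm:black-box-matvec-lower-bound-generic}). So \cref{implem:krylov-diag-dtv} yields only $\frac1n|\tr(\mA^{-1})-\tr(\tilde\mA^{-1})|\gtrsim \alpha/t^3-2(t+2)/n$.

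\emph{Weighted versus unweighted error.} Your Chebyshev bound controls $\max_{[1,\kappa]}|1-xp(x)|$, while the machinery needs unweighted $(t,\alpha)$-inapproximability. The conversion $|p(x)-1/x|\ge|1-xp(x)|/\kappa$ costs another factor $\kappa$. The correct requirement is therefore $\rho^{t+1}\gtrsim\kappa t^3\eps$ with $\rho=\frac{\sqrt\kappa-1}{\sqrt\kappa+1}$, not accuracy $c\eps$ for a constant $c$. The additive loss is $O(\sqrt\kappa\log(\kappa t))$ rather than $O(\sqrt\kappa)$. This is still negligible, but only because you may fix $\kappa$, send $\eps\to0$, and then take $n\gtrsim t^6+\eps^{-2}$; say this explicitly.

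\emph{Direction of the approximation.} $\log(1/\rho)>2/\sqrt\kappa$, so your ``$\approx$'' runs in the unfavorable direction. Take $\kappa$ large depending on $\eta$; this is allowed, since a single hard pair $(\kappa,\eps)$ refutes the algorithm.

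\emph{The factor of two.} Your account double-counts it. The drop from $\frac12$ to $\frac14$ is exactly the $\lfloor t/2\rfloor$ of \cref{cor:diag-dtv-to-matvec-dtv}. It arises because depth-$\ell$ Krylov vectors are determined up to rotation by $\vg_i^\top\mA^{j+j'}\vg_{i'}$ with $j+j'\le 2\ell$ (\cref{implem:from-rbk-to-bilpower}). ``One degree per query'' alone would protect $t$ matvecs, and constant losses in the separation play no role.
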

This result sharpens the log factors over the prior work and, since conjugate gradient converges in \(\cO(\sqrt\kappa\log(\nicefrac1\eps))\) matvecs, shows that the dependence on \(\eps\) in the upper bound analysis is tight.
The proof of \cref{thm:intro-spd-lower-bound} uses the same basic argument as the proof of the lower bound in the general case \cref{thm:intro-two-sided-lower-bound}.
See \cref{sec:lin-sys-lower-bound-psd} for details.

\subsection{Implication: Fine-grained lower bounds and complexity separation}

Our main results describe the complexity of solving linear systems by parameterizing the input via the matrix dimension $n$ and its condition number $\kappa = \cond(\mA)$. Yet, it is widely accepted that Krylov methods often dramatically outperform the worst-case complexity bounds when the singular values of $\mA$ exhibit additional structure, such as clustering or outliers \citep{greenbaum_iterative_1997}.

An important and well-studied example of this arises when $\mA$ has a number of large outlying singular values that blow up its condition number, which arises in smoothed analysis \citep{spielman2009smoothed}, optimization \citep{boyd2004convex}, and machine learning \citep{zhang2013divide}. 
In this setting, conjugate gradient on the normal equations is known to solve such linear systems in an improved \(\cO(k + \kappa_k\log(1/\eps))\) matvecs \citep{axelsson1986rate},
where $\kappa_k\coloneqq\sigma_{k+1}(\mA)/\sigma_{\mathrm{min}}(\mA)$ denotes the condition number without the top \(k\) singular values.
This more fine-grained notation of complexity, parameterized by both \(k\) and \(\kappa_k\), circumvents our basic two-sided lower bound (\cref{thm:intro-two-sided-lower-bound}).
However, our results can be extended to this different complexity measure, again showing that conjugate gradient on the normal equations is optimal:

\begin{theorem}[Linear systems: Fine-grained lower bound]\label{c:fine-grained}
    There does not exist a randomized algorithm that takes inputs $k$, $\kappa_k$, $\varepsilon$, and $\vb$, which computes fewer than $\Omega(k+ \kappa_k\log(1/\epsilon))$ two-sided matrix-vector products with $\mA$ and returns $\tilde\vx$ such~that
    \begin{align*}
        \|\mA\tilde\vx-\vb\|_2\leq \varepsilon\|\vb\|_2\quad\text{with prob.\ }\geq \frac 56
    \end{align*}
    for all matrices $\mA$ such that $\sigma_{k+1}(\mA) / \sigma_{\mathrm{min}}(\mA) \leq \kappa_k$.
\end{theorem}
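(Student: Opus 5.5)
The lower bound $\Omega(k+\kappa_k\log(1/\eps))$ is the maximum of two separate bounds, and a maximum of two quantities is at least half their sum. So I will prove each term separately, each against a hard instance satisfying $\sigma_{k+1}(\mA)/\sigma_{\min}(\mA)\le\kappa_k$, and combine them at the end.

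\emph{The $\kappa_k\log(1/\eps)$ term.} Since $\cond(\mA)\le\kappa_k$ implies $\sigma_{k+1}(\mA)/\sigma_{\min}(\mA)\le\kappa_k$, every hard instance from \cref{thm:intro-two-sided-lower-bound} with $\kappa=\kappa_k$ is admissible here. An algorithm satisfying the fine-grained guarantee therefore solves all systems of condition number at most $\kappa_k$. By \cref{thm:intro-two-sided-lower-bound} it must use at least $\frac{1-\eta}{4}\kappa_k\log(1/\eps)$ matvecs on some such instance.

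\emph{The $k$ term.} Here I take $\kappa_k=1$ and constant $\eps$ (say $\eps=1/2$), and let the $k$ large singular values carry the difficulty. I would embed a $k\times k$ hard instance into $n\times n$: set $\mA=\diag(\mM,\mI_{n-k})$, or a randomly rotated version $\mU\diag(\mM,\mI)\mU^\top$ to hide the block, with $\vb$ supported on the block. The block $\mM$ should be such that solving $\mM\vy=\vb'$ to constant relative accuracy needs $\Omega(k)$ two-sided matvecs even though $\mM$ has at most $k$ nontrivial singular values above $1$.

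One natural choice is to run the polynomial/trace-estimation argument of \cref{sec:two-sided} on a $k\times k$ block with an unrestricted, arbitrarily large condition number. For $t<k/2$, no degree-$t$ polynomial can approximate $1/x$ on $k$ well-separated eigenvalues spread over a huge range. The trace-estimation reduction then gives an $\Omega(k)$ bound, since the $\kappa\log(1/\eps)$ bound is capped only by the dimension $k$. Concretely, I would take $\kappa=\kappa'$ huge in \cref{thm:intro-two-sided-lower-bound} applied in dimension $k$, so that the bound saturates at order $k$. For the matrix $\diag(\mM,\mI)$, all but $k$ singular values equal $1$ (after scaling $\mM$ so that $\sigma_{\min}(\mM)\ge1$), so $\sigma_{k+1}/\sigma_{\min}=1\le\kappa_k$. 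Any matvec with the $n\times n$ matrix can be simulated by one matvec with $\mM$ plus free identity work, so the lower bound for $\mM$ transfers.

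\emph{Main obstacle.} The delicate step is extracting a clean $\Omega(\text{dimension})$ lower bound from the machinery of \cref{sec:two-sided}, because \cref{thm:intro-two-sided-lower-bound} is phrased with $\kappa$ and $\eps$ and not with the dimension. I would need to check that the hard distribution there lives in dimension $O(\kappa\log(1/\eps))$. Then, with $\kappa\log(1/\eps)\asymp k$ realized inside a $k$-dimensional block scaled to have minimum singular value $1$, the block contributes only $k$ outlying singular values. If that dimension accounting fails, the fallback is the hidden-Haar argument (\cref{thm:hidden-haar}) in dimension $k$ with the transpose permitted: a random $k\times k$ matrix with widely spread singular values, as in a Krylov-exhaustion argument.
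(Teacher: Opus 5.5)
Your $\kappa_k\log(1/\eps)$ term is correct, and more direct than the paper's route. The containment $\cB(n,\kappa_k)\subseteq\cF(n,k,\kappa_k)$ gives it immediately, with no need for \cref{impthm:fine-grained-reduction}. Your embedding $\mA=\diag(\mM,\mI_{n-k})$ with $\sigma_{\min}(\mM)\ge 1$ is also a legitimate way to transfer a $k\times k$ lower bound. The gap is the $k\times k$ lower bound itself, and the obstacle you flagged is fatal: the machinery of \cref{sec:two-sided} never saturates the dimension. \Cref{thm:black-box-matvec-lower-bound-generic} needs $n\gtrsim t^4\max\{t^2,1/\alpha\}\ge t^6$. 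This requirement is forced by the bilinear-power TV bound $b^{3/2}t^{3/2}/n^{1/2}$ with block size $b\asymp t$. Correspondingly, the hard instances behind \cref{thm:intro-two-sided-lower-bound} (via \cref{cor:lin-sys-lb}) have dimension $n\gtrsim\kappa^{(6+\delta)/\eta}$ and accuracy $\eps\approx 1/n$. Inside a $k$-dimensional block, this route therefore rules out only about $k^{1/6}$ matvecs up to logarithms, however large you make the condition number. The inapproximability of $1/x$ on $k$ spread-out nodes is true, but it is not the bottleneck. Your fallback fails too. \Cref{thm:hidden-haar} is intrinsically one-sided, and with transpose access a Haar-orthogonal system is solved exactly by the single product $\mQ^\top\vb$. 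A two-sided hidden-matrix theorem for matrices with spread singular values would be a new result, not something available to you.

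The missing ingredient is a two-sided lower bound that is linear in the dimension, and the paper imports this rather than deriving it. \Cref{impthm:chewi-n-lower-bound} (Chewi \etal) shows that estimating $\tr(\mA^{-1})$ to relative error $\frac12$ over SPD $\mA\in\bbR^{n\times n}$ needs $\Omega(n)$ matvecs. \Cref{lem:unconditioned-n-lower-bound} converts this into $T_{\cM(n)}(\frac{1}{6n})=\Omega(n)$ using $O(1)$ Rademacher Girard--Hutchinson probes, geometric-median boosting of the success probability, and the comparison $\norm{\mA^{-1}}_{\mathrm F}\le\tr(\mA^{-1})$. This bound holds at accuracy $\frac{1}{6k}$ for a $k\times k$ block, not at the constant $\eps$ you proposed. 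Deducing a constant-accuracy bound from it via \cref{prop:iterative-refinement} would lose a logarithmic factor. With this bound in hand, combine it with your $\kappa_k$ argument through either your block embedding or the reduction $T_{\cF(n,k,\kappa_k)}(\eps)\gtrsim T_{\cM(k)}(\eps)+T_{\cB(n-k,\kappa_k)}(\eps)$ to complete the proof.
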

This corollary follows immediately by combining our results with a reduction from
\ifcolt
\citet[Lem.~27 \& Rem.~8]{derezinski2025fine}
\else
Dereziński \etal \cite[Lem.~27 \& Rem.~8]{derezinski2025fine}
\fi
; see \cref{app:fine-grained} for details.

Notably, this result implies a \emph{complexity separation} between matrix--vector algorithms and general algorithms that can access the entries of $\mA$ directly.
If \mA is a dense matrix and we implement matvecs using the standard algorithm, then \cref{c:fine-grained} shows that \(\Omega(n^2k + n^2\kappa_k\log(1/\eps))\) time is needed to solve a linear system.
However, recent work \citep{derezinski2024solving,derezinski_faster_2025,DerezinskiSidfordSODA26} shows that if \mA is stored as a dense array, we can solve $\mA\vx=\vb$ in $\orderish(k^3 + n^2\kappa_k\log(1/\epsilon))$ time.
So, for large $n$ and $k$, having access to the matrix offers a direct improvement over matrix--vector access, demonstrating a complexity separation between matrix--vector algorithms and general algorithms.
This is the first result of this kind for general, dense~linear~systems.

\subsection{Related work}

The most relevant background material, and its relationship to our work, were discussed in \cref{sec:background,sec:results}.
We briefly mention a few more areas of related work.

\paragraph{Upper and lower bounds for Krylov methods.}
Understanding the behavior of Krylov linear solvers has been a decades-long research challenge in numerical analysis.
The literature is extensive; see
\ifcolt
\cite{greenbaum_iterative_1997}
\else 
Greenbaum's book \citep{greenbaum_iterative_1997}
\fi 
for a summary of the theory.
Lower bounds for the rate of convergence of Krylov methods can be derived from potential theory \citep{driscoll_potential_1998}.
These lower bounds apply only to specific Krylov methods, not general randomized matrix--vector algorithms.

\paragraph{Transpose-free linear algebra.} Our work fits into the growing research topic of \emph{transpose-free linear algebra} \citep{boulle_operator_2024,halikias_structured_2025}, which aims to understand the relative power of one- and two-sided algorithms for various linear algebra problems.
The results most similar to ours appear in the thesis of 
\ifcolt 
\citet[Sec.~5.2]{halikias_structured_2025},
\else 
Halikias \cite[Sec.~5.2]{halikias_structured_2025},
\fi 
who analyzes one-sided matrix--vector algorithms for solving linear systems and least-squares problems.
Her results show that, even in the favorable situation when $\cond(\mA) = 1$ and we perform $n-1$ matvecs $\mA\vz_1,\ldots,\mA\vz_{n-1}$,\footnote{Halikias also includes as an assumption that the exact solution $\mA^{-1}\vb$ is not in the span of the queries $\{\vz_1,\ldots,\vz_{n-1}\}$.} the data collected by the algorithm is not sufficient to characterize the solution to a system of linear equations:
There exists another well-conditioned matrix $\tilde\mA$ that is undistinguished by the queries
\begin{equation*}
    \mA\vz_i = \tilde\mA\vz_i \quad \text{for } i=1,\ldots,n-1
\end{equation*}
but for which the solutions of the linear systems are different
\begin{equation*}
    \norm{\tilde\mA^{-1}\vb - \mA^{-1}\vb}_2 > \delta \quad \text{for some } \delta > 0 \text{ \warn{depending on $\mA,\vb,\vz_1,\ldots,\vz_{n-1}$}}.
\end{equation*}
Halikias's result establishes information-theoretic limitations on the power of one-sided matrix--vector algorithms.
However, the level of accuracy $\delta$ and the hard-to-distinguish instance $\tilde\mA$ \warn{depend on the random choices made by the algorithm}.
To establish lower bounds on matrix--vector complexity, we must exhibit a problem instance $(\mA,\vb)$ and a fixed level of accuracy $\delta > 0$, independent of the random choices made by the algorithm, for which any algorithm fails.
As such, while our analysis and Halikias' share similar aims, our results are formally incomparable.

\paragraph{Quantum linear systems.}
Our $\Omega(\kappa)$ lower bound for two-sided matrix--vector algorithms mirrors the $\Omega(\kappa)$ lower bound for the quantum linear systems problem \citep[p.~150502-3]{harrow_quantum_2009}.
In that context, solving implicit sparse systems in $\order(\kappa^{1-\eta} \operatorname{polylog}(n))$ time would imply $\mathsf{BQP} = \mathsf{PSPACE}$ and allow for sub-linear simulation of arbitrary quantum circuits.
While the $\Omega(\kappa)$ scaling in both settings is striking, it remains unclear how to derive one result from the other or reduce them both to some common principle.

\subsection{Notation}

All logarithms are natural.
The vector $\ell_2$ norm and the operator $\ell_2\to\ell_2$ norm are denoted $\norm{\cdot}_2$, and the Frobenius norm is $\norm{\cdot}_{\mathrm{F}}$.
The orthogonal matrices come equipped with a unique rotation-invariant probability distribution, and a draw from this distribution is called a \emph{Haar-random orthogonal matrix}.
The relation \(a \lesssim b\) denotes the relationship  \(a \leq \rC b\) for some universal constant \(\rC > 0\).
When this notation is used in the hypothesis of a result, we take this to mean that $a \le \mathrm{C} b$ for a \warn{sufficiently small} universal constant $\mathrm{C} > 0$.
We write $a \asymp b$ when $a\lesssim b \lesssim a$.
For any function \(f : \bbR \to \bbR\) and symmetric matrix \(\mA\) with eigendecomposition \(\mA = \mU\diag(\lambda_1,\ldots,\lambda_n)\mU^\top\), the matrix function is \(f(\mA)\defeq\mU \diag(f(\lambda_1),\ldots,f(\lambda_n)) \mU^\top\).
 \section{Lower bounds for two-sided algorithms}
\label{sec:two-sided}

In this section, we develop lower bounds for solving linear systems using two-sided matrix--vector algorithms.
Our analysis adopts and extends a relatively new approach to proving such bounds that was developed in recent work of
\ifcolt
\cite{chewi2024query},
\else Chewi, de Dios Pont, Li, Lu, and Narayanan \citep{chewi2024query},
\fi 
who used this technique to prove that $\Omega(\sqrt{\kappa} \log n)$ matvecs are necessary to estimate the trace-inverse \(\tr(\mA^{-1})\) of an SPD matrix \mA to relative error $\norm{\mA^{-1}}_{\mathrm{F}}$.
See also the follow-up work \citep{bakshi_krylov_2023}.

In this section, we present a generalized and refined version of their results that is able to prove lower bounds on the matrix--vector complexity of estimating spectral sums \(\tr(f(\mA))\) or matrix-function--vector products \(f(\mA)\vb\) for arbitrary functions \(f\) and matrices \mA with arbitrary eigenvalues.
The core theorem shows that if a function \(f\) cannot be well approximated by a degree \(t\) polynomial over a domain \(\cS\), then any method that computes \(\tr(f(\mA))\) or \(f(\mA)\vb\) for matrices \mA with eigenvalues belonging to \cS must use \(\Omega(t)\) matvecs.
The proof architecture is adapted from 
\ifcolt
\cite{chewi2024query},
\else 
Chewi et al.\ \cite{chewi2024query},
\fi
but the details have been entirely overhauled to support a general function $f$ and a general set of eigenvalues $\cS$, as well as to obtain large, explicit constants.

\subsection{Matrix--vector complexity and polynomial approximability}

We are interested in computing the solution $\mA^{-1}\vb$ to a linear system of equations.
More generally, we might be interested in computing the matrix-function--vector product $f(\mA) \vb$, for which the linear system problem is the special case $f(x) = \nicefrac{1}{x}$.
It is straightforward to see that if $f$ can be approximated to high accuracy by a degree-$t$ polynomial \warn{on the spectrum of \mA}, then we can compute $f(\mA)\vb$ to high accuracy using a matrix--vector algorithm:
\begin{prop}[Matrix-function-vector product: Upper bound] \label{prop:approx-implies-upper-bound}
    Let $\mA$ be a normal matrix with eigenvalues belonging to a set $\cS\subseteq \bbC$, and suppose that there exists a degree-$t$ polynomial $p(x) = c_0 + c_1 x + \cdots + c_t x^t$ such that
\begin{equation*}
        \max_{\lambda \in \cS} |f(\lambda) - p(\lambda)| \le \eta \max_{\lambda\in\cS} |f(\lambda)|.
    \end{equation*}
Then, the vector \(\tilde\vx \defeq p(\mA)\vb = c_0 \vb + c_1 \mA\vb + \cdots + c_t \mA^t\vb\) can be computed using \(t\) matrix--vector products with \mA, and it approximates \(f(\mA)\vb\) well:
\begin{equation*}
        \norm{\tilde\vx - f(\mA)\vb}_2 \le \norm{p(\mA)-f(\mA)}_2 \norm{\vb}_2 \le \eta \cdot \norm{f(\mA)}_2 \norm{\vb}_2.
    \end{equation*}
\end{prop}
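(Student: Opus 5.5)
The plan is to treat the computation count and the error bound separately, using only the spectral theorem for normal matrices.

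For the cost, I would generate the Krylov vectors recursively: $\vb$ is given, and $\mA^{j}\vb = \mA(\mA^{j-1}\vb)$ for $j=1,\ldots,t$. This uses exactly $t$ matrix--vector products with \mA. Forming $\tilde\vx = \sum_{j=0}^t c_j \mA^j\vb$ is then a linear combination of these vectors and needs no further access to \mA. Note that only products with \mA are used, never $\mA^\top$, so this is even a one-sided algorithm.

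For the error, I would first use linearity to write $\tilde\vx - f(\mA)\vb = (p(\mA)-f(\mA))\vb$. Submultiplicativity of the operator norm then gives the first inequality, $\norm{\tilde\vx - f(\mA)\vb}_2 \le \norm{p(\mA)-f(\mA)}_2\norm{\vb}_2$.

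For the second inequality I would use normality. Since \mA is normal, it is unitarily diagonalizable, $\mA = \mU\mat{\Lambda}\mU^{*}$ with $\mU$ unitary and $\mat{\Lambda} = \diag(\lambda_1,\ldots,\lambda_n)$, and every $\lambda_i \in \cS$.
\begin{itemize}
\item Here $f(\mA)$ is understood as $\mU f(\mat{\Lambda})\mU^{*}$, which extends the symmetric-matrix definition in the notation section.
\item Since $\mA^j = \mU\mat{\Lambda}^j\mU^{*}$, we also have $p(\mA) = \mU p(\mat{\Lambda})\mU^{*}$.
\item Hence $p(\mA)-f(\mA) = \mU\diag\bigl(p(\lambda_i)-f(\lambda_i)\bigr)\mU^{*}$.
\end{itemize}
Because unitary conjugation preserves the spectral norm, this gives
\[
\norm{p(\mA)-f(\mA)}_2 = \max_i |p(\lambda_i)-f(\lambda_i)| \le \max_{\lambda\in\cS}|f(\lambda)-p(\lambda)| \le \eta\max_{\lambda\in\cS}|f(\lambda)|.
\]

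The only delicate point is the final comparison with $\norm{f(\mA)}_2 = \max_i |f(\lambda_i)|$. This is at most $\max_{\lambda\in\cS}|f(\lambda)|$, which is the wrong direction to close the argument. So the stated inequality $\eta\max_{\cS}|f| \le \eta\norm{f(\mA)}_2$ requires that $\max_{\lambda\in\cS}|f(\lambda)|$ be attained on the spectrum of \mA.

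I expect the intended reading is $\cS$ equal to the spectrum, or at least that the maximum of $|f|$ over $\cS$ is achieved at an eigenvalue of \mA. In the application $f(x)=1/x$ on $[-\kappa,-1]\cup[1,\kappa]$, this holds whenever $\pm1$ is an eigenvalue. I would state this assumption explicitly. Otherwise I would present the bound with $\max_{\cS}|f|$ on the right-hand side, since that form is what all subsequent uses need.
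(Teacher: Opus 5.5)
Your argument is the standard one the paper has in mind; the paper gives no proof and simply calls the bound standard. You use $t$ recursive products $\mA(\mA^{j-1}\vb)$ for the cost, linearity and submultiplicativity for the first inequality, and unitary diagonalization to get $\norm{p(\mA)-f(\mA)}_2=\max_i|p(\lambda_i)-f(\lambda_i)|\le\eta\max_{\lambda\in\cS}|f(\lambda)|$.

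Your caveat about the last step is correct. It points to a flaw in the statement, not in your argument: the final inequality is false in general. Take $\cS=[1,2]$, $f(x)=1/x$, and $p\equiv\tfrac34$. Then $\max_{\cS}|f-p|=\tfrac14=\tfrac14\max_{\cS}|f|$, so the hypothesis holds with $\eta=\tfrac14$. For $\mA=2\mI$ and any $\vb\neq\vzero$ one gets $\norm{\tilde\vx-f(\mA)\vb}_2=\norm{p(\mA)-f(\mA)}_2\norm{\vb}_2=\tfrac14\norm{\vb}_2$, whereas $\eta\norm{f(\mA)}_2\norm{\vb}_2=\tfrac18\norm{\vb}_2$. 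So what is actually provable is the bound with $\eta\max_{\lambda\in\cS}|f(\lambda)|$ on the right. It coincides with the stated bound under your extra assumption that $\max_{\cS}|f|$ is attained on the spectrum of $\mA$, for example $f(x)=1/x$ on $[-\kappa,-1]\cup[1,\kappa]$ with $\pm1$ an eigenvalue.

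One small correction to your closing sentence. The corrected form is not by itself what the paper's informal remark combining this proposition with \cref{prop:trace-upper-bound} would need. With $\eta=\tfrac1n$ it gives error $\tfrac1n\max_{\cS}|f|\,\norm{\vb}_2$. That need not be below the $\tfrac1n\norm{f(\mA)}_{\mathrm F}\norm{\vb}_2$ required there, because $\max_{\cS}|f|$ can exceed $\norm{f(\mA)}_{\mathrm F}$. For instance, take $\cS=[1,\kappa]$ and $\mA=\kappa\mI\in\bbR^{n\times n}$ with $\kappa>\sqrt n$. No formal proof in the paper invokes this proposition, so nothing downstream breaks.
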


This type of bound is standard.
It shows that we can accurately and efficiently approximate \(f(\mA)\vb\) for any function \(f\) that is close to a polynomial.
In fact, nearly all known matrix--vector algorithms for efficiently computing $f(\mA)\vb$ work directly or indirectly by forming a polynomial approximation to $f$.
Therefore, it is natural to conjecture something of a converse to \cref{prop:approx-implies-upper-bound}: \emph{If a function cannot be approximated by polynomials of small degree, then it is hard for a matrix--vector algorithm to compute $f(\mA)\vb$.}

\subsection {Spectral sums are easier than matrix-function-times-vector}

Rather than proving hardness results directly for computing $f(\mA)\vb$, we instead prove hardness for computing the \emph{spectral sum} \(\tr(f(\mA))\).
The following result shows that if we can approximate \(f(\mA)\vb\) efficiently, then we can also estimate spectral sums efficiently.
As such, lower bounds for the spectral sum problem immediately yield lower bounds for the $f(\mA)\vb$ problem.

\begin{prop}[From matrix-function-times-vector to spectral sum] \label{prop:trace-upper-bound}
    Consider any oracle that takes \mA and \vb as inputs and returns a vector \(\tilde\vx\) such that
\begin{equation*}
        \norm{\tilde\vx - f(\mA)\vb}_2 \le \frac{1}{n} \norm{f(\mA)}_{\mathrm F} \norm{\vb}_2
        \quad
        \text{with prob.\ } \geq \frac56.
    \end{equation*}
Then there exists an algorithm which invokes the oracle once and outputs an estimate $\tilde{\tr}$ such that
\begin{equation*}
        \bigl| \tilde{\tr} - \tr(f(\mA)) \bigr| \le  5\norm{f(\mA)}_{\mathrm F} \quad \text{with prob.\ }\ge \frac{2}{3}.
    \end{equation*}
\end{prop}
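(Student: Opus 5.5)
The natural approach is a single-sample Hutchinson estimator. Draw a random vector $\vg$ with i.i.d.\ entries (standard Gaussian or Rademacher), call the oracle once on $(\mA,\vg)$ to obtain $\tilde\vx$, and output $\tilde{\tr} \defeq \vg^\top \tilde\vx$. We then split the error as
\begin{equation*}
    \bigl|\tilde{\tr} - \tr(f(\mA))\bigr| \le \bigl|\vg^\top f(\mA)\vg - \tr(f(\mA))\bigr| + \bigl|\vg^\top(\tilde\vx - f(\mA)\vg)\bigr|,
\end{equation*}
which we call the Hutchinson error and the oracle error respectively. Each is bounded on a high-probability event, and a union bound finishes the argument.

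\textbf{Hutchinson error.} Write $\mM \defeq f(\mA)$. We have $\E[\vg^\top\mM\vg] = \tr(\mM)$. The variance depends only on the symmetric part $\tfrac12(\mM+\mM^\top)$, whose Frobenius norm is at most $\norm{\mM}_{\mathrm F}$. For Gaussian $\vg$ the variance is $2\norm{\tfrac12(\mM+\mM^\top)}_{\mathrm F}^2 \le 2\norm{\mM}_{\mathrm F}^2$; Rademacher vectors give an even smaller variance. Chebyshev's inequality then gives
\begin{equation*}
    \bigl|\vg^\top\mM\vg - \tr(\mM)\bigr| \le c\norm{\mM}_{\mathrm F} \quad \text{except with probability at most } 2/c^2.
\end{equation*}
If $f(\mA)$ may be complex (e.g.\ $\mA$ normal with complex spectrum), we apply this to the real and imaginary parts separately, at the cost of a factor in the constants. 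In the intended application $\mA$ is symmetric and everything is real.

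\textbf{Oracle error.} Cauchy--Schwarz gives $|\vg^\top(\tilde\vx - \mM\vg)| \le \norm{\vg}_2 \norm{\tilde\vx - \mM\vg}_2$. On the oracle's success event (probability at least $5/6$) this is at most $\frac1n\norm{\vg}_2^2\norm{\mM}_{\mathrm F}$. For Rademacher $\vg$, $\norm{\vg}_2^2 = n$ exactly, so the oracle error is at most $\norm{\mM}_{\mathrm F}$ deterministically on that event. This is the reason to use Rademacher vectors; with Gaussians one would apply Markov to $\norm{\vg}_2^2$, which has mean $n$, and lose some probability.

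\textbf{Combining.} With Rademacher $\vg$, the variance is $2\sum_{i\ne j}(\text{sym part})_{ij}^2 \le 2\norm{\mM}_{\mathrm F}^2$. Taking $c=4$ makes the Hutchinson failure probability at most $2/16 = 1/8$. The total failure probability is then at most $1/6 + 1/8 < 1/3$, and on the complementary event the total error is at most $4\norm{\mM}_{\mathrm F} + \norm{\mM}_{\mathrm F} = 5\norm{\mM}_{\mathrm F}$, as claimed.

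The only delicate point is the union bound. The oracle's randomness may depend on $\vg$, but its $5/6$ guarantee holds for every fixed input, so conditioning on $\vg$ handles this.
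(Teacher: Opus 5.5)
Your proposal is correct and follows the paper's proof essentially verbatim: a single Girard--Hutchinson sample $\vb^\top\tilde\vx$ with a test vector of norm exactly $\sqrt{n}$, Chebyshev's inequality with the variance bound $2\norm{f(\mA)}_{\mathrm F}^2$, Cauchy--Schwarz for the oracle error, and a union bound. The only differences are cosmetic. The paper draws the test vector uniformly from the sphere of radius $\sqrt{n}$ rather than as a Rademacher vector, and it uses the Chebyshev threshold $\sqrt{12}\norm{f(\mA)}_{\mathrm F}$ to split the failure probability as $1/6+1/6$. Your choice $c=4$ instead gives $1/6+1/8$, and both routes yield total error at most $5\norm{f(\mA)}_{\mathrm F}$.
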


\noindent Stochastic trace estimation results of this form are standard; the proof is in \cref{sec:trace-upper-bound}.

\subsection{Lower bound for estimating a spectral sum}

With the necessary preparation in place, we can state a general lower bound for the spectral sum problem in terms of the polynomial (in)approximability of $f$.
We make the following definition:
\begin{definition}[Inapproximable function] \label{def:inapproximable}
    Let \(f:\cS\to\bbR\) be a function on a domain $\cS\subseteq \bbR$.
    The function $f$ is \emph{\((t,\alpha)\)-inapproximable} on \cS if all polynomials \(p\) of degree at most \(t\) satisfy \[\max_{\lambda \in \cS} |p(\lambda) - f(\lambda)| \ge \alpha \cdot \max_{\lambda \in \cS} |f(\lambda)|.\] \end{definition}

Our general result states that, for a sufficiently large symmetric matrix \mA with eigenvalues in the set \cS, at least \(t\) matvecs are needed to estimate spectral sums.
\begin{theorem}[Spectral sum: Lower bound, special case of \cref{thm:black-box-matvec-lower-bound-generic}]
    \label{thm:black-box-matvec-lower-bound}
    Fix a bounded set \(\cS \subseteq \bbR\).
    Let \(f:\cS\to\bbR\) be a function that is \((t,\alpha)\)-inapproximable on \cS.
    Suppose that
    \(
        n \gtrsim  t^6/\alpha^2.
    \)
    Then any matrix--vector algorithm that returns an estimate \(\tilde{\tr}\) such that
    \[
        |\tilde{\tr} - \tr(f(\mA))| \leq 5 \norm{f(\mA)}_{\mathrm F}
        \qquad\qquad
        \text{with prob. }\geq \frac23
    \]
    for all symmetric matrices \(\mA\in\bbR^{n \times n}\) with eigenvalues in \cS must use at least \(\lfloor t/2 \rfloor\) matvecs.
\end{theorem}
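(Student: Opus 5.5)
We will prove the bound through a two-point (Le Cam) argument adapted from Chewi et al. We construct two distributions on symmetric matrices with eigenvalues in $\cS$. Under them, the values of $\tr(f(\mA))$ differ by much more than $10\norm{f(\mA)}_{\mathrm F}$, yet any algorithm using fewer than $\lfloor t/2\rfloor$ matvecs sees transcripts whose laws are close in total variation.

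\emph{Construction of the two spectra.} Since $f$ is $(t,\alpha)$-inapproximable, LP duality (the Chebyshev alternation / discrete moment problem) yields two probability measures $\mu_0,\mu_1$ on $\cS$ with the following properties: they match their first $t$ moments, $\int \lambda^j\,d\mu_0=\int\lambda^j\,d\mu_1$ for $j\le t$, and they separate $f$, namely $\int f\,d\mu_0-\int f\,d\mu_1\ge c\,\alpha\max_{\cS}|f|$. Concretely, the dual of best uniform approximation gives a signed measure $\nu$ with $\norm{\nu}_{\mathrm{TV}}=1$, orthogonal to polynomials of degree $\le t$, with $\int f\,d\nu\ge\alpha\max|f|$. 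We split $\nu=\nu_+-\nu_-$ and normalize; if needed we make the two measures atomic with at most $t+2$ atoms. Next we take $\mA_b=\mQ\Lambda_b\mQ^\top$ with $\mQ$ Haar-random and $\Lambda_b$ a diagonal matrix with i.i.d.\ eigenvalues drawn from $\mu_b$, or alternatively quantized counts $n\mu_b$. Then $\tr f(\mA_b)\approx n\int f\,d\mu_b$, and the gap between the two hypotheses is of order $n\alpha\max|f|$. Meanwhile $\norm{f(\mA)}_{\mathrm F}\le\sqrt n\max|f|$, so the gap dominates $10\norm{f(\mA)}_{\mathrm F}$ once $n\gtrsim 1/\alpha^2$.

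\emph{Indistinguishability.} We first reduce to a fixed (deterministic, adaptive) algorithm via Yao's principle. Next, by rotation invariance we may assume the queries are orthonormal and adaptively chosen. After $k$ queries, the information revealed is the block Krylov-type data, which we parametrize by a block tridiagonal (block Lanczos) reduction: the transcript is a function of $\vv_i^\top\mA^j\vv_{i'}$ for $j\le 2k$ roughly, since each two-sided query multiplies the information degree by at most one and a $k$-step Lanczos process sees moments up to degree $2k$. This is where $\lfloor t/2\rfloor$ enters: with $k<t/2$ queries, only moments of degree $\le 2k\le t$ matter at the leading order. For a Haar-rotated matrix, these quantities are approximately $\int\lambda^j d\mu_b$ plus Gaussian fluctuations of size $n^{-1/2}$. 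The means agree by moment matching. The fluctuations also have matching covariance structure up to degree-$t$ moment terms, plus corrections controlled by $\mathrm{poly}(t)/\sqrt n$.

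\emph{Main obstacle.} The hard part is quantifying that the transcript laws are close in TV with an explicit $\mathrm{poly}(t)/\sqrt n$ error. This error rate gives the requirement $n\gtrsim t^6/\alpha^2$: the TV error must be below the $1/3$ slack, while $\alpha$ also enters through the separation. We would bound the TV distance stepwise over the $k$ rounds. At each round, we condition on past answers and compare the conditional law of the new answer $\mA\vv_i$ under the two models using a coupling. The coupling shares $\mQ$ and compares the Gaussian approximations of the projected Haar vectors. This produces a chi-square/KL bound per step of order $t^{c}/n$, which we then sum and combine via Pinsker. If the direct coupling proves too lossy, the first fallback is to replace $\mu_b$ with eigenvalues placed exactly at a quadrature-like set. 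In that case the Krylov space becomes literally identical in distribution up to degree $t$, and only the Haar approximation error must be controlled.
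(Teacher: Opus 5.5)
There is a genuine gap. Your overall architecture (two-point test, moment-matched spectra from LP duality, Haar rotation, reduction to Krylov-type data) is the paper's. The problem is the indistinguishability step, which is exactly the step responsible for the hypothesis $n\gtrsim t^6/\alpha^2$.

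\textbf{Where it fails.} You use the raw dual measures $\mu_0,\mu_1$, the Jordan parts of $\nu$, which are mutually singular and hence far apart. You then claim the transcript laws differ by only $\mathrm{poly}(t)/\sqrt n$, because the means agree and the covariances agree ``up to degree-$t$ terms.'' This is false.

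Take a single start vector and depth $\ell<t/2$. The statistics $\vg^\top\mA^j\vg=\sum_k\lambda_k^j\norm{\vg^{(k)}}_2^2$, $0\le j\le 2\ell$, are approximately Gaussian. Their means $n\int\lambda^j\,d\mu_b$ are matched. Their covariances $2n\int\lambda^{j+j'}\,d\mu_b$, however, involve moments up to degree $4\ell$, and these exceed $t$ once $\ell>t/4$.

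Those covariances cannot all match. Two measures on the same $t+2$ nodes that also matched the moment of degree $t+1$ would be equal.

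The covariance mismatch is of order $n$, the same order as the covariances themselves. So after centering and dividing by $\sqrt n$, the two laws converge to two fixed, distinct Gaussians, and the TV distance does not decay with $n$ at all. Per-step KL/Pinsker bookkeeping cannot repair this. Neither can your quadrature fallback: the law of the Krylov data depends on the multiplicities (the Wishart degrees of freedom at each node), not merely on the node locations.

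\textbf{The missing idea} is \cref{implem:poly-inapprox-nonzero}: deliberately make the two spectra close in relative terms.
\begin{itemize}
\item Average each dual weight vector with the uniform weight $n/(t+2)$ to get $w,w'$. Then set $y=\frac{1+\eta}{2}w+\frac{1-\eta}{2}w'$ and $\tilde y=\frac{1-\eta}{2}w+\frac{1+\eta}{2}w'$.
\item Now every node carries mass $\gtrsim n/t$ and $|y_i-\tilde y_i|\le 4\eta y_i$. The price is that the separation shrinks to $\eta\alpha n\max_i|f(\lambda_i)|$.
\item The bilinear power data become a linear image of independent per-node Wisharts whose degrees of freedom differ by relative $O(\eta)$.
\item Moment matching lets you absorb the mean discrepancy via the shift $\tilde\mW^{(i)}+(y_i-\tilde y_i)\mI$. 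The GOE comparison then gives TV $\lesssim b^{3/2}t^{3/2}/\sqrt n+\eta b^2 t$.
\end{itemize}

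Simulating an adaptive $\ell$-query algorithm needs block size $b\approx t$ (\cref{impthm:from-general-to-rbk}). Your ``each query raises the degree by one'' heuristic does not establish this, since adaptive queries inject new directions. With $b\approx t$ one must take $\eta\asymp t^{-3}$.

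Beating $10\norm{f(\mA)}_{\mathrm F}\le 10\sqrt n\max_i|f(\lambda_i)|$ then requires $\alpha\sqrt n/t^3\gtrsim 1$, that is, $n\gtrsim t^6/\alpha^2$. Your scheme would only ever produce a requirement like $n\gtrsim\max\{\mathrm{poly}(t),1/\alpha^2\}$. That mismatch is a symptom that the trade-off between separation and indistinguishability is missing from your argument.
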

\noindent 
Together, \cref{prop:trace-upper-bound,prop:approx-implies-upper-bound} imply that we can estimate $\tr(f(\mA))$ with $\order(t)$ matvecs when the function $f$ admits an accurate degree-$t$ polynomial approximation.
\Cref{thm:black-box-matvec-lower-bound} proves a converse result, showing that if $f$ is \warn{inapproximable} by degree-$t$ polynomials, then at least $\Omega(t)$ matvecs are needed to estimate $\tr(f(\mA))$.
We prove \cref{thm:black-box-matvec-lower-bound} in \cref{app:shyam-generalization} by generalizing and refining the work of
\ifcolt 
\cite{chewi2024query}.
\else 
Chewi \etal \citep{chewi2024query}.
\fi
The basic idea is to construct two matrices $\mA_1$ and $\mA_2$ with a common set of eigenvalues repeated with different multiplicities in such a way that these matrices ``look the same'' to any algorithm using $t/2$ matvecs but for which $|\tr(f(\mA_1)) - \tr(f(\mA_2))| \gg 0$.

\subsection{Lower bounds for trace-inverse}

To use \cref{thm:black-box-matvec-lower-bound} to prove a lower bound on the \(\tr(\mA^{-1})\) and \(\mA^{-1}\vb\) problems, we need to show that \(f(x)=\nicefrac1x\) is inapproximable over an appropriate domain \(\cS\subseteq\bbR\).
Since we are interested in the complexity of these tasks for symmetric matrices \mA, not necessarily SPD, with condition number at most \(\kappa\), we take the domain to be the split interval \(\cS = [-\kappa,-1] \cup [1,\kappa]\).
We show the following inapproximability result:

\begin{lem}[Inapproximability of $\nicefrac1x$ on split interval]
    \label{thm:inverse-inapprox}
    Fix \(\kappa \geq 2\) and an integer $t$.
    Then \(f(x)=\nicefrac1x\) is \((t,\frac12\e^{-2(t+1)/\kappa})\)-inapproximable on \([-\kappa,-1]\cup[1,\kappa]\).
\end{lem}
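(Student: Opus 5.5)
Since $\max_{\lambda\in\cS}|1/\lambda| = 1$ on $\cS=[-\kappa,-1]\cup[1,\kappa]$, it suffices to show that every polynomial $p$ of degree at most $t$ satisfies $\max_{\lambda\in\cS}|p(\lambda)-1/\lambda|\ge \frac12\e^{-2(t+1)/\kappa}$. The plan is to reduce to a Chebyshev-type extremal problem by clearing the denominator. Suppose $|p(\lambda)-1/\lambda|\le \delta$ on $\cS$. Set $q(x)\defeq 1-xp(x)$, a polynomial of degree at most $t+1$ with $q(0)=1$. For $\lambda\in\cS$,
\[
|q(\lambda)| = |\lambda|\,|1/\lambda - p(\lambda)| \le \kappa\delta .
\]
So it suffices to lower bound $\min\{\max_{\cS}|q| : \deg q\le t+1,\ q(0)=1\}$. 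This is the classical residual-polynomial problem behind MINRES/CG-on-normal-equations lower bounds.

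For the extremal problem, I would symmetrize. Replacing $q$ by $\frac12(q(x)+q(-x))$ keeps $q(0)=1$ and does not increase $\max_\cS|q|$, since $\cS$ is symmetric. So we may assume $q$ is even, $q(x)=r(x^2)$ with $\deg r\le m\defeq\lfloor (t+1)/2\rfloor$ and $r(0)=1$. Then $\max_{\cS}|q| = \max_{y\in[1,\kappa^2]}|r(y)|$. By the standard extremal property of Chebyshev polynomials (which I would prove via the equioscillation/alternation-counting argument, or simply cite), the minimum is
\[
\frac{1}{|T_m(\tfrac{\kappa^2+1}{\kappa^2-1})|}
= \frac{2}{\rho^m+\rho^{-m}} \ge \rho^{-m},
\qquad \rho = \frac{\kappa+1}{\kappa-1}.
\]
Here $\rho$ is computed from $\sqrt{\kappa^2}=\kappa$ in the usual formula $(\sqrt{c}+1)/(\sqrt c-1)$ with $c=\kappa^2$. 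Combining, $\kappa\delta\ge \rho^{-m}$, i.e.\ $\delta\ge \kappa^{-1}\bigl(\frac{\kappa-1}{\kappa+1}\bigr)^{(t+1)/2}$.

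The remaining step, and the one I expect to be the main obstacle, is the constant bookkeeping. We have $\log\frac{\kappa+1}{\kappa-1}=2\,\mathrm{artanh}(1/\kappa)\le \frac{2}{\kappa}\cdot\frac{1}{1-1/\kappa^2}$, which gives roughly $\rho^{-(t+1)/2}\ge \e^{-(t+1)/\kappa\,(1+O(\kappa^{-2}))}$. That exponent is better than the claimed $\e^{-2(t+1)/\kappa}$ by a factor of $2$, and the slack should absorb the $1+O(\kappa^{-2})$ correction when $\kappa\ge2$ (check $\frac{1}{1-1/\kappa^2}\le\frac43<2$). However, there is a stray prefactor $1/\kappa$, which cannot be absorbed this way. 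The fix I would try first is to avoid losing $\kappa$ when multiplying by $\lambda$. Instead of using $|\lambda|\le\kappa$ uniformly, apply the extremal bound on the inner region $[-\kappa',-1]\cup[1,\kappa']$ with $\kappa'$ a constant (say $2$) near $\pm1$, where $|\lambda|\le 2$. Then use the growth of the Chebyshev polynomial to transfer: $|q|$ small on $\cS$ forces, via a weighted Chebyshev/Bernstein–Walsh estimate, $|q(\lambda)|\le\kappa\delta$ only at the outer end and $\le 2\delta$ near $1$. Alternatively, use a weighted extremal problem with weight $1/|x|$, whose value is known to be comparable to $\rho^{-m}$ without the $\kappa$ loss, and the spare factor of $2$ in the exponent together with the factor $\frac12$ provide the room needed. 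I would settle the exact constants by checking the inequality $\kappa^{-1}\rho^{-(t+1)/2}\ge\frac12\e^{-2(t+1)/\kappa}$ directly when $t\gtrsim\kappa\log\kappa$. In the small-$t$ regime, a trivial bound such as evaluating at $\lambda=\pm1$ should already give error at least $\frac12$ up to the exponential factor.
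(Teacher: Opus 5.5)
Everything up to $\delta\ge\kappa^{-1}\bigl(\tfrac{\kappa-1}{\kappa+1}\bigr)^{\lfloor (t+1)/2\rfloor}$ is correct. The trouble is the prefactor $1/\kappa$: it means this proves the lemma only for $t\gtrsim\kappa\log\kappa$, where your better exponent outweighs the factor $2/\kappa$. For smaller $t$, none of the proposed repairs works:
\begin{itemize}
\item Evaluating at $\lambda=\pm1$ gives nothing once $t\ge 1$, since $p(x)=x$ matches $1/x$ exactly there. In general, any lower bound on a degree-$t$ uniform error must involve at least $t+2$ points.
\item Rerunning your argument on an inner region $[-K,-1]\cup[1,K]$ yields at best $\max_K K^{-1}\bigl(\tfrac{K-1}{K+1}\bigr)^{\lfloor(t+1)/2\rfloor}\le\max_K K^{-1}\e^{-t/K}\le 1/(\e t)$. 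For $t\approx\kappa$ this is of order $1/\kappa$, while the target $\tfrac12\e^{-2(t+1)/\kappa}$ is a constant.
\item The weighted problem with weight $1/|x|$ is literally the original problem, since $|q(x)|/|x|=|1/x-p(x)|$ for $q(x)=1-xp(x)$. Citing its value as known is therefore circular.
\end{itemize}
The loss is built into the step $|q(\lambda)|\le|\lambda|\,\delta\le\kappa\delta$. Trading the approximation problem for an unweighted residual problem discards a factor $\kappa$ that nothing downstream recovers.

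The missing idea is to use $|x|\ge 1$ rather than $|x|\le\kappa$. After passing to the odd part, write $\tilde p(x)=x\,s(x^2)$ with $\deg s\le\lfloor (t-1)/2\rfloor$. Then for $x\in\cS$,
\[
    |\tilde p(x)-1/x| = |x|\,\bigl|s(x^2)-1/x^2\bigr| \ge |s(y)-1/y|, \qquad y=x^2\in[1,\kappa^2].
\]
So it suffices to lower bound the best uniform approximation of $1/y$ on $[1,\kappa^2]$ by polynomials of degree $d\le (t-1)/2$. This is Chebyshev's classical result, which the paper imports from Kraus et al. For $b=\kappa^2\ge 4$ the error is $\frac{b-1}{2b}\bigl(\frac{\sqrt b-1}{\sqrt b+1}\bigr)^{d}\ge\frac12\e^{-4(d+1)/\sqrt b}$, with an absolute-constant prefactor. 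Since $d+1\le (t+1)/2$, this gives $\frac12\e^{-2(t+1)/\kappa}$.

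In your notation, with $r(y)=1-y\,s(y)$, the exact error is $y^{-1/2}|r(y)|$. You bounded $y^{-1/2}\ge\kappa^{-1}$, which reduces to an unweighted residual problem at the cost of a factor $\kappa$. The paper instead bounds $y^{-1/2}\ge y^{-1}$, which reduces to approximating $1/y$, a problem whose sharp solution has no $\kappa$-dependent prefactor.
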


The proof of \cref{thm:inverse-inapprox} appears in \cref{sec:proving-inapprox}.
Now, we will use use this result to prove lower bounds for computing the trace-inverse.
We have the following result:

\begin{theorem}[Trace-inverse: Lower bound]
    \label{thm:matvec-trace-lower-bound-large-kappa}
    Fix any \(\delta,\eta\in(0,1)\) and suppose that $n\gtrsim \kappa^{(6+\delta)/\eta}$.
    Any matrix--vector algorithm that produces an estimate $\tilde{\tr}$ satisfying
\begin{equation} \label{eq:lower-bound-trace-guarantee}
        \big|\tilde{\tr} - \tr\big(\mA^{-1}\big)\big| \le 5 \norm{\mA^{-1}}_{\mathrm F} \quad \text{with prob.\ } \ge \frac{2}{3}
    \end{equation}
for all symmetric matrices $\mA \in \bbR^{n\times n}$ with condition number $\kappa$
    requires at least $\frac{1-\eta}4 \kappa \log n$ matvecs.
\end{theorem}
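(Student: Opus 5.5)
The plan is to combine \cref{thm:black-box-matvec-lower-bound} with \cref{thm:inverse-inapprox} on the split interval $\cS = [-\kappa,-1]\cup[1,\kappa]$, taking $f(x)=\nicefrac1x$. Any symmetric matrix with eigenvalues in $\cS$ has condition number at most $\kappa$. We need the hard instances of \cref{thm:black-box-matvec-lower-bound} to have condition number exactly $\kappa$. To arrange this, either note that the construction can be made to include eigenvalues $\pm1$ and $\pm\kappa$, or rescale and pad; I expect this to be a bookkeeping matter. An algorithm meeting \cref{eq:lower-bound-trace-guarantee} on all such matrices then meets the hypothesis of \cref{thm:black-box-matvec-lower-bound}, so it must use at least $\lfloor t/2\rfloor$ matvecs whenever $f$ is $(t,\alpha)$-inapproximable with $n \gtrsim t^6/\alpha^2$.

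Next I choose $t$ as large as possible. By \cref{thm:inverse-inapprox}, $\alpha = \frac12 \e^{-2(t+1)/\kappa}$, so the dimension requirement becomes
\[
  n \gtrsim 4t^6 \e^{4(t+1)/\kappa}.
\]
Set $t+1 = \frac{1-\eta'}{2}\kappa\log n$ for some $\eta'$ slightly smaller than $\eta$. Then $\e^{4(t+1)/\kappa} = n^{2(1-\eta')}$, and the requirement becomes $n^{2\eta'-1} \gtrsim t^6$.

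Here is the main obstacle: this inequality fails, since the exponential alone already exceeds $n$. The conversion has to be rebalanced. The exponent $4$ arises from squaring $\alpha$, so to reach the target $\frac{1-\eta}{4}\kappa\log n$ I instead want $t \approx \frac{1-\eta}{2}\kappa\log n$, giving $\lfloor t/2\rfloor \approx \frac{1-\eta}{4}\kappa\log n$. That requires $\alpha^{-2}\le n^{1-\eta}$ roughly, i.e.\ $\e^{4t/\kappa}\lesssim n^{1-\eta}$, which forces $t \le \frac{1-\eta}{4}\kappa\log n$ and only gives $\lfloor t/2\rfloor\approx \frac{1-\eta}{8}\kappa\log n$.

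So I would first check the precise form of the dimension condition in the general version \cref{thm:black-box-matvec-lower-bound-generic}. Presumably it only requires something like $n\gtrsim t^6/\alpha$, or an $\alpha$-dependence that makes the constant $\frac14$ come out. With the condition $n \gtrsim t^6 \e^{2(t+1)/\kappa}$, the choice $t+1=\frac{1-\eta/2}{2}\kappa\log n$ gives $\e^{2(t+1)/\kappa}=n^{1-\eta/2}$. The requirement then reduces to $n^{\eta/2}\gtrsim t^6 \asymp (\kappa\log n)^6$. This holds once $n \gtrsim \kappa^{(6+\delta)/\eta}$ with room to spare, because the $\delta$ slack absorbs the $(\log n)^6$ factor and the constants for large $n$; the $\eta/2$ versus $\eta$ discrepancy is absorbed by adjusting constants.

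Finally, $\lfloor t/2\rfloor \ge \frac{1-\eta}{4}\kappa\log n$ after absorbing the floor and the $-1$ into the slack between $\eta/2$ and $\eta$. This requires $\kappa\log n$ to be large, which follows from $n$ being large. That completes the plan. The delicate step is matching the exponent of $\alpha$ in the dimension condition, so that the constant $\frac14$ is attained rather than $\frac18$.
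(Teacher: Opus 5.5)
Your middle computation is correct. With $\alpha=\frac12\e^{-2(t+1)/\kappa}$ from \cref{thm:inverse-inapprox} and the hypothesis $n\gtrsim t^6/\alpha^2$ of \cref{thm:black-box-matvec-lower-bound}, the largest admissible degree is $t\approx\frac{1-\eta}{4}\kappa\log n$. That yields only $\lfloor t/2\rfloor\approx\frac{1-\eta}{8}\kappa\log n$ matvecs.

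The gap is in your repair. \Cref{thm:black-box-matvec-lower-bound-generic} does have the milder dimension hypothesis $n\gtrsim t^4\max\{t^2,1/\alpha\}$. However, it only rules out estimators whose error is $\lesssim\frac{\alpha\sqrt n}{t^3}\norm{f(\mA)}_{\mathrm F}$. To cover the guarantee $5\norm{\mA^{-1}}_{\mathrm F}$ you need $\alpha\sqrt n\gtrsim t^3$, i.e.\ $n\gtrsim t^6/\alpha^2$ again; this is exactly where the special case's hypothesis comes from. With your choice $\e^{2(t+1)/\kappa}=n^{1-\eta/2}$ one has $\alpha\approx n^{-1+\eta/2}$. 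The generic theorem then only excludes estimators accurate to about $n^{-1/2+\eta/2}t^{-3}\norm{\mA^{-1}}_{\mathrm F}$, far finer than $5\norm{\mA^{-1}}_{\mathrm F}$. The square cannot be avoided within this argument either. The construction separates the two traces by about $t^{-3}\alpha n\max_i|f(\lambda_i)|$, while $\norm{\mA^{-1}}_{\mathrm F}\ge\sqrt n\max_i|f(\lambda_i)|/\kappa$ on the split interval. So as long as you keep $\alpha=\frac12\e^{-2(t+1)/\kappa}$, you are stuck near the constant $\frac18$.

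The missing idea is that the factor of two must come from the inapproximability constant. The exact bound in \cref{implem:non-split-inapprox} decays like $\bigl(\frac{\sqrt b-1}{\sqrt b+1}\bigr)^{t+1}\approx\e^{-2(t+1)/\sqrt b}$. Simplifying it to $\e^{-4(t+1)/\sqrt b}$ discards half the exponent. Running the odd-part reduction with the exact form gives, for $\kappa\ge 2$,
\[
    \alpha \;\ge\; \tfrac18\Bigl(\tfrac{\kappa-1}{\kappa+1}\Bigr)^{(t+1)/2} \;\ge\; \tfrac18\exp\Bigl(-\tfrac{t+1}{\kappa(1-\kappa^{-2})}\Bigr).
\]
Take $t+1=\frac{(1-\eta')(1-\kappa^{-2})}{2}\kappa\log n$ for some $\eta'$ slightly below $\eta$. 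Then $\alpha^{-2}\lesssim n^{1-\eta'}$, and the condition $n\gtrsim t^6/\alpha^2$ follows from $n\gtrsim\kappa^{(6+\delta)/\eta}$ as before. You get $\lfloor t/2\rfloor\ge\frac{1-\eta}{4}\kappa\log n$ after absorbing the floor, the $-1$, and the $\kappa^{-2}$ correction into the slack $\eta-\eta'$. The last absorption needs $\kappa$ not too small relative to that slack: near $\kappa=2$ the true rate $(\frac{\kappa-1}{\kappa+1})^{t/2}$ is slightly faster than $\e^{-t/\kappa}$.

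The paper's own one-line proof has the same slip your calculation exposes. It takes $t\le\frac{1-\eta}{4}\kappa\log n-1$, invokes \cref{thm:black-box-matvec-lower-bound} directly, and reads off $t$ where that theorem gives only $\lfloor t/2\rfloor$. The sharp rate above is what actually supports the constant $\frac14$.
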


\begin{proof}
    Fix any \(t \leq \frac{1-\eta}{4}\kappa\log(n) - 1\).
    Then, we know that \(f(x)=\nicefrac1x\) is \((t,\alpha)\)-inapproximable on \([-\kappa,-1]\cup[1,\kappa]\) for
    \[
        \alpha
        = \frac12 e^{-2(t+1)/\kappa}
\geq \frac12 n^{-\frac{1-\eta}{2}},
    \]
    which gives us
    \[
        \frac{t^6}{\alpha^2}
        \lesssim \kappa^6 n^{1-\eta} \log^6 n
        \lesssim n.
    \]
    The last inequality holds because \(n \gtrsim \kappa^{(6+\delta)/\eta}\).
    Appealing to \cref{thm:black-box-matvec-lower-bound} completes the proof.
\end{proof}

It is informative to instantiate this result in a few regimes.
First, taking the limit \(\eta \to 0\) yields:
\begin{corollary}
    For any $\eta > 0$, any algorithm using fewer than \((\frac14 - \eta)\kappa\log n\) two-sided matvecs cannot estimate \(|\tilde\tr - \tr(\mA^{-1})| \leq 5 \norm{\mA^{-1}}_{\mathrm F}\) with probability at least \(\nicefrac23\) for all matrices \mA.
\end{corollary}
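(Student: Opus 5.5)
This follows from \cref{thm:matvec-trace-lower-bound-large-kappa} by choosing $n$ as a function of $\kappa$ and letting $\delta,\eta$ be arbitrary. Fix $\eta' > 0$; we want to show that, for large enough $n$, fewer than $(\frac14-\eta')\kappa\log n$ matvecs cannot suffice. The plan is to apply \cref{thm:matvec-trace-lower-bound-large-kappa} with $\eta \defeq 4\eta'$ and a fixed $\delta\in(0,1)$, say $\delta = \nicefrac12$. Then $\frac{1-\eta}{4} = \frac14 - \eta'$.

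\cref{thm:matvec-trace-lower-bound-large-kappa} applies whenever $n \ge \rC\,\kappa^{(6+\delta)/\eta}$ for the universal constant implicit in $\gtrsim$. In that regime any algorithm achieving \cref{eq:lower-bound-trace-guarantee} uniformly over symmetric matrices of condition number $\kappa$ must use at least $(\frac14-\eta')\kappa\log n$ matvecs. Contrapositively, an algorithm using fewer matvecs fails on some matrix $\mA$, which proves the corollary. If $\eta'\ge \frac14$ the claim is vacuous, since the matvec budget is nonpositive. So we may assume $\eta\in(0,1)$, as the theorem requires.

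The only real care is in reading the quantifiers. The corollary implicitly restricts to the regime $n\gtrsim \kappa^{(6+\delta)/(4\eta')}$, i.e.\ $n$ polynomially large in $\kappa$. For fixed $\kappa$ this holds for all sufficiently large $n$. Equivalently, $\kappa$ may grow, but at most as a small power $n^{c(\eta')}$. The main (minor) obstacle is stating this dependence explicitly so that the ``for all matrices $\mA$'' clause is understood over dimensions $n$ large relative to $\kappa$.

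No new estimates are needed: all the analytic content (inapproximability via \cref{thm:inverse-inapprox} and the indistinguishability argument of \cref{thm:black-box-matvec-lower-bound}) is already packaged in \cref{thm:matvec-trace-lower-bound-large-kappa}.
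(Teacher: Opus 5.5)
Your proposal is correct and follows the paper exactly. The paper obtains the corollary by sending $\eta\to 0$ in \cref{thm:matvec-trace-lower-bound-large-kappa}, and your substitution $\eta = 4\eta'$ with $\delta$ fixed, plus your reading that the requirement $n \gtrsim \kappa^{(6+\delta)/(4\eta')}$ is left implicit, matches the paper's remark that this form does not specify the size of the hard instance.

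One caution concerns the theorem you cite rather than your step. As written, that theorem's proof feeds $t \approx \frac{1-\eta}{4}\kappa\log n$ into \cref{thm:black-box-matvec-lower-bound}, which returns only $\lfloor t/2\rfloor \approx \frac{1-\eta}{8}\kappa\log n$ matvecs. So the constant $\frac14$ you inherit actually rests on the sharper exponent $\e^{-(1+o(1))(t+1)/\kappa}$ (as $\kappa\to\infty$) available from the exact expression in \cref{implem:non-split-inapprox}, not on the stated bound $\frac12\e^{-2(t+1)/\kappa}$ of \cref{thm:inverse-inapprox}.
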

This result gives us the sharpest constant in the complexity, but it does not specify the dimension $n$ needed to produce a hard instance.
If we take $\eta \to 1$, we see that it takes $\Omega(\kappa \log n)$ matvecs to estimate the spectral sum when \(n \gtrsim \kappa^{6 + \delta}\) for any $\delta > 0$.
Curiously, our techniques provide no lower bound when \(n \lesssim \kappa^6\).
We leave this as a natural question for future research.

\subsection{Lower bounds for linear systems of equations}

As a corollary of the \cref{thm:matvec-trace-lower-bound-large-kappa}, we show lower bound on the number of matvecs used by any method which achieves a standard guarantee for solving linear systems.

\begin{theorem}[Linear systems: Lower bound] \label{cor:lin-sys-lb}
    Fix any \(\delta,\eta\in(0,1)\) and suppose \(n \gtrsim \kappa^{(6+\delta)/\eta}\).
    Any algorithm using fewer than \(\frac{1-\eta}4\kappa\log n\) two-sided matvecs with a matrix \mA cannot return a vector \(\tilde\vx\) such that
\begin{equation} \label{eq:lin-sys-lb-1}
        \norm{\smash{\mA\tilde\vx - \vb}}_2 \leq \frac1n \norm{\vb}_2 \quad \text{with prob.\ } \ge \frac{5}{6}
    \end{equation}
for all symmetric matrices $\mA \in \bbR^{n\times n}$ with \(\cond(\mA)\leq\kappa\) and all vectors \(\vb\in\bbR^n\).
    Moreover, attaining the guarantee
\begin{equation} \label{eq:lin-sys-lb-2}
        \norm{\smash{\mA\tilde\vx - \vb}}_2 \leq \frac{1}{2} \norm{\vb}_2 \quad \text{with prob.\ } \ge 1 - \frac{1}{10 \log n}
    \end{equation}
for the same class of problems requires $\Omega(\kappa)$ matvecs.
\end{theorem}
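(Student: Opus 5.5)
For the first claim, I will reduce to \cref{thm:matvec-trace-lower-bound-large-kappa} via \cref{prop:trace-upper-bound} with $f(x)=\nicefrac1x$. Suppose an algorithm uses fewer than $\frac{1-\eta}{4}\kappa\log n$ matvecs and achieves \cref{eq:lin-sys-lb-1} with probability at least $\nicefrac56$. On that event,
\[
\norm{\tilde\vx-\mA^{-1}\vb}_2=\norm{\mA^{-1}(\mA\tilde\vx-\vb)}_2\le \norm{\mA^{-1}}_2\cdot\tfrac1n\norm{\vb}_2\le \tfrac1n\norm{\mA^{-1}}_{\mathrm F}\norm{\vb}_2 .
\]
This is exactly the oracle hypothesis of \cref{prop:trace-upper-bound}. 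That proposition then produces, with a single oracle call and hence the same matvec count, an estimate satisfying \cref{eq:lower-bound-trace-guarantee} with probability at least $\nicefrac23$. This contradicts \cref{thm:matvec-trace-lower-bound-large-kappa}. The class of matrices matches: symmetric, $\cond\le\kappa$. The constructed $\vb$ in \cref{prop:trace-upper-bound} is random but independent of $\mA$, which is all the reduction needs.

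For the second claim, I will bootstrap the constant-accuracy solver into a $\nicefrac1n$-accurate one by iterative refinement. Set $\vx_0=0$ and $\vr_0=\vb$. At step $j$, call the solver on $(\mA,\vr_j)$ to get $\vy_j$ with $\norm{\mA\vy_j-\vr_j}_2\le\frac12\norm{\vr_j}_2$. Then update $\vx_{j+1}=\vx_j+\vy_j$ and $\vr_{j+1}=\vr_j-\mA\vy_j$, which costs one extra matvec and preserves $\vr_{j+1}=\vb-\mA\vx_{j+1}$. After $m=\lceil\log_2 n\rceil$ rounds, $\norm{\vb-\mA\vx_m}_2\le 2^{-m}\norm{\vb}_2\le\frac1n\norm{\vb}_2$.

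By a union bound, all rounds succeed with probability at least $1-m/(10\log n)$. Since $\log_2 n = \log n/\log 2\approx 1.44\log n$, this is about $1-0.15$, which is at least $\nicefrac56$ for large $n$. One subtlety is that each call must succeed with its stated probability for its own input $\vr_j$, even though $\vr_j$ depends on earlier randomness. The guarantee is for all $\vb$, so conditioning on the past handles this.

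If the solver uses $T$ matvecs, the total cost is $m(T+1)=\order((T+1)\log n)$. Combining with the first part, $m(T+1)\ge\frac{1-\eta}{4}\kappa\log n$ for, say, $\eta=\nicefrac12$ and $n$ large enough relative to $\kappa$. This gives $T=\Omega(\kappa)$. The dimension requirement is harmless because any smaller instance can be embedded, so a hard instance at large $n$ exists for every $\kappa$.

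The main obstacle is the probability bookkeeping: the union-bound constant must land at least at $\nicefrac56$ given the natural-log convention. If it does not, I will adjust $m$ or the target accuracy slightly, or reduce instead to a variant of \cref{prop:trace-upper-bound} with a looser constant.
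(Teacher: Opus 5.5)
Your proposal is correct and follows the paper's proof essentially step for step. The first part uses the bound $\norm{\tilde\vx-\mA^{-1}\vb}_2\le\frac1n\norm{\mA^{-1}}_{\mathrm F}\norm{\vb}_2$ to feed \cref{prop:trace-upper-bound} and \cref{thm:matvec-trace-lower-bound-large-kappa}. The second part applies iterative refinement with $\lceil\log_2 n\rceil$ rounds, and your explicit check that the failure probability is at most $\frac{1}{10\log 2}+\frac{1}{10\log n}<\frac16$ supplies bookkeeping the paper leaves implicit.

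One small tidy-up: in the second part, use the given $\eta$ rather than $\eta=\nicefrac12$, since the hypothesis $n\gtrsim\kappa^{(6+\delta)/\eta}$ does not cover $\eta=\nicefrac12$ when $\eta>\nicefrac12$. Using the given $\eta$ yields $T=\Omega(\kappa)$ with an $\eta$-dependent constant and makes the embedding remark unnecessary.
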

The proof of the first result is immediate from the spectral sum lower bound (\cref{thm:matvec-trace-lower-bound-large-kappa}) together with the trace estimation upper bound (\cref{prop:trace-upper-bound}).
The second result requires another ingredient, iterative refinement (\cref{prop:iterative-refinement}).
See \cref{sec:lin-sys-lb} for the full proofs.

An interesting feature of this result is that the hard instance \mA is symmetric, for which there is no distinction between matvecs with \mA and $\mA^\top$.
In particular, our results show that solving linear systems requires $\Omega(\kappa \log(\nicefrac1\eps))$ matvecs \emph{even when \mA is symmetric}.
Of course, solving a linear system $\mA\vx = \vb$ for a general matrix \mA can only be more difficult than the symmetric case, so the lower bound also applies against general matrices \mA with condition number $\kappa$ as well.

A linear system is solved to residual accuracy $\varepsilon$ when
\[
    \norm{{\mA\tilde\vx - \vb}}_2 \leq \eps\norm\vb_2,
\]
which satisfies the bound \cref{eq:lin-sys-lb-1} when \(\eps \asymp \frac1n\).
In this ``high accuracy'' regime, we show that \(t \gtrapprox \frac14\kappa\log(\nicefrac1\eps)\) matvecs are needed, recovering \cref{thm:intro-two-sided-lower-bound}.
Further, in the ``low accuracy'' regime where \(\eps \leq \frac12\), we show that \(t \gtrsim \kappa\) matvecs are needed.

\subsection{Inapproximability of the inverse function on a split interval}
\label{sec:proving-inapprox}

All that remains to prove \cref{thm:intro-two-sided-lower-bound} is the polynomial inapproximability result (\cref{thm:inverse-inapprox}), which states that the function \(f(x) = \nicefrac1x\) requires a degree \(t = \Omega(\kappa\log(\nicefrac1\eps))\) polynomial to approximate it to accuracy \(\eps\).
The proof starts with a standard bound on the inapproximability of the function \(f(x)=1/x\) on a non-split interval.

\begin{importedlemma}[Inapproximability of $\nicefrac{1}{x}$ on interval; \protect{
    \ifcolt \citet[Cor.~2.2]{kraus2010polynomial}
    \else Kraus \etal \cite[Cor.~2.2]{kraus2010polynomial}
    \fi}]
    \label{implem:non-split-inapprox}
    The function \(f(x)=1/x\) is \((t,\alpha)\)-inapproximable on the interval \([1,b]\) for
    \[
        \alpha
        = \frac12 \bigl(1-\tfrac{2}{\sqrt b+1}\bigr)^{t+1}\bigl(1-\tfrac{1}{\sqrt b}\bigr)^2
        \ge \frac12 e^{-4(t+1)/\sqrt{b}}.
    \]
\end{importedlemma}

Using the fact that \(f(x)=1/x\) is odd, we reduce the inapproximability of \(f\) on the split interval \([-\kappa, -1] \cup [1, \kappa]\) to the inapproximability of \(f\) on the one-sided interval \([1,\kappa^2]\), yielding \cref{thm:inverse-inapprox}.

\begin{proof}[Proof of \cref{thm:inverse-inapprox}]
    Let \(p\) be any polynomial of degree at most \(t\), which we think of as an approximation to $f(x) = \nicefrac{1}{x}$ on the split interval $\cS \defeq [-\kappa,-1]\cup [1,\kappa]$.
Extract the odd part
    \(
        \tilde p(x) \defeq \frac12(p(x) - p(-x))
    \) of $p$,
    which is a polynomial of degree at most \(t\).
    Note that
    \[
        \tilde p(x) - f(x)
        = \frac{p(x) - f(x)}2 - \frac{p(-x) + f(x)}2.
    \]
    Since \(f(x) = -f(-x)\) and \cS is symmetric,
    \begin{align*}
        |\tilde p(x) - f(x)|
        \leq \frac{|p(x) - f(x)|}2 + \frac{|p(-x) - f(-x)|}2
        \leq \max_{x\in\cS}|p(x) - f(x)|.
    \end{align*}
    That is, \(\tilde p\) is no worse than \(p\) at approximating \(f\).
    Next, since \(\tilde p(x)\) is an odd polynomial, we can write \(\tilde p(x) = x q(x^2)\) for some polynomial \(q\) of degree at most \(t_q \defeq \frac12(t-1)\).
    Then, for any \(x \in \cS\),
    \[
        |\tilde p(x) - f(x)|
        = |xq(x^2) - 1/x|
        = x |q(x^2) - 1/x^2|
        \geq |q(x^2) - 1/x^2|.
    \]
    Defining \(y \defeq x^2 \in [1,\kappa^2]\) then \cref{implem:non-split-inapprox} implies that
    \[
        \max_{x\in\cS}|p(x) - f(x)| \ge \max_{x \in \cS} |\tilde p(x) - f(x)|
        \geq \max_{y \in [1,\kappa^2]} |q(y) - 1/y|
        \geq \frac12 e^{-4(t_q+1)/\kappa}.
    \]
    Substituting \(t_q = \frac12(t-1)\) completes the proof.
\end{proof}

 \section{Lower bound against one-sided algorithms}
\label{sec:one-sided}

In this section, we develop lower bounds for one-sided matrix--vector algorithms.
Our proofs use the hidden random matrix method and require the development of a new \emph{hidden Haar theorem}.

\subsection{Hidden Haar theorem}

One of the most powerful proof techniques for proving lower bounds on matrix--vector algorithms is the \emph{hidden random matrix method} \citep{simchowitz2018tight, BHSW20,amsel_fixed-sparsity_2024}.
The method works by identifying a random matrix \mA with the following property: Up to rotation, \mA contains a submatrix that follows a known distribution and is independent of the matrix--vector queries.
One then shows how the uncertainty induced by the hidden matrix prevents any algorithm from solving a given linear algebra task.
Previous works have taken \mA to be a Wishart matrix \citep{BHSW20,chewi2024query,meyer2024towards,amsel_fixed-sparsity_2024,meyer2026hutchinson} or GOE matrix \citep{simchowitz2018tight,jiang2021optimal,woodruff2022optimal}.

In this work, we use a Haar-random (i.e. uniformly random) orthogonal matrix.
The key ingredient to using this random matrix is a \emph{hidden Haar theorem}, which characterizes the distribution of the unobserved part of the matrix.

\begin{theorem}[Hidden Haar]
    \label{thm:hidden-haar}
    Let \(\mQ \in \bbR^{n \times n}\) be a Haar-random orthogonal matrix, and consider the transcript
\begin{equation*}
        \cT = (\vz_1,\mQ \vz_1,\ldots,\vz_t,\mQ\vz_t)
    \end{equation*}
of a deterministic one-sided matrix--vector algorithm making linearly independent queries $\vz_1,\!\ldots\!,\!\vz_t$.
    Then we can construct an orthogonal matrix \(\mU\in\bbR^{n \times n}\) and a matrix \(\mV\in\bbR^{n \times t}\) with orthonormal columns, depending only on the transcript \cT, such that
\[
        \mQ\mU = \bmat{\mV & \mW} \text{ is a Haar-random orthogonal matrix.}
    \]
Further, conditioned on the transcript \cT, the matrix \(\mW \in \bbR^{n \times (n-t)}\) is distributed uniformly at random over all orthonormal matrices spanning the subspace orthogonal to $\operatorname{range}(\mV)$.
\end{theorem}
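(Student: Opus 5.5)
The plan is to build \mU by Gram--Schmidt on the queries and then exploit the two-sided invariance of Haar measure. Since the algorithm is deterministic, $\vz_1$ is a fixed vector, and each later $\vz_i$ is a deterministic function of $(\vz_1,\mQ\vz_1,\ldots,\vz_{i-1},\mQ\vz_{i-1})$.

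Let $\vu_1,\ldots,\vu_t$ be the Gram--Schmidt orthonormalization of $\vz_1,\ldots,\vz_t$. This is well defined because the queries are linearly independent. Complete these vectors to an orthonormal basis $\vu_{t+1},\ldots,\vu_n$ by a fixed deterministic rule, for example Gram--Schmidt applied to the standard basis vectors, and set $\mU = [\vu_1,\ldots,\vu_n]$. Then \mU depends only on \cT.

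Define $\mV \defeq \mQ[\vu_1,\ldots,\vu_t]$. Each $\vu_i$ is a known linear combination of $\vz_1,\ldots,\vz_i$, so each $\mQ\vu_i$ is the same combination of the observed vectors $\mQ\vz_1,\ldots,\mQ\vz_i$. Hence \mV is also a function of \cT, and its columns are orthonormal because \mQ is orthogonal. Set $\mW \defeq \mQ[\vu_{t+1},\ldots,\vu_n]$. The transcript \cT is then a deterministic function of \mV: we recover $\vz_1$ directly, then $\mQ\vz_1 = \|\vz_1\|\,\vV_{:,1}$, then $\vz_2$, and so on inductively.

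The key step, and the main obstacle, is showing that $\mQ\mU$ is Haar distributed even though \mU depends on \mQ. I would argue sequentially. Define the columns $\vv_i \defeq \mQ\vu_i$. Conditioned on $\vv_1,\ldots,\vv_{i-1}$, the vector $\vu_i$ is determined, since it is a function of the transcript so far. It is a unit vector orthogonal to $\vu_1,\ldots,\vu_{i-1}$.

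Given that $\mQ\vu_j=\vv_j$ for $j<i$, the distribution of \mQ is invariant under left multiplication by orthogonal maps $\mat{R}$ fixing $\vv_1,\ldots,\vv_{i-1}$. This holds because $\mat{R}\mQ$ is Haar and satisfies the same constraints. The induced law of $\mQ\vu_i$ is therefore uniform on the unit sphere of $\operatorname{span}(\vv_1,\ldots,\vv_{i-1})^\perp$, and it does not depend on which fixed $\vu_i$ was chosen.

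To make this rigorous, I would phrase it as follows: for any fixed orthonormal frame, the columns of \mQ applied to that frame form a Haar matrix, and the sequential conditional laws are those of Gram--Schmidt on Gaussian vectors. This identifies the law of $\vv_1,\vv_2,\ldots$ as that of the first columns of a Haar matrix. Continuing through $i=n$ (the later $\vu_i$ are even more clearly determined by the earlier ones) shows that $[\mV\ \mW]$ is Haar.

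Finally, for a Haar matrix $[\mV\ \mW]$, the conditional law of \mW given \mV is invariant under left multiplication by orthogonal maps fixing $\operatorname{range}(\mV)$ and under right multiplication by $O(n-t)$. It is therefore uniform over orthonormal bases of $\operatorname{range}(\mV)^\perp$. Conditioning on \cT is equivalent to conditioning on \mV, since \cT is a function of \mV and \mV is a function of \cT, which gives the final claim.
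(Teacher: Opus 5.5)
Your proposal is correct, and its outer structure matches the paper's. You orthonormalize the queries, complete the frame by a fixed rule, observe that $\mV$ and $\cT$ determine each other, and read off the conditional law of $\mW$. Where you differ is the central step, showing that $\mQ\mU$ is Haar even though $\mU$ depends on $\mQ$.

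The paper argues at the level of whole matrices. It builds nested completions $\mU_{i+1} = \mU_i \diag(\mI_i,\mT_i)$, where $\mT_i$ is a function of the first $i$ columns of $\mQ\mU_i$. It then proves once that right-multiplying a Haar matrix by $\diag(\mI_t,\mT)$, with $\mT$ a function of the first $t$ columns, preserves Haar measure. That lemma is proved by writing $\mQ = \mQ_1\diag(\mI_t,\mQ_2)$ with independent Haar factors. So only independence and right-invariance are needed, and no conditioning on measure-zero events is required.

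You instead argue column by column through conditional laws. Given $\vv_{<i}$, the law of $\mQ$ is invariant under the stabilizer of $\vv_{<i}$. Hence $\vv_i$ is uniform on the unit sphere of $\operatorname{span}(\vv_{<i})^\perp$, and this chain of conditional laws characterizes the columns of a Haar matrix. Your route is more geometric, and it gives you directly the column-wise description used in the application.

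Its rigorous form is more delicate than your sketch suggests, though. The rotation $\mat{R}$ must depend on the conditioning value, so invariance of Haar measure under a single fixed $\mat{R}$ does not deliver the claim. Also, conditional laws on the null event $\{\vv_{<i}=a\}$ are not inherited from the fixed-frame case merely because the events coincide as sets. You would need to iterate disintegrations: given $\vv_{<i}=a$, write $\mQ$ through a fresh Haar matrix on $\bbO(n-i+1)$. Then $\vv_i$ is, up to a fixed isometry, that matrix applied to a unit vector determined by $a$, and you condition again inside this kernel. This works, but the paper's block-rotation lemma is exactly the device that avoids it.

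In exchange, you actually justify the final claim that $\mW$ is uniform over orthonormal bases of $\operatorname{range}(\mV)^\perp$ given $\mV$, which the paper simply asserts.
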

\noindent
Here, \mW is the hidden submatrix.
We prove the hidden Haar theorem in \cref{sec:hidden-haar-proof}.
The assumption that the algorithm makes linearly independent queries is without loss of generality, as any algorithm making $t$ matrix--vector queries can be simulated by an algorithm making $\tilde t \le t$ linearly independent queries.
Haar matrices have also been previously used in information-theoretic lower bounds for first-order optimization \cite{carmon2020lower}, though we are unaware of any work that uses them in a matrix--vector context.

\subsection{Lower bounds for linear systems}

We can use the hidden Haar theorem to prove lower bounds on the complexity of one-sided linear system solvers.
We begin by stating some elementary concentration and anticoncentration results for uniformly random points on a sphere.

\begin{prop}[Random vectors on the sphere] \label{prop:random-sphere}
        Let $\vu \in \bbR^n$ be a uniformly random vector on the sphere of radius $R$, and assume $n\gtrsim 1$.
        Then
\begin{enumerate}[label=(\alph*)]
        \item \textbf{One coordinate is not too small.} $|[\vu]_1| \ge 0.2 \cdot R/\sqrt{n}$ with probability $> \nicefrac45$. \label{item:one-not-small}
        \item \textbf{Many coordinates are large.} $[\vu]_1^2 + \cdots + [\vu]_{\lfloor n/2\rfloor}^2  \ge 0.49\cdot R^2$ with probability $>\nicefrac35$.\label{item:many-are-large}
        \item \textbf{Far from any point.} For any vector \vv, $\norm{\vv - \vu}_2^2 \ge 0.99 \cdot R^2$ with probability $>\nicefrac35$.\label{item:usually-far}
    \end{enumerate}
\end{prop}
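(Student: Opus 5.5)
By rotation invariance, scale to $R=1$. I will represent the uniform point as $\vu = \vg/\norm{\vg}_2$ with $\vg \sim \cN(\vzero,\mI_n)$ a standard Gaussian vector. Each item then follows from a one-dimensional Gaussian tail or anticoncentration bound combined with concentration of $\norm{\vg}_2^2$ around $n$. Since $n \gtrsim 1$, I may take $n$ as large as needed, so every statement can be proved asymptotically with some slack.

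\textbf{Part (a).} Write $[\vu]_1 = g_1/\norm{\vg}_2$. By Chebyshev's inequality, $\norm{\vg}_2^2 \le 1.1\,n$ with probability $1-o(1)$, since its variance is $2n$. On that event, $|[\vu]_1| \ge 0.2/\sqrt n$ whenever $|g_1| \ge 0.2\sqrt{1.1} \approx 0.21$. The Gaussian density is at most $1/\sqrt{2\pi}$, so
\[
\Pr(|g_1| < 0.21) \le 2(0.21)/\sqrt{2\pi} \approx 0.168.
\]
Hence $\Pr(|g_1| \ge 0.21) \ge 0.83 > 4/5$, and the $o(1)$ loss from the norm event is absorbed for large $n$.

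\textbf{Part (b).} Let $m = \lfloor n/2 \rfloor$. The quantity $[\vu]_1^2 + \cdots + [\vu]_m^2$ has a $\mathrm{Beta}(m/2,(n-m)/2)$ distribution, with mean $m/n \ge \tfrac12 - \tfrac1{2n}$ and variance $O(1/n)$. By Chebyshev, it lies within $0.005$ of its mean with probability $1-O(1/n)$. For large $n$ this gives a value of at least $0.49$ with probability above $3/5$; in fact the probability tends to one.

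\textbf{Part (c).} Expand
\[
\norm{\vv-\vu}_2^2 = \norm{\vv}_2^2 - 2\langle \vv,\vu\rangle + 1 .
\]
Write $\langle \vv,\vu\rangle = \norm{\vv}_2\, s$, where $s$ is distributed as one coordinate of $\vu$. Viewed as a function of $\norm{\vv}_2$, the quantity $\norm{\vv}_2^2 - 2\norm{\vv}_2 s + 1$ is minimized at $\norm{\vv}_2 = s$, where it equals $1 - s^2$. It therefore suffices to show $s^2 \le 0.01$ with probability above $3/5$, for which the one-sided bound $s \le 0.1$ is enough. Since $\E s^2 = 1/n$, Markov's inequality gives $\Pr(s^2 > 0.01) \le 100/n$, which is small for large $n$.

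\textbf{Main obstacle.} The only delicate point is the numerical constant in (a). There the anticoncentration margin, $0.83$ against $0.8$, is genuinely tight. It requires $\norm{\vg}_2^2/n$ to be close to $1$, for example at most $1.1$, rather than using a crude bound, and the threshold $n \gtrsim 1$ must be chosen so that the Chebyshev failure probability fits inside the remaining margin of about $0.03$.
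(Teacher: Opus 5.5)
Your proposal is correct and takes the same route the paper indicates: realize $\vu = R\vg/\norm{\vg}_2$ with $\vg \sim \cN(\vzero,\mI)$ and apply elementary Gaussian (anti)concentration, which the paper omits as an easy proof; your constants in (a)--(c) all check out for $n$ sufficiently large. One detail worth stating explicitly in (c): the identity $\norm{\vv}_2^2 - 2\norm{\vv}_2 s + 1 = (\norm{\vv}_2 - s)^2 + 1 - s^2$ gives the lower bound $1 - s^2$ for every $\vv$ (the case $\vv = \vzero$ is trivial), so your Markov bound on $s^2$ finishes the argument.
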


These results can be directly shown by realizing $\vu = R \cdot \vg / \norm{\vg}_2$ by rescaling a standard Gaussian vector $\vg \sim \cN(\vzero,\Id)$ and applying simple (anti)concentration arguments for Gaussian random variables.
We omit the easy proof.

\begin{theorem}[Lower bound against one-sided solvers] \label{thm:one-sided}
    Consider the linear system of equations $\mQ \vx = \ve_1$, where $\mQ\in\bbR^{n\times n}$ is a Haar-random orthogonal matrix and $n\gtrsim 1$.
    \begin{enumerate}[label=(\alph*)]
        \item \textbf{Low accuracy requires $\Omega(n)$ matvecs.} \label{item:low-accuracy-one-sided}
        Any one-sided matrix--vector algorithm producing a vector $\tilde\vx$ satisfying
\begin{equation*}
            \norm{\mQ\tilde\vx - \ve_1}_2 \le \frac{1}{2} \quad \text{with prob.\ } \ge \frac{2}{3}
        \end{equation*}
requires $\lceil n/2\rceil$ matvecs.
        \item \textbf{Modest accuracy requires $n$ matvecs.} \label{item:high-accuracy-one-sided}
        Any one-sided matrix--vector algorithm producing a vector $\tilde\vx$ satisfying
\begin{equation*}
            \norm{\mQ\tilde\vx - \ve_1}_2 \le \frac{0.2}{\sqrt{n}}  \quad \text{with prob.\ } \ge \frac{2}{3}
        \end{equation*}
requires $n$ matvecs.
    \end{enumerate}
\end{theorem}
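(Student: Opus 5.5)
By Yao's minimax principle it suffices to consider deterministic algorithms against the random instance $\mQ$. Since any algorithm can be simulated by one making linearly independent queries, we may also assume the queries $\vz_1,\ldots,\vz_t$ are linearly independent.

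The key observation is that the exact solution is $\vx^\star = \mQ^\top \ve_1$, the first row of $\mQ$. For any output $\tilde\vx$, orthogonality of $\mQ$ gives
\[
    \norm{\mQ\tilde\vx - \ve_1}_2 = \norm{\tilde\vx - \mQ^\top\ve_1}_2 .
\]
Apply the hidden Haar theorem (\cref{thm:hidden-haar}) to the transcript. It produces $\mU$ and $\mV$, determined by $\cT$, with $\mQ\mU = \flatbmat{\mV & \mW}$. The output $\tilde\vx$ is a function of $\cT$, so we may rotate by $\mU$ and write
\[
    \norm{\tilde\vx - \mQ^\top\ve_1}_2^2 = \norm{\mU^\top\tilde\vx - \mU^\top\mQ^\top\ve_1}_2^2 ,
\]
where
\[
    \mU^\top\mQ^\top\ve_1 = \bmat{\mV^\top\ve_1 \\ \mW^\top\ve_1}.
\]
The first block is known from the transcript. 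Everything therefore reduces to the conditional law of the last $n-t$ coordinates, $\vu \defeq \mW^\top\ve_1 \in \bbR^{n-t}$.

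Conditioned on $\cT$, $\mW$ is uniformly distributed over orthonormal bases of $\operatorname{range}(\mV)^\perp$. Hence $\mW = \mW_0\mat{O}$, where $\mW_0$ is a fixed basis determined by $\cT$ and $\mat{O}$ is Haar on $\mathbb{O}(n-t)$. It follows that $\vu = \mat{O}^\top\mW_0^\top\ve_1$ is uniformly distributed on the sphere in $\bbR^{n-t}$ of radius $R \defeq \norm{(\mI-\mV\mV^\top)\ve_1}_2$. The error is then at least $\norm{\vv - \vu}_2$, where $\vv$ denotes the last $n-t$ coordinates of $\mU^\top\tilde\vx$, a vector fixed given $\cT$.

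For part (b), take $t = n-1$, which leaves a single hidden coordinate. Then $\vu = \pm R$ with equal probability, independent of $\vv$, so with probability at least $1/2$ the error is at least $R$. It remains to lower-bound $R$. Unconditionally, $\mQ^\top\ve_1$ is a uniform unit vector. Writing it in the $\mU$-basis, and noting that $\mU$ is chosen measurably from $\cT$, a conditioning argument shows that $R$ is distributed as $|[\vu']_1|$ for $\vu'$ uniform on the unit sphere. I expect this distributional claim to be the main technical point. By \cref{prop:random-sphere}\ref{item:one-not-small}, $R \ge 0.2/\sqrt n$ with probability $> 4/5$. Combining, the failure probability is at least roughly $\tfrac12\cdot\tfrac45 = \tfrac25 > \tfrac13$, which contradicts success with probability $\tfrac23$.

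For part (a), take $t < \lceil n/2\rceil$, so that at least $\lfloor n/2\rfloor+1$ coordinates remain hidden. By the same unconditional argument, $R^2$ is distributed as the squared mass of the last $n-t$ coordinates of a uniform unit vector. By \cref{prop:random-sphere}\ref{item:many-are-large} (with the coordinates permuted), $R^2 \ge 0.49$ with probability $> 3/5$. Conditionally on $\cT$, \cref{prop:random-sphere}\ref{item:usually-far} gives $\norm{\vv - \vu}_2^2 \ge 0.99R^2$ with probability $> 3/5$. Since the second event is conditional on the transcript, the two probabilities multiply to give
\[
    \Pr\bigl[\text{error}^2 \ge 0.485\bigr] > \tfrac{9}{25} > \tfrac13 ,
\]
and $0.485 > \tfrac14$. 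Hence the algorithm fails to achieve error $\tfrac12$ with probability exceeding $\tfrac13$.

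The main obstacle is making rigorous that, after conditioning on the adaptively built $\mU$, the overall radius $R$ still has the sphere-marginal law. I would establish this by noting that $\mQ\mU$ is Haar-distributed, which is given by \cref{thm:hidden-haar}. Its first row, $\ve_1^\top\mQ\mU$, is therefore a uniform unit vector whose last $n-t$ entries have norm exactly $R$.
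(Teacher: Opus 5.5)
Your proposal is correct and takes essentially the same route as the paper: you reduce to a deterministic algorithm with linearly independent queries, then apply the hidden Haar theorem to split $\mU^\top\mQ^\top\ve_1$ into the observed block $\mV^\top\ve_1$ and the hidden block $\mW^\top\ve_1$ (uniform on a sphere of radius $R=\sqrt{1-\norm{\mV^\top\ve_1}_2^2}$ given $\cT$), use that $\mQ\mU$ is Haar to get the law of $R$, and finish with \cref{prop:random-sphere} and the $\pm R$ sign argument when $t=n-1$. Your combination step (conditioning on $\cT$ rather than asserting the two events are independent) and your use of the radius $R$ (the paper's part (b) writes $\pm(1-\norm{\mV^\top\ve_1}_2^2)$, dropping the square root) are, if anything, slightly more careful than the written proof.
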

\begin{proof}
    By Yao's minimax theorem \cite[Sec.~2.2]{motwani_randomized_1995}, we can assume without loss of generality that the algorithm is deterministic.
    Consider a one-sided matrix--vector algorithm performing $t$ matrix--vector products with transcript \cT, which we can assume without loss of generality are linearly independent.
    Instantiate the matrices \mQ, \mU, \mW, and \mV from the hidden Haar theorem (\cref{thm:hidden-haar}).
    The solution to \(\mQ\vx=\ve_1\) is
\begin{equation*}
        \vx = \mQ^\top \ve_1 = \mU \begin{bmatrix}
            \mV^\top \ve_1 \\ 
            \mW^\top\ve_1
        \end{bmatrix}.
    \end{equation*}
Let $\tilde\vx$ denote the output of the algorithm, and introduce the notation
\begin{equation*}
        \tilde \vx \eqqcolon \mU \begin{bmatrix}
        \tilde \vx_1 \\ \tilde\vx_2
    \end{bmatrix}.
    \end{equation*}
Then
\begin{equation*}
        \norm{\mQ\tilde\vx - \ve_1}_2^2 = \norm{\mV^\top\ve_1 - \tilde\vx_1}_2^2 + \norm{\mW^\top\ve_1 - \tilde\vx_2}_2^2 \ge \norm{\mW^\top\ve_1 - \tilde\vx_2}_2^2.
    \end{equation*}
Therefore, a necessary condition for the solution to satisfy $\norm{\mQ\tilde\vx - \ve_1}_2 \le \varepsilon$ is for the component $\tilde\vx_2$ to satisfy
\begin{equation*}
        \norm{\mW^\top\ve_1 - \tilde\vx_2}_2^2 \le \varepsilon^2.
    \end{equation*}
    
    The matrix $\mV$ is a uniformly random orthonormal matrix, and, conditional on $\cT$, the matrix $\mW$ is uniformly  random among orthonormal matrices spanning $[\operatorname{range}(\mV)]^\perp$.
    Consequently, $\mV^\top\ve_1$ is distributed as the first $t$ entries of a uniformly random vector of the unit sphere and, conditional on the transcript $\mW^\top\ve_1$ is uniformly distributed on the sphere of radius $\sqrt{1 - \norm{\mV^\top \ve_1}_2^2}$.
    
We now prove parts \ref{item:low-accuracy-one-sided} and \ref{item:high-accuracy-one-sided} separately.

\paragraph{Proof of \cref{thm:one-sided}\ref{item:low-accuracy-one-sided}.}
Suppose $t < \lceil n /2 \rceil$.
By \cref{prop:random-sphere}\ref{item:many-are-large},
\begin{equation*}
    1 - \norm{\mV^\top \ve_1}_2^2 \ge 0.49 \quad \text{with prob.\ } > 3/5
\end{equation*}
The component $\tilde\vx_2$ and $\mW$ are independent conditional on \cT and thus, by \cref{prop:random-sphere}\ref{item:usually-far}, 
\begin{equation*}
    \norm{\mW^\top\ve_1 - \tilde\vx_2}_2^2 \ge 0.99(1 - \norm{\mV^\top \ve_1}_2^2) \quad \text{with prob.\ } > 3/5.
\end{equation*}
These events are independent, so they occur with probability $(\nicefrac35)^2 = 0.36 > \nicefrac13$.
Ergo, no one-sided matrix--vector algorithm using $\lceil n/2\rceil$ matvecs can achieve the guarantee
\begin{equation*}
    \norm{\tilde\vx - \vx}_2 \le \frac{1}{2}\norm{\vx}_2 \le \sqrt{0.49\cdot0.99}\cdot \norm{\vx}_2
\end{equation*}
with probability $\ge \nicefrac23$.

\paragraph{Proof of \cref{thm:one-sided}\ref{item:high-accuracy-one-sided}.}
Now suppose $t = n-1$.
By \cref{prop:random-sphere}\ref{item:one-not-small},
\begin{equation*}
    \sqrt{1 - \norm{\mV^\top \ve_1}_2^2} \ge 0.2/\sqrt{n} \quad \text{with prob.\ } > \nicefrac45.
\end{equation*}
Moreover, conditional on $\cT$, the number $\mW^\top\ve_1 \in \bbR^1$ is uniformly random between $\pm (1 - \norm{\mV^\top \ve_1}_2^2)$.
The component $\tilde\vx_2$ can be $<(1-\norm{\mV^\top \ve_1}_2^2)$ with probability at most $\nicefrac12$.
Thus, any matrix--vector algorithm using $n-1$ matvecs must fail to achieve the guarantee
\begin{equation*}
    \norm{\tilde\vx - \vx}_2 \le \frac{0.2}{\sqrt{n}} \norm{\vx}_2
\end{equation*}
probability at least $(\nicefrac12)(\nicefrac45) = \nicefrac25 > \nicefrac13$.
\end{proof}

 \section{Conclusion}

In this paper, we resolve the worst-case matrix-vector complexity of solving linear systems \(\mA\vx=\vb\) for one- and two-sided matrix--vector algorithms, up to small constant factors.
Along the way, we also prove a lower bound on the complexity of estimating the inverse-trace \(\tr(\mA^{-1})\).
Several natural questions remain to be explored in this space.
Most directly, can further work on polynomial inapproximability enable similar matching upper and lower bounds for the \(f(\mA)\vb\) and \(\tr(f(\mA))\) problems for a much wider family of functions \(f\)?

Second, despite the \(\Omega(n)\) worst-case bound on the complexity of solving linear systems with one-sided algorithms, we see that methods like GMRES converge in far fewer than \(n\) iterations for a very large class of matrices.
This is even theoretically proven for certain matrices, like those with (complex) eigenvalues belonging in the right half of the complex plane.
However, it remains unclear: What \emph{natural} and \emph{minimal} assumptions on the structure of \mA characterize the rate at which one-sided matrix--vector methods can solve linear systems?

Lastly, and most technically, we see that the spectral sum lower bound in \cref{thm:black-box-matvec-lower-bound} does not yield a lower bound against estimating the trace to relative error $\varepsilon$.
To estimate $\tr(\mA)$ to relative error $\varepsilon$, the optimal rate is $\Theta(1/\varepsilon)$ \citep{meyer_hutch_2021,meyer2024towards}.
Is it possible to show a $\Omega(t/\varepsilon)$ lower bound for computing $\tr(f(\mA))$ to relative error $\varepsilon$ if $f$ is inapproximable up to degree $t$?

\section*{Acknowledgments}

MD was supported in part by NSF CAREER Grant CCF-233865 and a Google ML and Systems Junior Faculty Award.
ENE acknowledges support by the Miller Institute for Basic Research in Science, University of California Berkeley.
RAM would like to acknowledge the DARPA DIAL and DARPA AIQ programs for providing partial support of this work.
This work was done in part while the authors were visiting the Simons Institute for the Theory of Computing.
We thank Ainesh Bakshi, Chris Cama\~no, Cecilia Chen, Lin Lin, Michael Mahoney, Shyam Narayanan, and John Urschel for helpful conversations.
Special thanks go to Ilse Ipsen for pointing us to the literature on information-based complexity.
 
\appendix
\section{Matrix--vector lower bounds from polynomial inapproximability}
\label{app:shyam-generalization}

In this section, we prove the following theorem:
\begin{theorem}[Spectral sums: Lower bound]
    \label{thm:black-box-matvec-lower-bound-generic}
    Fix a bounded set \(\cS\subset\bbR\).
    Let \(f:\cS\to\bbR\) be a function that is \((t,\alpha)\)-inapproximable on \cS.
    Suppose that
    \[
        n \gtrsim  t^4\cdot \max \big\{ t^2, 1/\alpha \big\}.
    \]
    Then, any matrix--vector algorithm that returns an estimate \(\tilde{\tr}\) such that
    \[
        |\tilde{\tr} - \tr(f(\mA))| \lesssim \frac{\alpha \sqrt n}{t^3}\cdot \norm{f(\mA)}_{\mathrm F}
        \qquad\qquad
        \text{with prob. }\geq \frac23
    \]
    for all symmetric matrices \(\mA\in\bbR^{n \times n}\) with eigenvalues belonging to the set \cS must use at least \(\lfloor t/2 \rfloor\) matvecs.
\end{theorem}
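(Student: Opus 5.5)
The plan is to convert inapproximability into a pair of spectra with matching low-order moments, and then hide them inside a Haar-rotated matrix. \textbf{Step 1 (duality).} By LP duality / Hahn--Banach for best uniform approximation on the compact set $\overline{\cS}$ (with a Carath\'eodory argument), $(t,\alpha)$-inapproximability yields a finite signed measure $w=\sum_{i\le t+2} w_i\delta_{\lambda_i}$ on $\cS$ with the following properties:
\[
\sum_i |w_i| = 1,\qquad \sum_i w_i\lambda_i^k = 0\ \ (0\le k\le t),\qquad \sum_i w_i f(\lambda_i)\ \ge\ \alpha\max_{\cS}|f|.
\]
The $k=0$ constraint forces $w_+$ and $w_-$ to each have mass $\tfrac12$. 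So $\mu_\pm \defeq 2w_\pm$ are probability measures whose first $t$ moments agree, while $|\int f\,d\mu_+-\int f\,d\mu_-|\ge 2\alpha\max|f|$.

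\textbf{Step 2 (construction).} Pick a fixed ``bulk'' probability measure $\nu$ on $\cS$; any point of $\cS$ works, and I would take $\nu=\mu_+$ for convenience. Pick a mixing parameter $\gamma\in(0,1)$ to be chosen later. Let $\mD_\pm$ be diagonal matrices whose empirical spectral measures approximate $(1-\gamma)\nu+\gamma\mu_\pm$, with integer multiplicities obtained by rounding. Set $\mA_\pm=\mU\mD_\pm\mU^\top$ with $\mU$ Haar. The rounding perturbs each atom by at most one eigenvalue, i.e.\ $O(t)$ eigenvalues in total. This can be absorbed (or the multiplicities matched exactly) once $n\gtrsim t^4/\alpha$; this is where that hypothesis enters. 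Since $\norm{f(\mA_\pm)}_{\mathrm F}\le\sqrt n\max|f|$, the trace gap is
\[
|\tr f(\mA_+)-\tr f(\mA_-)|\gtrsim \gamma n\alpha\max|f|\ \ge\ \gamma\alpha\sqrt n\,\norm{f(\mA)}_{\mathrm F}.
\]
An estimator meeting the stated accuracy for both matrices therefore distinguishes them with probability $\ge 2/3$, by Yao's principle applied to the uniform mixture over $\pm$.

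\textbf{Step 3 (indistinguishability; the main obstacle).} I must show that the transcript of any deterministic adaptive algorithm making $k\le\lfloor t/2\rfloor$ queries has TV distance $<1/3$ between $\mA_+$ and $\mA_-$. Using rotation invariance, as in \cref{thm:hidden-haar}, I would reduce adaptive queries to block Lanczos. The transcript is a measurable function of the first $k$ block-tridiagonal Lanczos coefficients of $\mA$ started from the queried frame, which are in turn functions of the Gram blocks $\mat{\Omega}^\top\mA^{j}\mat{\Omega}$ with $j\le 2k\le t$ and $\mat{\Omega}$ a Haar frame of width $\le k$.

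Each such entry equals $n\cdot m_j$ plus a fluctuation of order $\sqrt n$, where $m_j$ is the $j$-th empirical moment. The means agree exactly because the moments up to $t$ match. The laws of the fluctuations depend on moments up to order $2t$, which differ, but only through the $\gamma$-weighted part. I would compare the two fluctuation laws via a Gaussian/CLT approximation for Haar quadratic forms, with a Berry--Esseen-type error $\mathrm{poly}(t)/\sqrt n$; this error is controlled by $n\gtrsim t^6$. I would then bound the TV (or KL) distance between two Gaussians whose covariances differ by $O(\gamma)$ in $O(t^2)$ coordinates, which should give $O(\gamma t^{3})$ or similar after accounting for the Lanczos conditioning.

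Choosing $\gamma\asymp t^{-3}$ makes the TV distance a small constant and leaves a gap of order $\alpha\sqrt n/t^3\cdot\norm{f(\mA)}_{\mathrm F}$, which is exactly the stated accuracy threshold. Getting a clean, explicit TV bound uniformly over adaptive strategies is the step I expect to be hardest. If the CLT route is too lossy, I would fall back on a direct change-of-measure computation on the joint density of the Lanczos coefficients (Dumitriu--Edelman-style tridiagonal models).
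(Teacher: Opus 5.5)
Your outline has the same architecture as the paper's proof: a dual signed measure on $t+2$ nodes, mixing with a bulk, Haar rotation, reduction of adaptive queries to block Krylov and hence to the bilinear data $g_i^\top\mA^j g_{i'}$ with $j\le t$, a Gaussian comparison, and $\gamma\asymp t^{-3}$. However, your choice of bulk in Step~2 is fatal.

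\textbf{Why the bulk $\nu=\mu_+$ (or a point mass) fails.} A nonzero signed measure on $t+2$ nodes that annihilates all polynomials of degree $\le t$ is unique up to scaling and alternates in sign along the ordered nodes. So $\mu_+$ and $\mu_-$ have \emph{disjoint} supports, each with $s\le\lceil (t+2)/2\rceil$ points. With $\nu=\mu_+$, every eigenvalue of $\mA_+$ lies in the publicly known set $\mathrm{supp}\,\mu_+$, while $\mA_-$ has about $\gamma n$ eigenvalues outside it.

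An algorithm can then compute $\mA g,\ldots,\mA^{\ell}g$ with $\ell=\lceil s/2\rceil\approx t/4$. From inner products it forms $m_q=g^\top\mA^q g$ for $q\le 2\ell$, and tests whether $(m_q)_{q\le 2\ell}$ lies in the span of the Vandermonde vectors $(\lambda^q)_{q\le 2\ell}$, $\lambda\in\mathrm{supp}\,\mu_+$. The test always passes for $\mA_+$. For $\mA_-$ it fails almost surely: $2\ell+1\ge s+1$ puts every other node's Vandermonde vector outside that span, and the weights that $g$ places on the eigenspaces have a continuous joint law. So the two transcripts are perfectly distinguishable with far fewer than $\lfloor t/2\rfloor$ matvecs.

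In the language of your Step~3, the Gaussian covariance for $\mA_+$ is singular in directions where that of $\mA_-$ is not. No bound of the form ``covariances differ by $O(\gamma)$'' can then control the TV distance. A point-mass bulk fails in the same way, so it is not true that any point of $\cS$ works.

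\textbf{What is missing.} You need a bulk under which both spectra use the \emph{same} $t+2$ nodes, every node receives multiplicity $\gtrsim n/t$ in both matrices, and the two multiplicities at each node differ by only a small \emph{relative} amount. The paper secures this in \cref{implem:poly-inapprox-nonzero}. It mixes the dual weights with the uniform measure on the nodes and with the average of the two dual weight vectors, which gives $y_i,\tilde y_i\ge n/(2(t+2))$ and $|y_i-\tilde y_i|\le 4\eta y_i$.

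The paper then compares node by node rather than through a pooled CLT:
\begin{enumerate}[label=(\roman*)]
  \item Write $g_i^\top\mD^q g_{i'}=\sum_j\lambda_j^q[\mW^{(j)}]_{ii'}$ with independent $\mW^{(j)}\sim\Wishart(b,N_j)$.
  \item Replace $\tilde\mW^{(j)}$ by $\tilde\mW^{(j)}+(y_j-\tilde y_j)\mI$. By exact moment matching of the unrounded weights, this leaves every order $q\le t$ unchanged. It also re-centers each cluster so that the means differ only by the $O(1)$ rounding error. (The pooled means do not agree exactly after rounding, contrary to your Step~3.)
  \item A Wishart-versus-GOE bound and one-dimensional Gaussian TV bounds then give $\TV\lesssim b^{3/2}t^{3/2}/\sqrt n+\eta b^2 t$.
\end{enumerate}
Working per eigenvalue cluster also means the paper never has to control the severely ill-conditioned pooled Vandermonde-type covariance that your route would face.
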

The result \cref{thm:black-box-matvec-lower-bound} stated in the text follows as a direct corollary.

\Cref{thm:black-box-matvec-lower-bound-generic} shows that for any function \(f\) that is hard to approximate using a degree-\(t\) polynomial, estimating \(\tr(f(\mA))\) to additive error \(\norm{f(\mA)}_{\mathrm F}\) requires at least \(\Omega(t)\) matvecs.
The proof is a direct generalization and systematization of the argument of 
\ifcolt 
\citet[Sec.~5]{chewi2024query}.
\else 
Chewi et al. \citep[Sec.~5]{chewi2024query}.
\fi

\begin{remark}[One-sided and two-sided]
    This section concerns algorithms applied to a symmetric matrix \mA, for which there is no distinction between one-sided and two-sided matrix--vector algorithms.
    If a linear algebra problem is hard to solve by a matrix--vector algorithm for a symmetric matrices, then it is also hard to solve for general matrices by two-sided matrix--vector algorithms.
\end{remark}

\subsection{Review: Total variation distance} \label{sec:tv}

We begin with a brief review of the total variation distance and its role in proving computational hardness.
This material is standard, but it may be less familiar to those in numerical analysis.

Given two probability measures $\mu$ and $\nu$ on a common space, the \emph{total variation (TV) distance} \citep[Def.~1.1]{canonne_topics_2022} is the maximum difference in probability between the two measures
\begin{equation*}
    \TV(\mu,\nu) \coloneqq \sup_{\cE} | \mu(\cE) - \nu(\cE) |.
\end{equation*}
As the name suggests, the TV distance is a metric on the set of probability measures.
The TV distance quantifies how easy it is to determine which of two distributions a sample is drawn from:
\begin{importedtheorem}[Neyman--Pearson lemma; \protect{\ifcolt\citet[Lem.~1.4]{canonne_topics_2022}\else Canonne \cite[Lem.~1.4]{canonne_topics_2022}\fi}] \label{impthm:tv-and-distinguishability}
    Let \(\mu\) and \(\nu\) be probability distributions.
    Select one of these two distributions with equal probability, and draws a sample \(x\) from said distribution.
    Any algorithm that observes only \(x\) and guesses which distribution \(x\) was drawn from can succeed with probability at most \(\frac12(1+\TV(\mu,\nu))\).
\end{importedtheorem}

An important and elementary result is the \emph{data processing inequality} \citep[Fact~1.1]{canonne_topics_2022}, which states for any function $f$,
\begin{equation} \label{eq:data-processing}
    \TV(f(x),f(y)) \le \TV(x,y).
\end{equation}
That is, any way of processing the data (i.e., applying a function $f$) can only make it harder it is to distinguish a draw of the random variable $x$ from a draw of the random variable $y$.

A final useful result is the \emph{coupling lemma}.
A joint distribution \((x,y)\) is a coupling of \(\mu\) and \(\nu\) if \(x\) is marginally distributed as \(\mu\) and \(y\) is marginally distributed as \(\nu\).
Total variation exactly measures the maximum probability under which a coupling can have \(x=y\):

\begin{importedlemma}[Couplings lemma; \protect{\ifcolt\citet[Prop.~4.7]{levin2017markov}\else Levin and Peres \cite[Prop.~4.7]{levin2017markov}\fi}]
    \label{implem:tv-coupling}
    Let \(\mu\) and \(\nu\) be probability distributions.
    Then,
    \[
        \TV(\mu,\nu) = \inf\bigl\{\Pr\{x \neq y\} ~:~ (x,y) \text{ is a coupling of } \mu \text{ and } \nu \bigr\}.
    \]
\end{importedlemma}

\subsection{Ingredient 1: Arbitrary matrix--vector algorithms can't beat randomized block Krylov} \label{sec:arbitrary-cant-beat-rbki}

An important class of matrix--vector algorithms are randomized block Krylov algorithms, which generate random vectors and perform matvecs exclusively in order to build (a subset of) the \emph{block Krylov sequence}
\begin{equation} \label{eq:traditonal-block-Krylov}
    \BlockKrylov(\mA;\, b,t) \coloneqq (\mA^j \vg_i : 1\le i \le b, \,0 \le j \le t) \quad \text{for } \vg_1,\ldots,\vg_b \sim \cN(\vzero,\Id).
\end{equation}
The parameter
\(b\) is called the \emph{block size} and \(t\) is called the \emph{depth} of the Krylov subspace.
Randomized block Krylov algorithms\footnote{Note that this definition of randomized block Krylov algorithms is broader than standard usage, including as special cases both randomized \emph{single-vector} Krylov algorithms ($b=1$) and non-adaptive randomized algorithms ($t=1$).} have been shown to be optimal for many tasks in linear algebra \citep{simchowitz2018tight,BHSW20,meyer_hutch_2021,bakshi_krylov_2023}, and they are believed to be optimal for many more \citep{bakshi_low-rank_2022,bakshi_krylov_2023}.
The starting point for this analysis is the following powerful theorem of 
\ifcolt
\cite{chewi2024query}
\else 
Chewi \etal \citep{chewi2024query}, 
\fi
which shows that, when working with a rotationally invariant random matrix, an arbitrary matrix--vector algorithm performing $\ell$ matvecs can be simulated using a randomized block Krylov algorithm with block size \(\ell\) and depth \(\ell+1\).

\begin{importedtheorem}[Reduction to block Krylov; 
\protect{\ifcolt\citet[Lem. 5.16]{chewi2024query}\else Chewi \etal \cite[Lem. 5.16]{chewi2024query}\fi}] \label{impthm:from-general-to-rbk}
    Let \(\mU\in\bbR^{n \times n}\) be a Haar-random orthogonal matrix and $\mD \in \bbR^{n\times n}$ be a diagonal matrix, and set \(\mA = \mU\mD\mU^\top\).
    Suppose that an adaptive deterministic algorithm \cA computes matrix--vector products \(\mA\vx_1,\ldots,\mA\vx_\ell\) where \(\ell^2 < n\).
    Let \(\vg_1,\ldots,\vg_\ell \sim \cN(\vec0,\mI)\), and introduce the \warn{triangular} block Krylov sequence
\begin{equation*}
        \TriBlockKrylov(\mA;\,\ell) \coloneqq (\mA^j\vg_i : i\ge 1,~ j\ge 0,~ i+j \le \ell + 1).
    \end{equation*}
Then there exists a deterministic map \(F\), depending only on the algorithm $\cA$, such that
    \[
        F\big(\TriBlockKrylov(\mA;\,\ell)\,\big)
        \overset{\mathrm{dist}}{=} (\mA\vx_i : 1\le i \le \ell ).
    \]
\end{importedtheorem}

This result shows that, in distribution, any adaptive deterministic sequence of $\ell$ matrix--vector products can be obtained from the \warn{triangular} block Krylov sequence $\TriBlockKrylov(\mA;\,\ell)$, which is a subset of the traditional block Krylov sequence $\BlockKrylov(\mA; \,\ell,\ell+1)$ defined above in \cref{eq:traditonal-block-Krylov}.

\subsection{Ingredient 2: From block Krylov to bilinear power sequences} \label{sec:rbki-to-bilinear-powers}

Our strategy will be to generate two matrices $\mA$ and $\tilde\mA$ for which the triangular block Krylov sequences are close in total variation (TV) distance, but for which $\tr(f(\mA))$ and $\tr(f(\tilde\mA))$ are far apart.
To aid in this task, we introduce another auxiliary object, the \emph{bilinear power sequence}
\begin{equation} \label{eq:bilinear-powers}
    \BilinearPower(\mA;\, b, t) \coloneqq ( \vg_i^\top \mA^j \vg_{i'}^{\vphantom{\top}} : 1\le i,i'\le b \text{ and } 0 \le j \le t) \quad \text{for } \vg_1,\ldots,\vg_b \sim \cN(\vzero,\Id). 
\end{equation}
The following lemma shows that if the bilinear power sequences of diagonal matrices $\mD$ and $\tilde\mD$ are close in TV distance, then the block Krylov sequences of random rotations $\mA = \mU\mD\mU^\top$ and $\tilde\mA = \mU \tilde\mD\mU^\top$ generated by a Haar-random orthogonal matrix \mU will be close as well.

\begin{importedlemma}[Reduction to bilinear powers; similar to \protect{\ifcolt\citet[Lem.~5.8]{chewi2024query}\else Chewi \etal \cite[Lem.~5.8]{chewi2024query}\fi}] \label{implem:from-rbk-to-bilpower}
    Let \(\mD,\tilde\mD\) be diagonal matrices, and fix integers $b,t\in \bbN$.
    In addition, let \mU be a Haar-random orthogonal matrix, and set $\mA \coloneqq \mU \mD\mU^\top, \tilde\mA \coloneqq \mU \tilde\mD\mU^\top$.
Then 
\begin{multline} \label{eq:reduction-to-bilinear-powers-conclusion}
        \TV(\BlockKrylov(\mA;\, b,t),\BlockKrylov(\tilde\mA;\, b,t))
        \ifcolt \\ \fi \leq
        \TV(\BilinearPower(\mD;\, b, 2t),\BilinearPower(\tilde\mD;\, b, 2t)). 
    \end{multline}
\end{importedlemma}
\begin{proof}
    Let $\gamma$ denote the right-hand side of \cref{eq:reduction-to-bilinear-powers-conclusion}.
    By the coupling lemma (\cref{implem:tv-coupling}), there exist a coupling between standard Gaussian vectors $\vg_1,\ldots,\vg_b\sim \cN(\vzero,\Id)$ and $\tilde\vg_1,\ldots,\tilde\vg_b\sim \cN(\vzero,\Id)$ such that,
with probability \(1-\gamma\),
    \begin{align*}
\vg_i^\top \mD^j \vg_{i'} = \tilde \vg_i \tilde\mD^j \tilde\vg_{i'}
        \qquad
        &\text{for all } i,i'\in[b] \text{ and } j \in\{0,\ldots,2t\}.
\intertext{
    This condition implies that
    }
\langle \mD^j\vg_i, \mD^{j'} \vg_{i'} \rangle
        = \langle \tilde\mD^j \tilde\vg_i, \tilde\mD^{j'} \tilde\vg_{i'} \rangle
        \qquad
        &\text{for all } i,i'\in[b] \text{ and } j,j' \in\{0,\ldots,t\}.
\end{align*}
This above condition asserts that the \(\mD^j\vg_i\) vectors are merely a rotation of the \(\tilde\mD^j\tilde\vg_{i}\) vectors.
    That is, for some orthogonal matrix \(\mQ\in\bbR^{n \times n}\), it holds that
    \(
        \mD^j\vg_i = \mQ \cdot \tilde\mD^j\tilde\vg_i
    \) for all $i$ and $j$.
    Next, we define
    \(
        \vz_i \defeq \mU\vg_i
\text{ and }
        \tilde\vz_i \defeq \mU\mQ\vg_i,
    \)
    which are both distributed as \(\cN(\vec0,\mI)\) vectors, and set $\mA \coloneqq \mU \mD\mU^\top$ and $\tilde\mA \coloneqq \mU \mQ \tilde \mD \mQ^\top \mU^\top$ for a Haar-random matrix $\mA$.
    The matrix $\tilde\mA$ has the same distribution as $\mU\tilde\mD\mU^\top$, and it is independent of $\tilde \vz_1,\ldots,\tilde\vz_b$.
    Further, with probability at least $1-\gamma$,
    \[
        \mA^j\vz_i
        = \mU \mD^j \mU^\top \mU \vg_i
        = \mU \mD^j \vg_i
        = \mU \mQ \tilde \mD^j \tilde \vg_i
        = \mU \mQ \tilde \mD^j \mQ^\top \mU^\top \cdot \mU \mQ \tilde\vg_i
        = \tilde\mA^j \tilde\vz_i.
    \]
Therefore, we have manifested a coupling under which \(\BlockKrylov(\mA,b,t)\) and \(\BlockKrylov(\tilde\mA,b,t)\) are equal with probability $\ge 1-\gamma$.
    The comparison \cref{eq:reduction-to-bilinear-powers-conclusion} follows by \cref{implem:tv-coupling}.
\end{proof}

Combining the ingredients to this point, we obtain a general hardness result.

\begin{corollary}[Hardness from bilinear powers; similar to \protect{\ifcolt\citet[Lem.~5.8]{chewi2024query}\else Chewi \etal \cite[Lem.~5.8]{chewi2024query}\fi}]
    \label{cor:diag-dtv-to-matvec-dtv}
    Let \(\mD\) and \(\tilde\mD\) be diagonal matrices, fix parameters $b,t \in \bbN$, and suppose that the bilinear power sequences are close in TV distance:
\begin{equation*}
        \TV(\BilinearPower(\mD; \, b, t), \BilinearPower(\tilde\mD; \, b, t)) \leq \nicefrac16.
    \end{equation*}
Let \mU be a Haar-random orthogonal matrix, and define $\mA \coloneqq \mU\mD\mU^\top$ and $\tilde\mA \coloneqq \mU\tilde\mD\mU^\top$.
    Any matrix--vector algorithm that can distinguish $\mA$ from $\tilde\mA$ with probability at least $\nicefrac{2}{3}$ must compute at least $\min \left\{ b-1, \lfloor t/2\rfloor\right\}$ $\gtrsim \min \{b,t\}$ matvecs.
\end{corollary}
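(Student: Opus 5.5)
The plan is to reduce an arbitrary algorithm to a function of a triangular block Krylov sequence, bound that sequence by a full block Krylov sequence, and then pass to bilinear powers. Let $\ell$ denote the number of matvecs the algorithm makes. Assume for contradiction that $\ell < \min\{b-1,\lfloor t/2\rfloor\}$. By Yao's principle it suffices to treat deterministic adaptive algorithms. We may also assume $\ell^2<n$. Otherwise we pad: if $\ell^2 \ge n$ then $\ell \gtrsim \sqrt n$, and the degenerate case can be absorbed into the constant, or one embeds $\mD,\tilde\mD$ in a larger dimension.

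First, apply \cref{impthm:from-general-to-rbk} to both $\mA=\mU\mD\mU^\top$ and $\tilde\mA=\mU\tilde\mD\mU^\top$. Both are rotationally invariant because $\mU$ is Haar. Using the same deterministic map $F$, the transcript $(\mA\vx_i)_{i\le\ell}$ is distributed as $F(\TriBlockKrylov(\mA;\ell))$, and likewise for $\tilde\mA$. The sequence $\TriBlockKrylov(\mA;\ell)$ involves at most $\ell$ starting vectors and powers up to $\ell$. It is therefore a coordinate projection of $\BlockKrylov(\mA;\,\ell,\ell)$, and hence of $\BlockKrylov(\mA;\,b,\lfloor t/2\rfloor)$ since $\ell\le b$ and $\ell\le\lfloor t/2\rfloor$.

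Next, the algorithm's output is a deterministic function of its transcript. Applying the data processing inequality \cref{eq:data-processing} twice gives
\[
\TV(\text{output on }\mA,\text{output on }\tilde\mA)\le \TV\bigl(\BlockKrylov(\mA;b,\lfloor t/2\rfloor),\BlockKrylov(\tilde\mA;b,\lfloor t/2\rfloor)\bigr).
\]
\Cref{implem:from-rbk-to-bilpower} bounds the right-hand side by $\TV(\BilinearPower(\mD;b,2\lfloor t/2\rfloor),\BilinearPower(\tilde\mD;b,2\lfloor t/2\rfloor))$. This sequence is a projection of $\BilinearPower(\cdot;b,t)$, so by data processing again it is at most $\nicefrac16$.

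Finally, by the Neyman--Pearson lemma (\cref{impthm:tv-and-distinguishability}), the success probability of distinguishing is at most $\frac12(1+\frac16)=\frac7{12}<\frac23$. This is a contradiction. The $\gtrsim\min\{b,t\}$ form follows trivially whenever $b,t\ge2$.

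I expect the main obstacle to be bookkeeping the index ranges in the triangular sequence. The bound $b-1$ rather than $b$ likely comes from how the algorithm's internal randomness, or the final output step, consumes one extra Gaussian vector, so I would check whether an extra starting vector is needed and take $\ell\le b-1$ accordingly. The side condition $\ell^2<n$ also needs checking.
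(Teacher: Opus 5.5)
Your argument is correct and is essentially the paper's proof. Both use the same chain: Yao's principle, \cref{impthm:from-general-to-rbk}, data processing from the triangular to a full block Krylov sequence, \cref{implem:from-rbk-to-bilpower}, and Neyman--Pearson.

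\textbf{Where $b-1$ comes from.} Your claim that $\TriBlockKrylov(\mA;\ell)$ involves at most $\ell$ starting vectors is false. It contains the $\ell+1$ starting vectors $\vg_1,\ldots,\vg_{\ell+1}$ (the terms with $j=0$) and has depth $\ell$. So it is a subsequence of $\BlockKrylov(\mA;\ell+1,\ell)$, not of $\BlockKrylov(\mA;\ell,\ell)$. This extra starting vector, not internal randomness or an output step, is exactly the source of the $b-1$ in the statement. Your chain survives only because you assumed the strict inequality $\ell<b-1$, which gives $\ell+1\le b$ anyway. The paper instead takes $\ell=\min\{b-1,\lfloor t/2\rfloor\}$ directly, for which $\ell+1\le b$ and $2\ell\le t$.

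\textbf{The side condition $\ell^2<n$.} The paper's proof does not verify this either. However, your proposed patches do not work. Embedding $\mD,\tilde\mD$ in a larger dimension produces a different pair of random matrices, not the given $\mA,\tilde\mA$. Capping at $\sqrt n$ only yields the weaker bound $\min\{b-1,\lfloor t/2\rfloor,\lceil\sqrt n\,\rceil\}$. It is cleaner to treat $\ell^2<n$ as an implicit hypothesis. It holds automatically in the paper's only application, where $b=t$, $\ell\le t/2$, and $n\gtrsim t^6$.
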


\begin{proof}
    By Yao's minimax principle \citep[Sec.~2.2]{motwani_randomized_1995}, we can restrict our attention to deterministic algorithms.
    Consider any matrix--vector algorithm which can make $\ell$ matvecs, and let $\cT = (\mA\vx_i : 1\le i \le \ell)$ and $\tilde\cT = (\tilde\mA\tilde \vx_i : 1\le i \le \ell)$ denote transcripts of matvecs produced by the algorithm on \mA and $\tilde\mA$, respectively.
    By the data processing inequality \cref{eq:data-processing}, the reduction to triangular block Krylov (\cref{impthm:from-general-to-rbk}), the observation that triangular block Krylov is a subset of a general block Krylov method, and the reduction to bilinear power sequences (\cref{implem:from-rbk-to-bilpower}), it holds that
    \begin{align*}
        \TV(\cT,\tilde\cT)
        &\leq \TV(\TriBlockKrylov(\mA; \, \ell),\TriBlockKrylov(\tilde\mA; \, \ell)) \tag{\cref{impthm:from-general-to-rbk}} \\
        &\leq \TV(\BlockKrylov(\mA; \, \ell+1, \ell),\BlockKrylov(\tilde\mA; \, \ell+1, \ell)) \tag{Data Processing} \\
        &\leq \TV(\BilinearPower(\mD; \, \ell+1,2\ell),\BilinearPower(\tilde\mD; \, \ell+1,2\ell)). \tag{\cref{implem:from-rbk-to-bilpower}}
    \end{align*}
Choose $\ell \coloneqq \min \{ b-1, \lfloor t/2\rfloor\}$.
    By the hypothesis, the right-hand side is bounded by a sufficiently small positive constant.
    Ergo, $\TV(\cT,\tilde\cT) < \nicefrac16$, and we conclude by \cref{impthm:tv-and-distinguishability} that any matrix--vector algorithm for distinguishing between $\mA$ and $\tilde\mA$ using $\ell$ matvecs must succeed with probability less than $\nicefrac{2}{3}$.
\end{proof}

\subsection{Ingredient 3: Constructing a finite set of eigenvalues} \label{sec:eigenvalue-construction}

To complete the proof, all we need is two diagonal matrices $\mD$ and $\tilde\mD$ for which the bilinear power sequences \cref{eq:bilinear-powers} are similar (in TV distance) but where the traces $\tr(f(\mD))$ and $\tr(f(\tilde\mD))$ are far apart.
We need to construct these matrices by only using the fact that \(f\) is inapproximable on some domain \cS.
Since \cS may have infinitely many entries, our first step is to discretize \cS into a much smaller finite set, from which we will extract the eigenvalues of \(\mD\) and \(\tilde\mD\).

\begin{importedtheorem}[Inapproximability on a finite set; \protect{\ifcolt\citet[Thm.~1.14]{rivlin_introduction_1981}\else Rivlin \cite[Thm.~1.14]{rivlin_introduction_1981}\fi}] \label{prop:discretized-domain}
    Let \(f\) be continuous and \((t,\alpha)\)-inapproximable on a bounded set \(\cS\subseteq\bbR\).
    Then there exists values \[\lambda_1,\ldots,\lambda_{t+2}\in\cS\] such that \(f\) is inapproximable on \(\cS' \defeq \{\lambda_1,\ldots,\lambda_{t+2}\}\).
\end{importedtheorem}

Strictly speaking, 
\ifcolt 
\cite{rivlin_introduction_1981}
\else the reference \cite{rivlin_introduction_1981}
\fi 
treats only the case when $\cS$ is an interval or a finite set, but the same proof carries over without issue to general bounded sets $\cS$ provided one uses an appropriately general version of the Chebyshev equioscillation theorem \citep[Thm.~5.1]{devore1993constructive}.

\subsection{Ingredient 4: Constructing eigenvalue multiplicities} \label{sec:eigenvalue-multiplicities}

To complete the proof, all we need is two diagonal matrices \(\mD\) and \(\tilde\mD\) whose bilinear power sequences \cref{eq:bilinear-powers} are similar in TV distance, but whose spectral sums \(\tr(f(\mD))\) and \(\tr(f(\tilde\mD))\) are far apart.
To do so, we shall takes the values \(\lambda_1,\ldots,\lambda_{t+2}\) generated in the last section and construct \(\mD\) and \(\tilde\mD\) by including each number \(\lambda_i\) with an appropriate multiplicity.

We construct these multiplicities in steps.
First, in \cref{implem:poly-inapprox-duality}, we use the inapproximability property to find fractional multiplicities which look similar to polynomial functions, but distinguish the function $f$.
Then, in \cref{implem:poly-inapprox-nonzero}, we modify these multiplicities slightly to ensure other desirable properties, such as ensuring that each number $\lambda_i$ receives some non-negligible mass and that the multiplicities for $\mD$ and $\tilde\mD$ are close in relative distance.
Finally, in \cref{cons:indistinguishable-pair}, we round these fractional multiplicities to integers.
We begin with this program below by constructing fractional multiplicities that distinguish the function $f$.

\begin{lem}[Eigenvalue multiplicities I; variant of \protect{\ifcolt\citet[Prop.~5.5]{chewi2024query}\else Chewi \etal \cite[Prop.~5.5]{chewi2024query}\fi}]
    \label{implem:poly-inapprox-duality}
    Suppose the function \(f\) is \((t,\alpha)\)-inapproximable on \(\{\lambda_1,\ldots,\lambda_{t+2}\}\), and fix \(n \ge 1\).
    Then, there exists non-negative weights \(x_1,\ldots,x_{t+2}\) and \(x_1',\ldots,x_{t+2}'\) such that
    \begin{enumerate}[label=(\alph*)]
        \item \textbf{Mass.} The total mass is \(\sum_{i=1}^{t+2} x_i = \sum_{i=1}^{t+2} \tilde x_i = n\). \label{item:mass}
        \item \textbf{Not distinguished by polynomials.} For $j \in [t]$, it holds that \(\sum_{i=1}^{t+2} x_i \lambda_i^j = \sum_{i=1}^{t+2} x_i \lambda_i^j\). \label{item:indistinguishable-poly}
        \item \textbf{Distinguished by $f$.} It holds that \(\sum_{i=1}^{t+2} x_i f(\lambda_i) - \sum_{i=1}^{t+2} \tilde x_i f(\lambda_i) \geq 2 \alpha n \cdot \max_{i \in [t+2]} |f(\lambda_i)|\).
    \end{enumerate}
\end{lem}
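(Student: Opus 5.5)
The plan is to use LP duality for best uniform approximation on the finite set $\cS' = \{\lambda_1,\ldots,\lambda_{t+2}\}$. (Statement (b) contains an evident typo; it should read $\sum_i x_i\lambda_i^j = \sum_i \tilde x_i \lambda_i^j$ for $j=0,\ldots,t$, where the $j=0$ case is the mass condition (a).) Write $M \defeq \max_i |f(\lambda_i)|$.

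First, we dualize the best-approximation problem. Consider the linear program
\[
E^\star \defeq \min_{c\in\bbR^{t+1}} \max_{i\in[t+2]} \Bigl| f(\lambda_i) - \sum_{j=0}^t c_j\lambda_i^j\Bigr|.
\]
By inapproximability, $E^\star \ge \alpha M$. Its dual, obtained by standard LP duality or equivalently by the Hahn--Banach characterization of best approximation on a finite set, reads
\[
\max\Bigl\{ \sum_i w_i f(\lambda_i) \;:\; \sum_i |w_i| \le 1,\ \sum_i w_i \lambda_i^j = 0 \text{ for } j=0,\ldots,t \Bigr\}.
\]
Strong duality holds because the primal is feasible and finite, so the dual has value $E^\star$. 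To see the duality concretely, note that for any feasible $w$ and any polynomial $p$ of degree at most $t$ we have $\sum_i w_i f(\lambda_i) = \sum_i w_i (f-p)(\lambda_i) \le \max_i |f-p|(\lambda_i)$. Attainment of equality is the content of the dual. In fact, with $t+2$ points the orthogonality constraints determine $w$ up to scaling, namely $w_i \propto 1/\prod_{k\ne i}(\lambda_i-\lambda_k)$ (divided differences), so $w$ can even be written explicitly.

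Next, we extract the weights. Take an optimal $w$, and by scaling assume $\sum_i |w_i| = 1$. Because $\sum_i w_i = 0$, the positive and negative parts satisfy $\sum_i w_i^+ = \sum_i w_i^- = \tfrac12$. We set
\[
x_i \defeq 2n\,w_i^+, \qquad \tilde x_i \defeq 2n\,w_i^-.
\]
Both families are nonnegative with total mass $n$, which gives (a). The moment conditions $\sum_i w_i \lambda_i^j = 0$ become $\sum_i x_i \lambda_i^j = \sum_i \tilde x_i\lambda_i^j$, which gives (b). Finally,
\[
\sum_i (x_i - \tilde x_i) f(\lambda_i) = 2n \sum_i w_i f(\lambda_i) = 2n E^\star \ge 2\alpha n M,
\]
which gives (c).

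The only delicate point is to justify strong duality and the normalization $\sum_i |w_i| = 1$ cleanly. This is a finite-dimensional LP with a feasible, bounded primal, so standard LP duality suffices. A degenerate case is when the $\lambda_i$ are not distinct, or when the optimum is zero. Inapproximability with $\alpha>0$ rules out the zero optimum. Points may be assumed distinct by discarding duplicates and padding with zero weights.
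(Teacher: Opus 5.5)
Your proof is correct and is essentially the paper's LP-duality argument run from the other side. The paper takes the weight LP over nonnegative pairs $(x,\tilde x)$ as the primal and identifies its dual as best uniform approximation on the nodes; you dualize the approximation problem, and your split $x = 2n\,w^+$, $\tilde x = 2n\,w^-$ is a feasible point of that same weight LP. You also correctly say that \emph{strong} duality is needed; the paper's text says ``weak duality,'' which only bounds $L$ from above.

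One small nit: the zero optimum is not excluded when $f$ vanishes at every node. In that case take any nonzero $w$ with $\sum_i w_i\lambda_i^j = 0$ for $j=0,\ldots,t$ (one exists, since there are $t+2$ unknowns and $t+1$ equations), normalize it to $\sum_i |w_i| = 1$, and (c) holds trivially.
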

\begin{proof}
    The proof is by linear programming duality.
    Set $L$ to be the maximum value of $\sum_{i=1}^{t+2} x_i f(\lambda_i) - \sum_{i=1}^{t+2} \tilde x_i f(\lambda_i)$ over all $\{x_i\}$ and $\{\tilde x_i\}$ satisfying \ref{item:mass} and \ref{item:indistinguishable-poly}.
    The constraints \ref{item:mass} and \ref{item:indistinguishable-poly} are linear, so $L$ is the optimal value of the following linear program:
    \[
        \max_{\bar\vx \in \bbR^{2(t+2)}} \vc^\top\bar\vx
        \qquad\qquad
        \text{subject to }
        \mM\bar\vx=\vb,~\bar\vx\geq0,
    \]
    where
    \[
        \bar\vx = \bmat{x_1 \\ \vdots \\ x_{t+2} \\ x_1' \\ \vdots \\ x_{t+2}'},
        \:\:
        \vc = \bmat{f(\lambda_1) \\ \vdots \\ f(\lambda_{t+2}) \\ -f(\lambda_1) \\ \vdots \\ -f(\lambda_{t+2})},
        \:\:
        \mM = \bmat{
            1 & \cdots & 1 & 1 & \cdots & 1 \\
            1 & \cdots & 1 & -1 & \cdots & -1 \\
            \lambda_1 & \cdots & \lambda_{t+2} & -\lambda_1 & \cdots & -\lambda_{t+2} \\
            \vdots & \ddots & \vdots & \vdots & \ddots & \vdots \\
            \lambda_1^t & \cdots & \lambda_{t+2}^t & -\lambda_1^t & \cdots & -\lambda_{t+2}^t
        },
        \:\:
        \vb = \bmat{
            2n \\ 0 \\ 0 \\ \vdots \\ 0
        }.
    \]
    Let \(\mV = \big[\lambda_i^j\big]_{i\in[t+2],j\in\{0,\ldots,t\}} \in \bbR^{(t+2)\times(t+2)}\) be the Vandermonde matrix of degree \(t\) defined on the nodes \(\lambda_1,\ldots,\lambda_{t+2}\), and let \(\ve\in\bbR^{t+2}\) be the all-ones vector, so that \(\mM^\top = \sbmat{\ve & \mV \\ \ve & -\mV}\).
    Then, the dual problem is
    \[
        \min_{\vy\in\bbR^{t+2}} \vb^\top\vy
        \qquad\qquad
        \text{subject to }
        \mM^\top\vy \geq \vc.
    \]
We partition \(\vy = \sbmat{w \\ \vy_{\mathrm{bot}}}\) so that \([\mV\vy_{\mathrm{bot}}]_i = p(\lambda_i)\) for some polynomial \(p\) with coefficients determined by \(\vy_{\mathrm{bot}}\).
    The constraint of the dual problem then becomes
    \[
        \mM^\top\vy =
        \bmat{
            \ve & \mV \\ \ve & -\mV
        } \bmat{
            w \\ \vy_{\mathrm{bot}}
        }
        = \bmat{w + p(\lambda_1) \\ \vdots \\ w + p(\lambda_{t+2}) \\ w - p(\lambda_1) \\ \vdots \\ w - p(\lambda_{t+2})}
        \geq
        \bmat{
            f(\lambda_1) \\ \vdots \\ f(\lambda_{t+2}) \\ -f(\lambda_1) \\ \vdots \\ -f(\lambda_{t+2})
        }.
    \]
    Notice that the inequalities above are equivalent to saying that \(|p(\lambda_i) - f(\lambda_i)| \leq w\) for all \(i\).
    Next, since \(\vb^\top\vy = 2n w\), we can rewrite the dual problem as an optimization over polynomials \(p\) and the number \(w\):
    \[
        \min_{\mM^\top\vy\geq\vc} \vb^\top\vy
        = \min_{\substack{\deg(p)\leq t \\ |p(\lambda_i)-f(\lambda_i)|\leq w} } 2nw
        = \min_{\deg(p)\leq t} \max_{i\in[t+2]} |p(\lambda_i) - f(\lambda_i)|
        \geq 2n\alpha \cdot \max_{i \in [t+2]} |f(\lambda_i)|.
    \]
The inequality is inapproximability of the function $f$.
    By weak duality, we conclude that $L\ge 2n\alpha \cdot \max_{i \in [t+2]} |f(\lambda_i)|$, as desired.
\end{proof}

Next, we modify the multiplicities to ensure they are uniformly lower-bounded and close in relative distance.

\begin{corollary}[Eigenvalue multiplicities II; variant of \protect{\ifcolt\citet[Cor.~5.6]{chewi2024query}\else Chewi \etal \cite[Cor.~5.6]{chewi2024query}\fi}]
    \label{implem:poly-inapprox-nonzero}
    Fix a parameter \(\eta \in (0,1/2)\).
    Let \(f\) be \((t,\alpha)\)-inapproximable on \(\{\lambda_1,\ldots,\lambda_{t+2}\}\), and fix \(n\ge 1\).
    Then, there exists non-negative weights \(y_1,\ldots,y_{t+2}\) and \(y_1',\ldots,y_{t+2}'\) such that
    \begin{enumerate}[label=(\alph*)]
        \item \textbf{Mass.} The total mass is \(\sum_{i=1}^{t+2} y_i = \sum_{i=1}^{t+2} \tilde y_i = n\). \label{item:eigenvalue-multiplicities-2-mass}
        \item \textbf{Not distinguished by polynomials.} For \(j\in[t]\), it holds that \(\sum_{i=1}^{t+2} y_i \lambda_i^j = \sum_{i=1}^{t+2} \tilde y_i \lambda_i^j\). \label{item:eigenvalue-multiplicities-2-poly}
        \item \textbf{Distinguished by $f$.} \(\sum_{i=1}^{t+2} y_i f(\lambda_i) - \sum_{i=1}^{t+2} \tilde y_i f(\lambda_i) \geq \eta \alpha n \cdot \max_{i \in [t+2]} |f(\lambda_i)|\).\label{item:distinguished-f-middle}
        \item \textbf{Uniformly lower-bounded.} For every $i \in [t+2]$, it holds that \(y_i \geq \frac{n}{2(t+2)}\) and \(\tilde y_i \geq \frac{n}{2(t+2)}\). \label{item:bounded-from-zero}
        \item \textbf{Close in relative distance.} For every $i \in [t+2]$, it holds that \(|y_i - \tilde y_i| \leq 4\eta \cdot y_i\).\label{item:close-relative-distance}
    \end{enumerate}
\end{corollary}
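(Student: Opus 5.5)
We will build the new weights as explicit convex combinations of three ingredients: the weights $x_i,\tilde x_i$ from \cref{implem:poly-inapprox-duality} (reading its item \ref{item:indistinguishable-poly} in the intended form $\sum_i x_i\lambda_i^j=\sum_i \tilde x_i\lambda_i^j$), their symmetrized sum $x_i+\tilde x_i$, and the uniform weight $u_i \defeq n/(t+2)$. The idea is that the uniform part and the symmetrized part are shared between $y$ and $\tilde y$. The uniform part gives the lower bound \ref{item:bounded-from-zero}. The symmetrized part makes every $y_i$ large relative to $|x_i-\tilde x_i|$, and this is what gives \ref{item:close-relative-distance} without losing a factor of $t$.

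\textbf{Construction.} Set $c\defeq(1-\eta)/4$ and $d\defeq \eta/2$, so that $2c+d=\tfrac12$. Define
\[
    y_i \defeq \tfrac12 u_i + c\,(x_i+\tilde x_i) + d\,x_i,
    \qquad
    \tilde y_i \defeq \tfrac12 u_i + c\,(x_i+\tilde x_i) + d\,\tilde x_i .
\]
All terms are non-negative, so the new weights are non-negative.

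\textbf{Verification.} We check the five items in turn.
\begin{enumerate}[label=(\alph*)]
    \item The total mass is $\tfrac n2 + 2cn + dn = n$ for both sequences.
    \item For $j\in[t]$, the shared parts contribute identical moments. The remaining difference is $d\sum_i(x_i-\tilde x_i)\lambda_i^j=0$ by \cref{implem:poly-inapprox-duality}\ref{item:indistinguishable-poly}.
    \item The $f$-difference is $d\sum_i(x_i-\tilde x_i)f(\lambda_i)\ge d\cdot 2\alpha n\max_i|f(\lambda_i)| = \eta\alpha n\max_i|f(\lambda_i)|$.
    \item Since every term is non-negative, $y_i,\tilde y_i\ge \tfrac12 u_i = \tfrac{n}{2(t+2)}$.
    \item We have $|y_i-\tilde y_i| = d|x_i-\tilde x_i|\le d(x_i+\tilde x_i)$. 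Also $y_i\ge c(x_i+\tilde x_i)$. Hence $|y_i-\tilde y_i|\le (d/c)\,y_i = \tfrac{2\eta}{1-\eta}\,y_i\le 4\eta\, y_i$, where the last step uses $\eta<1/2$.
\end{enumerate}

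\textbf{Main obstacle.} The only delicate point is \ref{item:close-relative-distance}. The naive mixture $y=(1-\eta')u+\eta' x$ forces $\eta'\lesssim \eta/t$, because $|x_i-\tilde x_i|$ can be of order $n$ while $u_i$ is only $n/(t+2)$. That choice would then weaken \ref{item:distinguished-f-middle} by a factor of $t$. Including the common term $c(x_i+\tilde x_i)$ resolves this: it dominates $|x_i-\tilde x_i|$ coordinatewise and costs nothing in \ref{item:indistinguishable-poly} or \ref{item:distinguished-f-middle}.
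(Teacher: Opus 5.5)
Your proposal is correct and is exactly the paper's construction written in different notation. Expanding your weights gives $y_i = \tfrac{1+\eta}{4}x_i + \tfrac{1-\eta}{4}\tilde x_i + \tfrac{n}{2(t+2)}$ (and symmetrically for $\tilde y_i$), which is the paper's $\tfrac{1+\eta}{2}w_i + \tfrac{1-\eta}{2}w_i'$ with $w_i = \tfrac12\bigl(x_i + \tfrac{n}{t+2}\bigr)$; the $\tfrac14$ in the paper's definition of $w_i$ is a typo, since the mass condition would otherwise fail. Your item-by-item verification, in particular the bound $y_i \ge \tfrac{1-\eta}{4}(x_i+\tilde x_i)$ behind (e), supplies the check the paper leaves as ``readily verified.''
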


\begin{proof}
    Let $\{x_i\}$ and $\{\tilde x_i\}$ be the weights furnished by \cref{implem:poly-inapprox-duality}, and set
    \[
        w_i = \tfrac14(x_i + \tfrac{n}{t+2})
        \qquad\qquad
        \text{and}
        \qquad\qquad
        w_i' = \tfrac12(\tilde x_i + \tfrac{n}{t+2})
    \]
and
\[
        y_i = \tfrac{1+\eta}{2}w_i + \tfrac{1-\eta}{2}w_i'
        \qquad\qquad
        \text{and}
        \qquad\qquad
        \tilde y_i = \tfrac{1-\eta}{2}w_i + \tfrac{1+\eta}{2}w_i'.
    \]
The five desired properties are readily verified.
\end{proof}

Using this corollary, we are now ready to define a pair of diagonal matrices which we will use to prove hardness.

\begin{construction}[Indistinguishable pair] \label{cons:indistinguishable-pair}
    Given a function $f$ that is $(t,\alpha)$-inapproximable on $\{\lambda_1,\ldots,\lambda_{t+2}\}$ and an integer $n\in \bbN$, we construct a pair of diagonal matrices $\mD,\tilde \mD \in \bbR^{n\times n}$ that cannot be distinguish from their bilinear power sequences.
Namely, let $\{y_i\}_{i=1}^{t+2}$ and $\{\tilde y_i\}_{i=1}^{t+2}$ be the weights furnished by \cref{implem:poly-inapprox-nonzero}.
    Round each weight to integers $\{N_i\}_{i=1}^{t+2}$ and $\{\tilde N_i\}_{i=1}^{t+2}$ by taking either the floor or the ceiling, while ensuring that $\sum_{i=1}^{t+2} N_i = \sum_{i=1}^{t+2} \tilde N_i = n$.
    Finally, define 
\begin{equation*}
        \mD \coloneqq \diag(\underbrace{\lambda_1,\ldots,\lambda_1}_{\text{$N_1$ times}},\ldots,\underbrace{\lambda_{t+2},\ldots,\lambda_{t+2}}_{\text{$N_{t+2}$ times}}) \quad \text{and} \quad \mD \coloneqq \diag(\underbrace{\lambda_1,\ldots,\lambda_1}_{\text{$\tilde N_1$ times}},\ldots,\underbrace{\lambda_{t+2},\ldots,\lambda_{t+2}}_{\text{$\tilde N_{t+2}$ times}}).
    \end{equation*}
Finally, let $\mU$ be a Haar-random orthogonal matrix, and set $\mA = \mU\mD\mU^\top$ and $\tilde\mA = \mU\tilde\mD\mU^\top$.
\end{construction}

\subsection{Ingredient 5: From bilinear power sequences to Wishart matrices}

Now, we need to bound the total variation distance between the bilinear power sequences sequences \cref{eq:bilinear-powers} for $\mD$ and $\tilde\mD$.
To do so, we use the following formula:

\begin{definition}[Wishart matrix]
    Let $\vg_1,\ldots,\vg_b \sim \cN(\vzero,\Id)$ be independent Gaussian vectors of length $N$, and consider a $b\times b$ matrix \mW with entries $[\mW]_{ii'} \coloneqq \langle \vg_i, \vg_{i'}\rangle$.
    Then the random matrix $\mW$ is said to follow \emph{Wishart distribution} with $N$ \emph{degrees of freedom}.
    We write $\mW\sim\Wishart(b,N)$.
\end{definition}

\begin{prop}[Bilinear power sequences: Wishart formula] \label{prop:bilinear-wishart-formula}
    Let $\mD \in \bbR^{n\times n}$ be a diagonal matrix constructed by repeating numbers $\lambda_1,\ldots,\lambda_{t+2}\in\bbR$ with multiplicities $N_1 + \cdots + N_{t+2} = n$, and let $\vg_1,\ldots,\vg_b \sim \cN(\vzero,\Id)$.
    Then there exist \warn{independent} Wishart matrices 
\begin{equation*}
        \mW^{(1)} \sim \Wishart(b,N_1),\ldots,\mW^{(t+2)} \sim \Wishart(b,N_{t+2})
    \end{equation*}
for which
\begin{equation*}
        \vg_i^\top \mD^q \vg_{i'}^{\vphantom{\top}} = \sum_{j=1}^{t+2} \lambda_j^q \big[\mW^{(j)}\big]_{ii'} \quad \text{for every } q\ge 0 \text{ and } 1\le i,i' \le b.
    \end{equation*}
\end{prop}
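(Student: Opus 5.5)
The plan is to partition the coordinates according to which eigenvalue they carry and read off the Wishart matrices directly. Since $\mD$ is diagonal with the value $\lambda_j$ repeated on a block of coordinates, let $I_j \subseteq [n]$ be the index set of size $N_j$ on which $\mD$ equals $\lambda_j$. The sets $I_1,\ldots,I_{t+2}$ partition $[n]$. For each Gaussian vector $\vg_i$, write $\vg_i^{(j)} \in \bbR^{N_j}$ for its restriction to the coordinates in $I_j$.

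Next we define $\mW^{(j)}$ as the $b\times b$ Gram matrix of the restricted vectors, $[\mW^{(j)}]_{ii'} \coloneqq \langle \vg_i^{(j)}, \vg_{i'}^{(j)}\rangle$. The distributional claims follow from elementary properties of Gaussians.
\begin{itemize}[leftmargin=*]
\item Each $\vg_i^{(j)}$ is a standard Gaussian vector in $\bbR^{N_j}$, because the coordinates of $\vg_i$ are i.i.d.\ $\cN(0,1)$.
\item For fixed $j$, the vectors $\vg_1^{(j)},\ldots,\vg_b^{(j)}$ are independent, so $\mW^{(j)} \sim \Wishart(b,N_j)$ by definition.
\item For different $j$, the collections $\{\vg_i^{(j)}\}_{i\le b}$ are functions of disjoint sets of independent coordinates. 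Hence they are mutually independent, and so are the $\mW^{(j)}$.
\end{itemize}
If some $N_j=0$, we take $\mW^{(j)}=\mat 0$, interpreting $\Wishart(b,0)$ as the point mass at zero; this makes the statement hold trivially for that index.

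The identity itself is a direct computation. For any $q\ge 0$ and any $i,i'$,
\begin{equation*}
\vg_i^\top \mD^q \vg_{i'} = \sum_{k=1}^n [\mD]_{kk}^q [\vg_i]_k [\vg_{i'}]_k = \sum_{j=1}^{t+2} \lambda_j^q \sum_{k\in I_j} [\vg_i]_k[\vg_{i'}]_k = \sum_{j=1}^{t+2}\lambda_j^q [\mW^{(j)}]_{ii'}.
\end{equation*}
This holds simultaneously for all $q$ with the same matrices $\mW^{(j)}$, which is what the statement requires.

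There is no real obstacle in this argument. The only point needing care is that the $\lambda_j$ are distinct nodes and the blocks $I_j$ are disjoint. If two $\lambda_j$ coincided, the identity would still hold with the same block decomposition, so distinctness is not actually needed.
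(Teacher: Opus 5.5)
Your proposal is correct and follows the same route as the paper: restrict each $\vg_i$ to the coordinate block where $\mD$ equals $\lambda_j$, take $\mW^{(j)}$ to be the Gram matrix of these restrictions, and read the identity off the diagonal structure. You are more explicit than the paper about why the $\mW^{(j)}$ are independent (they depend on disjoint sets of coordinates), and you also handle the $N_j = 0$ edge case, which the paper leaves implicit.
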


\begin{proof}
    Let $\vg_1,\ldots,\vg_b \sim \cN(\vzero,\Id)$ be a Gaussian vectors, and let $\vg_i^{(j)} \in \bbR^{N_j}$ denote the restriction of $\vg_i$ to the dimensions corresponding to where eigenvalue $\lambda_j$ appears in $\mD$.
    Similarly, let $\tilde \vg_i^{(j)}$ denote the restriction of $\vg_i$ to the positions in which eigenvalue $\lambda_j$ appears in $\tilde\mD$.
    Observe that
\begin{equation*}
        \vg_i^\top \mD^q \vg_{i'}^{\vphantom{\top}} = \sum_{j=1}^{t+2} \lambda_j^q \big\langle \vg_i^{(j)},\vg_{i'}^{(j)}\big\rangle \quad \text{for every } q\ge 0 \text{ and } 1\le i,i' \le b.
    \end{equation*}
In particular, we conclude that bilinear power sequences of \mD only depend on the \warn{Gram matrices}
\begin{equation*}
        \mW^{(j)} \coloneqq \Big[ \big\langle \vg_i^{(j)},\vg_{i'}^{(j)}\big\rangle \Big]_{1\le i, i' \le b} \sim \Wishart(b,N_i) \quad \text{for } j=1,\ldots,t+2.
    \end{equation*}
of the collection of restricted Gaussian vectors $\vg_i^{(j)}$, which follow a Wishart distribution.
    The stated conclusion follows.
\end{proof}

This result shows us that the bilinear power sequence for a diagonal matrix $\mD$ is generated by a family of Wishart matrices $\mW^{(1)},\ldots,\mW^{(t+2)}$ with degrees of freedom $N_1,\ldots,N_{t+2}$.
As such, given two diagonal matrices $\mD$ and $\tilde\mD$ with corresponding Wishart matrices $\{\mW^{(j)}\}$ and $\{\tilde\mW^{(j)}\}$, it is tempting to try directly bounding the TV distance between the bilinear power sequences by the TV distance between the families of Wishart matrices, i.e.,
\begin{multline} \label{eq:bilinear-from-wishart-1}
    \TV(\BilinearPower(\mD; \, b,t),\BilinearPower(\tilde\mD; \, b,t)) \ifcolt \\ \fi \le \TV\big(\big(\mW^{(1)},\ldots,\mW^{(t+2)}\big),\, \big(\tilde\mW^{(1)},\ldots,\tilde\mW^{(t+2)}\big)\big).
\end{multline}
But there is a problem: As the families $\{\mW^{(j)}\}$ and $\{\tilde\mW^{(j)}\}$ have different numbers of degrees of freedom, the TV distances between these sequences can be large.
As such, the inequality \cref{eq:bilinear-from-wishart-1} is not powerful enough for our purposes.
Fortunately, a simple variation of the trick works.

\begin{corollary}[From bilinear power sequences to Wishart matrices] \label{cor:from-bilinear-to-wishart}
    Construct diagonal matrices $\mD,\mD' \in \bbR^{n\times n}$ via \cref{cons:indistinguishable-pair}, and define independent Wishart matrices 
\begin{equation*}
        \mW^{(i)} \sim \Wishart(b,N_i) \quad \text{and} \quad \tilde\mW^{(i)} \sim \Wishart(b,\tilde N_i) \quad \text{for } i=1,\ldots,t+2.
    \end{equation*}
and define \(\widehat \mW^{(i)} = \tilde\mW^{(i)} + (y_i - \tilde y_i)\mI\).
    Then,
    \begin{multline}
        \label{eq:bilinear-from-wishart-better}
        \TV(\BilinearPower(\mD; \, b,t),\BilinearPower(\tilde\mD; \, b,t)) 
        \ifcolt \\ \fi \le \TV\bigl((\mW^{(1)},\ldots,\mW^{(t+2)}),\, (\widehat\mW^{(1)},\ldots,\widehat\mW^{(t+2)})\bigr).
    \end{multline}
\end{corollary}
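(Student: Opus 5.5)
The plan is to write both bilinear power sequences as the image of a family of Wishart-type matrices under one deterministic linear map $\Phi$, and then apply the data processing inequality \cref{eq:data-processing}. Define $\Phi$ on $(t+2)$-tuples of $b\times b$ matrices by
\begin{equation*}
    \Phi\bigl(\mM_1,\ldots,\mM_{t+2}\bigr) \coloneqq \Bigl( \textstyle\sum_{j=1}^{t+2} \lambda_j^q \,[\mM_j]_{ii'} \;:\; 1\le i,i'\le b,\ 0\le q\le t\Bigr).
\end{equation*}
By \cref{prop:bilinear-wishart-formula} applied to $\mD$, the sequence $\BilinearPower(\mD;\,b,t)$ has the same distribution as $\Phi(\mW^{(1)},\ldots,\mW^{(t+2)})$. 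Likewise, by \cref{prop:bilinear-wishart-formula} applied to $\tilde\mD$, the sequence $\BilinearPower(\tilde\mD;\,b,t)$ has the same distribution as $\Phi(\tilde\mW^{(1)},\ldots,\tilde\mW^{(t+2)})$.

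The key step is to show that the deterministic shifts defining $\widehat\mW^{(i)}$ are invisible to $\Phi$. Because $\Phi$ is linear,
\begin{equation*}
    \Phi\bigl(\widehat\mW^{(1)},\ldots,\widehat\mW^{(t+2)}\bigr) = \Phi\bigl(\tilde\mW^{(1)},\ldots,\tilde\mW^{(t+2)}\bigr) + \Bigl( \textstyle\sum_{j=1}^{t+2} \lambda_j^q (y_j-\tilde y_j)\,\delta_{ii'} \Bigr)_{i,i',q},
\end{equation*}
where $\delta_{ii'}$ is the Kronecker delta, i.e.\ the $(i,i')$ entry of $\mI$. For $q=0$ the correction term vanishes by the mass property \cref{implem:poly-inapprox-nonzero}\ref{item:eigenvalue-multiplicities-2-mass}, since $\sum_j y_j = \sum_j \tilde y_j = n$. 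For $1\le q\le t$ it vanishes by the polynomial-indistinguishability property \cref{implem:poly-inapprox-nonzero}\ref{item:eigenvalue-multiplicities-2-poly}. Hence $\Phi(\widehat\mW^{(\cdot)}) = \Phi(\tilde\mW^{(\cdot)})$ identically, and in particular they have the same distribution. This is the point of the ``simple variation of the trick'': the shifts do not change the bilinear powers up to degree $t$, but they can be chosen so that the mean of $\widehat\mW^{(i)}$, namely $(\tilde N_i + y_i - \tilde y_i)\mI$, nearly matches the mean $N_i \mI$ of $\mW^{(i)}$, since $N_i$ and $\tilde N_i$ are roundings of $y_i$ and $\tilde y_i$.

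The conclusion \cref{eq:bilinear-from-wishart-better} then follows from the data processing inequality \cref{eq:data-processing} applied to $\Phi$:
\begin{equation*}
    \TV(\BilinearPower(\mD;\,b,t),\BilinearPower(\tilde\mD;\,b,t)) = \TV\bigl(\Phi(\mW^{(\cdot)}),\Phi(\widehat\mW^{(\cdot)})\bigr) \le \TV\bigl((\mW^{(1)},\ldots,\mW^{(t+2)}),\,(\widehat\mW^{(1)},\ldots,\widehat\mW^{(t+2)})\bigr).
\end{equation*}

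No step here is genuinely hard. The only place requiring care is the degree bookkeeping in the key step: the powers $q$ must run exactly over $0,\ldots,t$, which is the range covered by the mass property together with property \ref{item:eigenvalue-multiplicities-2-poly}. If the sequence were taken to degree $2t$, as in \cref{implem:from-rbk-to-bilpower}, one would instead need the weights to match moments up to degree $2t$, which would be arranged by invoking \cref{implem:poly-inapprox-nonzero} at a larger degree.
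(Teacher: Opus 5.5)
Your proof is correct and follows the paper's argument: via \cref{prop:bilinear-wishart-formula}, you write both bilinear power sequences as the same linear image of the Wishart families. You then note that the shifts $(y_j-\tilde y_j)\mathbf{I}$ contribute nothing for any power $q\in\{0,\ldots,t\}$, by the mass and moment-matching properties of \cref{implem:poly-inapprox-nonzero}, and conclude with the data processing inequality.
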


\begin{proof}
    Introduce Gaussian vectors $\vg_1,\ldots,\vg_b \sim \cN(\vzero,\Id)$.
    By \cref{prop:bilinear-wishart-formula}, there exist independent Wishart matrices $\tilde \mW^{(1)} \sim \Wishart(b,\tilde N_1),\ldots,\tilde \mW^{(t+2)}\sim \Wishart(b,\tilde N_{t+2})$ such that
\begin{equation*}
        \vg_i^\top \tilde \mD^q \vg_{i'}^{\vphantom{\top}} = \sum_{j=1}^{t+2} \lambda_j^q \tilde \mW^{(j)}_{ii'} \quad \text{for } q\ge 0 \text{ and } i,i' \in [b].
    \end{equation*}
By properties \ref{item:eigenvalue-multiplicities-2-mass} and \ref{item:eigenvalue-multiplicities-2-poly} of the weights $\{y_i\}$ and $\{\tilde y_i\}$ generated by \cref{implem:poly-inapprox-nonzero}, it holds that
\begin{equation*}
        \sum_{j=1}^{t+2} \lambda_j^q (y_j - \tilde{y}_j) = 0 \quad \text{for } q\in\{0,\ldots,t\}.
    \end{equation*}
Combining the two previous displays, we obtain that
\begin{equation*}
        \vg_i^\top \tilde \mD^q \vg_{i'}^{\vphantom{\top}} = \sum_{j=1}^{t+2} \lambda_j^q [\tilde \mW^{(j)} + (y_j - \tilde y_j)\Id]_{ii'} \quad \text{for } q\in\{0,\ldots,t\} \text{ and } i,i'\in[b].
    \end{equation*}
The conclusion follows by \cref{prop:bilinear-wishart-formula} and the data processing inequality \cref{eq:data-processing}.
\end{proof}

The inequality \cref{eq:bilinear-from-wishart-better} is more useful than \cref{eq:bilinear-from-wishart-1} since the shifted Wishart matrices $\widehat \mW^{(j)}$ to have nearly the same mean as their cousins $\mW^{(j)}$.
In particular, the means
\begin{equation*}
    \E \big[ \widehat \mW^{(j)} \big] = (y_j - \tilde{y}_j + \tilde N_j) \Id \quad \text{and} \quad \E \big[ \mW^{(j)} \big] = N_j \Id
\end{equation*}
differ by at most
\begin{equation} \label{eq:max-mean-difference}
    \big| (y_j - \tilde{y}_j + \tilde N_j) - N_j \big| \le |y_j - N_j| + |\tilde y_j - \tilde N_j| \le 2.
\end{equation}
In the final inequality, we recognize that $N_j$ and $\tilde N_j$ are obtained by rounding $y_j$ and $\tilde y_j$ up or down to the nearest integer.

\subsection{Ingredient 6: Total variation distance between Wishart matrices}

To instantiate \cref{cor:from-bilinear-to-wishart}, we need to bound the total variation distance between two Wishart matrices.
We state a general version of such a result here:

\begin{lem}[Total variation distance between Wishart matrices] \label{lem:tv-dist-wishart}
    Let 
\begin{equation*}
        \mW \sim \Wishart(b,N) \quad \text{and} \quad \tilde \mW \sim \Wishart(b,\tilde N)
    \end{equation*}
be Wishart matrices of common dimension $b$ and degrees of freedom $N$ and $\tilde N$.
    For any $\beta\in\bbR$, it holds that
\begin{equation*}
        \TV(\mW, \tilde \mW + \beta \Id) \lesssim b\cdot \frac{|\tilde N + \beta - N|}{\sqrt{N}} + b^{3/2} \cdot \frac{1}{\sqrt{N}}  + b^2 \cdot \frac{|\tilde N - N|}{N}
    \end{equation*}
\end{lem}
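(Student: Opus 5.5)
Split the distance with the triangle inequality, using the intermediate matrix $\tilde\mW' \sim \Wishart(b,N)$ shifted by the same amount:
\begin{equation*}
    \TV(\mW,\tilde\mW+\beta\Id) \le \TV\bigl(\mW, \mW' + (\tilde N+\beta-N)\Id\bigr) + \TV\bigl(\mW'+(\tilde N - N)\Id, \tilde\mW\bigr),
\end{equation*}
where $\mW' \sim \Wishart(b,N)$ is an independent copy. The second comparison is translation invariant: shifting both arguments by $(\beta + N - \tilde N)\Id$ does not change TV. The two terms isolate the two effects, namely a pure mean shift of size $\delta \coloneqq \tilde N+\beta-N$ at fixed degrees of freedom, and a change of degrees of freedom with matched means.

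\textbf{Mean shift.} Bound $\TV(\mW,\mW+\delta\Id)$ via the normalized matrix $\mZ \coloneqq (\mW-N\Id)/\sqrt{N}$. By the CLT, $\mZ$ is approximately a GOE-type Gaussian matrix with diagonal entries of variance $2$ and off-diagonal entries of variance $1$. Under this normalization the shift becomes $(\delta/\sqrt N)\Id$, which is a shift of the $b$ diagonal coordinates. For a Gaussian this gives TV $\lesssim \sqrt{b}\,|\delta|/\sqrt N \le b|\delta|/\sqrt N$. To avoid the Gaussian approximation error, I would instead bound TV by Hellinger or KL directly using the Wishart density $\propto \det(\mW)^{(N-b-1)/2}e^{-\tr\mW/2}$ (valid for $N\ge b$; otherwise the bound is trivial since the right side is $\gtrsim 1$). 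The log-likelihood ratio of a shift is $\tfrac{N-b-1}{2}\log\det(\Id+\delta\mW^{-1}) - b\delta/2$. Its second-order expansion, with $\mW^{-1}\approx \Id/N$, concentrates around $O(b\delta/\sqrt N)$ plus a lower-order $b\delta^2/N$ term. Pinsker's inequality, or a direct first-moment bound on the likelihood ratio, then gives the $b|\delta|/\sqrt N$ term. Fluctuations of $\log\det$ of order $b^{3/2}/\sqrt N$ can simply be absorbed into the second term of the claimed bound.

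\textbf{Degree-of-freedom change.} Assume $\tilde N \ge N$ by symmetry, and write $\tilde\mW = \mW' + \mW''$ with $\mW''\sim\Wishart(b,\tilde N - N)$ independent. The task is then to compare $\mW'+\mW''$ with $\mW'+(\tilde N-N)\Id$. Conditional on $\mW''$, this is a shift of $\mW'$ by the random matrix $\mW''-(\tilde N-N)\Id$. Extending the mean-shift computation to general symmetric shifts $\mat\Delta$ (the density ratio depends on $\tr\mat\Delta$ and $\log\det(\Id+\mat\Delta\mW^{-1})$) gives, to first order, a bound $\lesssim \|\mat\Delta\|_{\mathrm F}/\sqrt N$. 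That first-order bound is too weak here, since $\E\|\mat\Delta\|_{\mathrm F}\approx b\sqrt{\tilde N-N}$. The fix is to use the mixture: averaging over a mean-zero random shift cancels the first-order term, leaving a second-order contribution of about $\E\|\mat\Delta\|_{\mathrm F}^2/N \approx b^2|\tilde N-N|/N$. This matches the third term of the claim.

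I expect the \textbf{main obstacle} to be this mixture step, which requires a rigorous second-order control of the Wishart likelihood ratio, presumably through $\chi^2$ divergence, together with handling the residual non-Gaussianity that yields the $b^{3/2}/\sqrt N$ term. A cleaner alternative is to compare both Wisharts to a common Gaussian approximation via a quantitative Wishart-to-GOE TV bound of order $b^{3/2}/\sqrt N$. On that Gaussian proxy, the mean shift and the small covariance rescaling (relative change $|\tilde N-N|/N$ across $b^2$ coordinates) are explicit Gaussian TV computations. The proxy error is exactly the $b^{3/2}/\sqrt{N}$ term, which suggests this is the route the stated bound comes from.
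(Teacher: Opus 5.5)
Your closing ``cleaner alternative'' is the paper's proof, and it is the part of your proposal that actually goes through. The paper writes $\TV(\mW,\tilde\mW+\beta\Id)\le\TV(\mW,\mY)+\TV(\mY,\tilde\mY+\beta\Id)+\TV(\tilde\mY,\tilde\mW)$, where each Wishart gets its \emph{own} shifted GOE proxy ($\mY$ built with $N$, $\tilde\mY$ with $\tilde N$), not a single common one. It bounds the two outer terms by the imported Wishart--GOE estimate of order $b^{3/2}/\sqrt N$. It then uses independence of the upper-triangular entries of $\mY$ and $\tilde\mY+\beta\Id$ to sum one-dimensional Gaussian TV bounds. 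The $b$ diagonal entries contribute $b\bigl(|\tilde N-N|/N+|\tilde N+\beta-N|/\sqrt N\bigr)$, and the $\binom{b}{2}$ off-diagonal entries contribute $b^2|\tilde N-N|/N$. The outer term for $\tilde\mW$ is really $b^{3/2}/\sqrt{\tilde N}$, but when $\tilde N<N/2$ the third term of the claim is already $\gtrsim 1$, so nothing is lost. No analysis of the exact Wishart density is needed.

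The likelihood-ratio route you lead with is a genuinely different argument, and as written it is not a proof.

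\textbf{Mean shift.} This half looks salvageable if you take KL in the finite direction. For $\delta>0$, compare the law of $\mW+\delta\Id$ against that of $\mW$; the reverse direction is infinite. The first-order term then cancels because $\E\tr\mW^{-1}=b/(N-b-1)$, and roughly $b\delta^2/N$ remains.

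\textbf{Degree-of-freedom change.} This half is only a heuristic, and the tool you suggest fails as set up. The $\chi^2$ divergence of the mixture $\mW'+\mW''\sim\Wishart(b,\tilde N)$ against $\mW'+(\tilde N-N)\Id$ is infinite. The latter is supported on $\{X\succ(\tilde N-N)\Id\}$, while the former charges the whole positive-definite cone. You would have to truncate to an event such as $\lambda_{\min}(\mW')\gtrsim N$ and control the leftover mass. You would then still need to bound mixed moments of exact Wishart density ratios, which is precisely the work you defer.

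Also, ``$\E\|\mat\Delta\|_{\mathrm F}^2/N$'' is not what a mixture $\chi^2$ computation gives. On the Gaussian proxy one gets $\chi^2$ of order $b^2(\tilde N-N)^2/N^2$, i.e.\ TV of order $b|\tilde N-N|/N$. The direct route might buy sharper dependence on $b$, but the stated bound does not need it, so commit to the Gaussian-proxy argument.

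A minor slip: in your first display, the common translation that removes $\beta$ is by $\beta\Id$, not $(\beta+N-\tilde N)\Id$.
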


A bound similar to this appears in the proof of \protect{\citep[Lem.~5.7]{chewi2024query}}.
To prove this lemma, we employ the following results:

\begin{importedlemma}[Wishart vs.\ GOE; \protect{\ifcolt\citet[Thm.~1.3]{meyer2026non}\else Meyer \cite[Thm.~1.2]{meyer2026non}\fi}] \label{implem:wishart-vs-goe}
     Let \(\mY \defeq N \mI + \sqrt{2N} \mG\) be a shift and scaling of the GOE matrix\footnote{Recall that an unnormalized GOE matrix $\mG \sim \operatorname{GOE}(b)$ is a random symmetric $b\times b$ matrix where the entries on the upper triangle $([\mG]_{ii'} : i\le i')$ are independent, mean-zero Gaussian random variables.
    The diagonal entries have variance $2$, and the off-diagonal entries have variance $1$.} \(\mG \sim \operatorname{GOE}(b)\), and let \(\mW \sim \Wishart(b,N)\) be Wishart.
    Then, the total variation between \(\mW\) and \(\mY\) is \(\lesssim b^{3/2} / N^{1/2}\).
\end{importedlemma}

\begin{importedlemma}[TV between Gaussians; \protect{
    \ifcolt\citet[Thm.~1.3]{devroye_total_2022}
    \else Devroye \etal \cite[Thm.~1.3]{devroye_total_2022}\fi }] \label{lem:dtv-gaussian-lazy}
    The TV distance between the one-dimensional Gaussian distributions $\cN(\mu_1,\sigma_1^2)$ and $\cN(\mu_2,\sigma_2^2)$ is at most
\[\TV(\cN(\mu_1,\sigma_1^2),\cN(\mu_2,\sigma_2^2)) \lesssim \frac{|\sigma_2^2 - \sigma_1^2|}{\sigma_1^2} + \frac{|\mu_1 - \mu_2|}{\sigma_1} .\]
\end{importedlemma}

With these ingredients in place, we prove \cref{lem:tv-dist-wishart}.

\begin{proof}[Proof of \cref{lem:tv-dist-wishart}]
    Draw shifted GOE matrices
\begin{equation*}
        \mY \sim N \Id + \sqrt{2N} \operatorname{GOE}(b) \quad \text{and} \quad \tilde \mY \sim \tilde N \Id + \sqrt{2\tilde N} \operatorname{GOE}(b).
    \end{equation*}
By \cref{implem:wishart-vs-goe}, the triangle inequality, and the data processing inequality \cref{eq:data-processing},
\begin{equation}  \begin{split} \label{eq:tv-Y}
        \TV(\mW,\tilde\mW + \beta \Id) &\le \TV(\mW, \mY) + \TV(\mY, \tilde\mY + \beta\Id) + \TV(\tilde \mY, \tilde\mW) \\
        &\lesssim \TV(\mY, \tilde\mY + \beta\Id) + \frac{b^{3/2}}{N^{1/2}}.
    \end{split}
    \end{equation}
The upper triangles of the matrices $\mY$ and $\tilde \mY + \beta \Id$ have independent entries, so we may bound the total variation distance by summing over the matrix entries
\begin{multline*}
        \TV(\mY, \tilde\mY + \beta\Id) \le \sum_{1\le i \le i' \le b} \TV([\mY]_{ii'},[\tilde \mY + \beta \Id]_{ii'}) \\
        \lesssim b \TV(\cN(N,4N),\cN(\tilde N + \beta,4\tilde N)) + b^2 \TV(\cN(0,2N),\cN(0,\tilde 2N)).
    \end{multline*}
The first term comes from the $b$ diagonal entries, and the second term comes from the $\binom b2 \lesssim b^2$ entries on the strictly upper triangular portion of the matrix.
    Applying \cref{lem:dtv-gaussian-lazy}, we conclude that
\begin{multline*}
        \TV(\mY, \tilde\mY + \beta\Id) \lesssim b \cdot \left[\frac{|\tilde N - N|}{N} + \frac{|\tilde N + \beta - N|}{\sqrt{N}}\right] + b^2 \cdot \frac{|\tilde N - N|}{N} \\ \asymp b\cdot \frac{|\tilde N + \beta - N|}{\sqrt{N}} + b^2 \cdot \frac{|\tilde N - N|}{N}.
    \end{multline*}
Combining with \cref{eq:tv-Y} yields the stated conclusion.
\end{proof}

\subsection{Putting it all together}

Let us now combine all of the ingredients we have assembled to prove \cref{thm:black-box-matvec-lower-bound}.
We begin with one final lemma.

\begin{lem}[Hard-to-distinguish matrices; variant of \protect{\ifcolt\citet[Lem.~5.7]{chewi2024query}\else Chewi \etal \cite[Lem.~5.7]{chewi2024query}\fi}]
    \label{implem:krylov-diag-dtv}
    Fix parameter \(\eta \in (0,1)\), a function $f$ that is $(t,\alpha)$-inapproximable on $\{\lambda_1,\ldots,\lambda_{t+2}\}$, and a size $n\in\bbN$.
    Let $\mD$ and $\tilde\mD$ by the matrices furnished by \cref{cons:indistinguishable-pair}.
    Then
\begin{enumerate}[label=(\alph*)]
        \item \textbf{Distinguished by $f$.} \(|\tr(f(\mD)) - \tr(f(\tilde\mD))| \geq (\eta \alpha n - 2(t+2))\cdot  \max_{i \in [t+2]} |f(\lambda_i)|\). \label{item:distinguished-f-conclusion}
        \item \textbf{Bilinear power sequences are close.} \label{item:bil-power-close-conclusion}
        It holds that \label{item:bil-power-close}
\begin{equation*}
            \TV(\BilinearPower(\mD; \, b,t),\BilinearPower(\tilde\mD; \, b,t)) \lesssim \frac{b^{3/2} t^{3/2}}{n^{1/2}} + \eta b^2 t.
        \end{equation*}
\end{enumerate}
\end{lem}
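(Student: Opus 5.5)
Part~\ref{item:distinguished-f-conclusion} follows from the construction and a rounding estimate. Since $\mD$ repeats $\lambda_i$ exactly $N_i$ times, we have $\tr(f(\mD)) = \sum_i N_i f(\lambda_i)$, and similarly $\tr(f(\tilde\mD)) = \sum_i \tilde N_i f(\lambda_i)$. Each $N_i$ is a floor or ceiling of $y_i$, so $|N_i - y_i|\le 1$; likewise $|\tilde N_i - \tilde y_i| \le 1$. Replacing the integer multiplicities by the fractional weights therefore changes the difference of traces by at most
\[
\sum_{i=1}^{t+2}\bigl(|N_i-y_i| + |\tilde N_i - \tilde y_i|\bigr)\,|f(\lambda_i)| \le 2(t+2)\max_i |f(\lambda_i)|.
\]
Combining this with property~\ref{item:distinguished-f-middle} of \cref{implem:poly-inapprox-nonzero}, which bounds the fractional difference $\sum_i (y_i-\tilde y_i) f(\lambda_i)$ from below by $\eta\alpha n \max_i|f(\lambda_i)|$, the reverse triangle inequality gives the claim.

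For part~\ref{item:bil-power-close-conclusion}, I will start from \cref{cor:from-bilinear-to-wishart}. This bounds the TV distance between the bilinear power sequences by the TV distance between the tuples $(\mW^{(i)})_i$ and $(\widehat\mW^{(i)})_i$. Each tuple consists of independent components, so TV subadditivity over product measures reduces the problem to bounding
\[
\sum_{i=1}^{t+2} \TV\bigl(\mW^{(i)},\, \tilde\mW^{(i)} + (y_i-\tilde y_i)\Id\bigr).
\]
I will bound each summand with \cref{lem:tv-dist-wishart}, taking $N=N_i$, $\tilde N = \tilde N_i$, and $\beta = y_i - \tilde y_i$.

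The three terms of \cref{lem:tv-dist-wishart} are handled as follows, using $N_i \ge y_i - 1 \gtrsim n/t$, which comes from property~\ref{item:bounded-from-zero} (valid once $n \gtrsim t^2$ so the rounding loss is negligible).
\begin{enumerate}[label=(\roman*)]
\item By \cref{eq:max-mean-difference}, $|\tilde N_i + \beta - N_i| \le 2$, so the first term is $\lesssim b/\sqrt{N_i} \lesssim b\sqrt{t/n}$.
\item The second term is $\lesssim b^{3/2}/\sqrt{N_i} \lesssim b^{3/2} t^{1/2}/n^{1/2}$.
\item For the third term, property~\ref{item:close-relative-distance} and rounding give $|\tilde N_i - N_i| \le 4\eta y_i + 2$, so $|\tilde N_i - N_i|/N_i \lesssim \eta + t/n$. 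The third term is then $\lesssim b^2(\eta + t/n)$.
\end{enumerate}

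Summing over the $t+2$ indices gives
\[
\lesssim \frac{b^{3/2} t^{3/2}}{n^{1/2}} + \eta b^2 t + \frac{b^2 t^2}{n}.
\]
The first term also absorbs the contribution from (i), since $b \le b^{3/2}$. The main obstacle is the leftover $b^2t^2/n$ from rounding. I plan to absorb it into $b^{3/2}t^{3/2}/n^{1/2}$: their ratio is $b^{1/2}t^{1/2}/n^{1/2}$, which is at most $1$ whenever $bt \le n$. If instead $bt > n$, the claimed right-hand side exceeds a constant and the bound is trivial, because TV distance is always at most $1$. Either way, the stated estimate holds. I also need to handle the same regime caveat for the lower bound $N_i \gtrsim n/t$: if $n \lesssim t^2$, the right-hand side is again $\gtrsim 1$ and the bound holds trivially.
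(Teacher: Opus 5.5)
Your proposal is correct and follows the paper's proof essentially step for step. You handle part (a) by the floor/ceiling rounding estimate combined with property (c) of the multiplicities corollary. For part (b) you use \cref{cor:from-bilinear-to-wishart}, subadditivity over the independent blocks, \cref{lem:tv-dist-wishart} with $\beta = y_i - \tilde y_i$, and the same case split on $bt$ versus $n$ to absorb the $b^2t^2/n$ term. Your explicit treatment of the small-$n$ regime, where $N_i \gtrsim n/t$ might fail but the bound is trivial anyway, is a bit of care the paper leaves implicit.
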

\begin{proof}
Property \ref{item:distinguished-f-conclusion} is easily verified from \cref{implem:poly-inapprox-nonzero}\ref{item:distinguished-f-middle}.
    We now verify \ref{item:bil-power-close}. 
    By \cref{cor:from-bilinear-to-wishart}, we have
\begin{multline*}
        \TV \bigl(\BilinearPower(\mD; \, b,t),\BilinearPower(\tilde\mD; \, b,t)\bigr) \\
        \le \TV\bigl((\mW^{(1)},\ldots,\mW^{(t+2)}),\, (\tilde\mW^{(1)} + (y_1 - \tilde{y}_1) \Id,\ldots,\tilde\mW^{(t+2)} + (y_{t+2} - \tilde{y}_{t+2}) \Id)\bigr) \\
        \le \sum_{j=1}^{t+2} \TV\bigl(\mW^{(j)},\tilde\mW^{(j)} + (y_j - \tilde{y}_j) \Id\bigr).
    \end{multline*}
Invoking \cref{lem:tv-dist-wishart}, we deduce the bound
\begin{align*}
        \TV\bigl(&\BilinearPower(\mD; \, b,t),\BilinearPower(\tilde\mD; \, b,t)\bigr) \\
        &\lesssim \sum_{j=1}^{t+2}\left[ b \,\frac{| (y_j - \tilde{y}_j + \tilde N_j) - N_j | + \sqrt{b}}{\sqrt{N}_j} + b^2 \, \frac{|\tilde N_j - N_j|}{N_j}\right].
    \end{align*}
We bound the first term by invoking the maximum mean difference bound \cref{eq:max-mean-difference} to control the numerator and property \cref{implem:poly-inapprox-nonzero}\ref{item:bounded-from-zero} to control the denominator as \(N_j \gtrsim n/t\), obtaining
\begin{equation*}
        \frac{| (y_j - \tilde{y}_j + \tilde N_j) - N_j | + \sqrt{b}}{\sqrt{N}_j} \lesssim \frac{b^{1/2} t^{1/2}}{n^{1/2}}.
    \end{equation*}
To bound the second term, observe that \cref{implem:poly-inapprox-nonzero}\ref{item:bounded-from-zero} and \cref{implem:poly-inapprox-nonzero}\ref{item:close-relative-distance} give us
\begin{equation*}
        \frac{|\tilde N_j - N_j|}{N_j} \le \frac{|\tilde N_j - \tilde y_j| + |\tilde y_j - y_j| + |N_j - y_j|}{y_j - 1} \lesssim \frac{t}{n} + \eta.
    \end{equation*}
Combining this bound with the previous display, we obtain
\begin{equation} \label{eq:nearly-complete-tv-bound}
        \TV(\BilinearPower(\mD; \, b,t),\BilinearPower(\tilde\mD; \, b,t)) \lesssim \frac{b^{3/2} t^{3/2}}{n^{1/2}} + \frac{b^2 t^2}{n} + \eta b^2 t.
    \end{equation}
    
    We end the proof by observing that we can drop the second term in \cref{eq:nearly-complete-tv-bound}.
    There are two cases: Either $n \le bt$ or $n > bt$.
    In the former case, $b^{3/2} t^{3/2} / n^{1/2} \ge bt \ge 1$, and the bound is vacuous since the TV distance is at most 1.
    Therefore, in this case, the second term can be dropped.
    In the latter case, it holds that $b^2 t^2 / n < b^{3/2} t^{3/2} / n^{1/2}$, and the second term is dominated by the first.
\end{proof}

\begin{proof}[Proof of \cref{thm:black-box-matvec-lower-bound-generic}]
    Let \(\{\lambda_1,\ldots,\lambda_{t+2}\}\) be the eigenvalues generated by \cref{prop:discretized-domain}, and instate $\mD$, $\tilde\mD$, $\mA$, and $\tilde\mA$ as generated by \cref{cons:indistinguishable-pair}.
    Set $b \coloneqq t$ and $\eta \coloneqq \mathrm{c} t^{-3}$ for a sufficiently small absolute constant $\mathrm{c} > 0$.
By the hypothesis $n \gtrsim t^6$ and \cref{implem:krylov-diag-dtv}\ref{item:bil-power-close-conclusion},
\begin{equation*}
        \TV\bigl(\BilinearPower(\mD; \, b,t),\BilinearPower(\tilde\mD; \, b,t)\bigr)
        \lesssim \frac{t^3}{n^{0.5}} + \eta t^3
    \end{equation*}
is smaller than the value \(\nicefrac16\) needed to activate \cref{cor:diag-dtv-to-matvec-dtv}.
    Thus, by \cref{cor:diag-dtv-to-matvec-dtv}, at least $\min\{t-1, \lfloor t/2 \rfloor\} = \lfloor t/2\rfloor$ matvecs are needed to distinguish the matrices $\mA$ and $\tilde\mA$ with probability at least \(\nicefrac23\).
    
To finish the proof, we must show that any method able to computing \(\tr(f(\mA))\) to the prescribed accuracy can distinguish \(\mA\) from \(\tilde\mA\).
    First, note that
    \[
        \norm{f(\mA)}_{\mathrm F}^2
        = \sum_{i=1}^{t+2} N_i |f(\lambda_i)|^2 \le \bigg(\sum_{i=1}^{t+2} N_i\bigg) \cdot \max_{i \in [t+2]} |f(\lambda_i)|^2 = n \cdot \max_{i \in [t+2]} |f(\lambda_i)|^2.
    \]
    Combining this with \cref{implem:krylov-diag-dtv}\ref{item:distinguished-f-conclusion}, we obtain
\begin{equation*}
        |\tr(f(\mA)) - \tr(f(\tilde\mA))|
        \ge \left( \mathrm{c} t^{-3} \alpha n - 2(t+2) \right) \max_{i \in [t+2]} |f(\lambda_i)|
        \ge \left( \mathrm{c} t^{-3} \alpha n - 2(t+2) \right) \frac{\norm{f(\mA)}_{\mathrm F}}{\sqrt{n}}.
    \end{equation*}
By the hypothesis $n\gtrsim t^4/\alpha$, the first term dominates the second, and we conclude that
\begin{equation*}
        |\tr(f(\mA)) - \tr(f(\tilde\mA))| \gtrsim \frac{\alpha\sqrt{n}}{t^3} \norm{f(\mA)}_{\mathrm F}.
    \end{equation*}
Therefore, any algorithm that can compute \(\tilde{\tr}\) such that \(|\tilde{\tr} - \tr(f(\mA))| \lesssim \frac{\alpha \sqrt n}{t^3}\norm{\mA}_{\mathrm F}\) can distinguish \(\mA\) from \(\tilde\mA\).
    We know this task requires $\lfloor t/2 \rfloor$ matvecs, completing the proof.
\end{proof}

 \section{Proof of the hidden Haar theorem} \label{sec:hidden-haar-proof}

We let $\bbO(n)$ denote the set of $n\times n$ orthogonal matrices and \(\Haar(\bbO(n))\) denote the Haar distribution over $\bbO(n)$.
The defining property of the Haar distribution is rotational invariance:
\begin{quote}
    If \(\mQ\sim\Haar(\bbO(n))\) and \(\mU\in\bbO(n)\) is deterministic, then \(\mQ\mU \sim \Haar(\bbO(n))\).
\end{quote}
Rotational invariance can extend to cases in which \mU is not deterministic.
Indeed, suppose $\mU \in \bbO(n)$ is \emph{random} but independent of \mQ.
Then, conditioned on the realization of \mU, we observe that \(\mQ\mU \sim \Haar(\bbO(n))\).
Since the conditional distribution of \(\mQ\mU\) does not depend on \mU itself, we infer the following \emph{extended} form of the rotational invariance property:
\begin{fact}[Extended rotational invariance]
    If \(\mQ\sim\Haar(\bbO(n))\) and \(\mU\in\bbO(n)\) is independent of \mQ, then \(\mQ\mU \sim \Haar(\bbO(n))\) and $\mQ\mU$ is independent of $\mU$.
\end{fact}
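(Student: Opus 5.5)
The statement to prove is the Fact (extended rotational invariance): if $\mQ\sim\Haar(\bbO(n))$ and $\mU\in\bbO(n)$ is independent of $\mQ$, then $\mQ\mU\sim\Haar(\bbO(n))$ and $\mQ\mU$ is independent of $\mU$.

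The plan is to condition on $\mU$ and then integrate out. I will show that for every pair of measurable sets $B\subseteq\bbO(n)$ and $C\subseteq\bbO(n)$,
\[
\Pr\{\mQ\mU\in B,\ \mU\in C\} = \Haar(\bbO(n))(B)\cdot\Pr\{\mU\in C\}.
\]
Taking $C=\bbO(n)$ gives the marginal claim $\mQ\mU\sim\Haar(\bbO(n))$. The product form for general $C$ is exactly independence of $\mQ\mU$ and $\mU$.

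To establish the display, I write the left-hand side as an iterated integral, which Fubini--Tonelli permits because $\mQ$ and $\mU$ are independent. Let $\mu$ be the law of $\mU$ and $h$ the Haar measure. Then the left-hand side equals
\[
\int_C \int \mathbbm{1}\{\mQ\mU_0\in B\}\, h(d\mQ)\,\mu(d\mU_0) = \int_C h(B\mU_0^{\top})\,\mu(d\mU_0).
\]
For each fixed $\mU_0$, deterministic right-invariance of Haar measure (the defining property quoted above) gives $h(B\mU_0^\top)=h(B)$. The integral therefore collapses to $h(B)\mu(C)$, as required.

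The only step requiring any care is measurability: the map $(\mQ,\mU_0)\mapsto \mathbbm{1}\{\mQ\mU_0\in B\}$ must be jointly measurable so that Fubini applies. This holds because matrix multiplication $\bbO(n)\times\bbO(n)\to\bbO(n)$ is continuous, hence Borel. Everything else is routine, so I expect no real obstacle.
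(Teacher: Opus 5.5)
Your argument is correct and is the same as the paper's: condition on $\mU$, use deterministic right-invariance to see that the conditional law of $\mQ\mU$ is Haar, and conclude from the fact that this conditional law does not depend on $\mU$; your Fubini identity $\Pr\{\mQ\mU\in B,\ \mU\in C\}=h(B)\mu(C)$ just makes the paper's informal conditioning step rigorous. One small detail is worth adding: continuity of multiplication gives measurability with respect to the Borel $\sigma$-algebra of $\bbO(n)\times\bbO(n)$, and this coincides with the product $\sigma$-algebra that Fubini requires because $\bbO(n)$ is second countable.
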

This rotational invariance can even hold when this orthogonal matrix \mU depends on some of the columns of \mQ, so long as \mU acts only on a subset of coordinates.

\begin{lem}[Reading a few columns is uninformative]
    \label{lem:haar-first-t-cols}
    Fix \(0\le t \le n\), let \(\mQ \sim \Haar(\bbO(n))\) be a Haar-random orthogonal matrix, and suppose \(\mT \in \bbO(n-t)\) depends deterministically on the first \(t\) columns of \mQ.
    Then \(\mQ \sbmat{\mI_t & \mat0 \\ \mat 0 & \mT} \sim \Haar(\bbO(n))\).
\end{lem}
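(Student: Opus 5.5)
Write $\mQ = \bmat{\mQ_1 & \mQ_2}$ with $\mQ_1 \in \bbR^{n\times t}$ the first $t$ columns and $\mQ_2 \in \bbR^{n\times(n-t)}$ the rest. Also write $\mR \coloneqq \sbmat{\mI_t & \mat0 \\ \mat 0 & \mT}$. Then $\mQ\mR = \bmat{\mQ_1 & \mQ_2\mT}$, and $\mT = \mT(\mQ_1)$. The plan is to condition on $\mQ_1$ and show that the conditional law of $\mQ_2$ is invariant under right-multiplication by any fixed element of $\bbO(n-t)$. Given that invariance, applying the $\mQ_1$-measurable rotation $\mT(\mQ_1)$ leaves the joint law of $(\mQ_1,\mQ_2)$ unchanged, and so $\mQ\mR \overset{\mathrm{dist}}{=} \mQ$.

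\textbf{Step 1: invariance of the joint law.} For a fixed $\mS \in \bbO(n-t)$, the matrix $\sbmat{\mI_t & \mat0\\ \mat0 & \mS}$ is a deterministic orthogonal matrix. By rotational invariance of the Haar measure,
\[
    (\mQ_1,\mQ_2) \overset{\mathrm{dist}}{=} (\mQ_1,\mQ_2\mS) \quad \text{for every fixed } \mS\in\bbO(n-t).
\]

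\textbf{Step 2: the averaging trick.} Let $\mS \sim \Haar(\bbO(n-t))$ be independent of $\mQ$. By Step 1, $(\mQ_1,\mQ_2\mS) \overset{\mathrm{dist}}{=} (\mQ_1,\mQ_2)$. Next consider $(\mQ_1, \mQ_2\mT(\mQ_1)\mS)$. Condition on $\mQ$. The matrix $\mT(\mQ_1)\mS$ is Haar on $\bbO(n-t)$ and independent of $\mQ$, by the extended rotational invariance Fact applied in dimension $n-t$ (conditionally on $\mQ$, $\mT$ is a fixed rotation). Hence
\[
    (\mQ_1, \mQ_2\mT(\mQ_1)\mS) \overset{\mathrm{dist}}{=} (\mQ_1,\mQ_2\mS) \overset{\mathrm{dist}}{=} (\mQ_1,\mQ_2).
\]

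\textbf{Step 3: removing $\mS$.} Set $\mQ' \coloneqq \bmat{\mQ_1 & \mQ_2\mT(\mQ_1)}$. Then $\mQ'$ is orthogonal, its first $t$ columns equal $\mQ_1$, and $\mT(\mQ'_1)=\mT(\mQ_1)$. Step 2 says that $\mQ'\sbmat{\mI&\mat0\\ \mat0&\mS} \overset{\mathrm{dist}}{=} \mQ$.

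I would not try to cancel $\mS$ directly. Instead I would use the decomposition of Haar measure: $\mQ \overset{\mathrm{dist}}{=} \mQ\sbmat{\mI&\mat0\\ \mat0&\mS}$, which characterizes the law as ``law of $\mQ_1$, then a uniformly random completion.'' Concretely, a measurable map $\mQ \mapsto \mQ\mR(\mQ_1)$ is a bijection on $\bbO(n)$ that preserves the fiber over each $\mQ_1$. It is pushed forward through the fiberwise action of $\bbO(n-t)$, and this action is transitive on orthonormal completions of $\mQ_1$. The conditional law of $\mQ_2$ given $\mQ_1$ is therefore the unique invariant probability measure on the completions, that is, the uniform measure. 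This uniqueness holds because the conditional law is invariant under the transitive compact group action by Step 1, and an invariant probability measure on a homogeneous space of a compact group is unique. Right-multiplying by the fixed (given $\mQ_1$) rotation $\mT(\mQ_1)$ preserves this uniform measure. Integrating over $\mQ_1$ gives $\mQ\mR \sim \Haar(\bbO(n))$.

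\textbf{Main obstacle.} The delicate point is making ``conditional law invariant under every fixed $\mS$'' rigorous, since Step 1 gives only unconditional equalities. I would handle it with a test-function argument: for bounded measurable $\phi,\psi$, Step 1 gives $\E[\phi(\mQ_1)\psi(\mQ_2\mS)] = \E[\phi(\mQ_1)\psi(\mQ_2)]$. This identifies regular conditional distributions almost surely for each $\mS$. Separability of $\bbO(n-t)$ and continuity in $\mS$ then upgrade the statement to all $\mS$ simultaneously. The homogeneous-space uniqueness argument then applies.
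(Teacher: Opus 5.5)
Your argument is correct, but it takes a different and heavier route than the paper. You disintegrate. You take the regular conditional law $\mu_{q_1}$ of the trailing $n-t$ columns given the leading $t$. Using test functions, a countable dense subset of $\bbO(n-t)$ and continuity, you get that for almost every $q_1$, $\mu_{q_1}$ is invariant under right multiplication by every $\mat{S}\in\bbO(n-t)$ at once. You then apply the $q_1$-measurable rotation $\mT(q_1)$ fiberwise. For the lemma itself the uniqueness of the invariant measure is not needed, since simultaneous invariance already gives $\E[\phi(\mQ_1)\psi(\mQ_2\mT(\mQ_1))] = \E\bigl[\phi(\mQ_1)\int\psi(\mat{X}\mT(\mQ_1))\,\mu_{\mQ_1}(d\mat{X})\bigr] = \E[\phi(\mQ_1)\psi(\mQ_2)]$.

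The paper uses no conditional laws. By extended rotational invariance it writes $\mQ = \mat{P}\sbmat{\mI_t & \mat0\\ \mat0 & \mat{S}}$ with $\mat{P}\sim\Haar(\bbO(n))$ and $\mat{S}\sim\Haar(\bbO(n-t))$ independent. The first $t$ columns of $\mQ$ are those of $\mat{P}$, so $\mT$ is a function of $\mat{P}$ alone. Hence $\mat{S}\mT$ is Haar and independent of $\mat{P}$, and $\mQ\sbmat{\mI_t & \mat0\\ \mat0 & \mT} = \mat{P}\sbmat{\mI_t & \mat0\\ \mat0 & \mat{S}\mT}$ is Haar.

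The contrast with your Step 2 is which side of $\mT$ the auxiliary Haar factor sits on. On the left it absorbs $\mT$. Your $\mQ_2\mT\mat{S}$ puts it on the right, which is why it could not be cancelled. In fact, the identity $\mQ\overset{\mathrm{dist}}{=}\mQ\sbmat{\mI_t & \mat0\\ \mat0 & \mat{S}}$ that opens your last paragraph is the paper's starting point. Push both sides through the measurable map $\mat{X}\mapsto\mat{X}\sbmat{\mI_t & \mat0\\ \mat0 & \mT(\mat{X})}$, with $\mT$ evaluated on the first $t$ columns of $\mat{X}$. Since $\mat{S}\mT(\mQ_1)$ is again Haar and independent of $\mQ$, the argument finishes immediately.

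The paper's proof is thus two lines and uses only the stated Fact. Yours costs some measure theory but proves more. It shows that the conditional law of the trailing columns of a Haar matrix, given the leading ones, is the uniform law on orthonormal completions; for this stronger statement your uniqueness step is genuinely needed. That is exactly what the paper relies on, without separate argument, for the final ``Further'' clause of the hidden Haar theorem.
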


\begin{proof}
    By extended rotational invariance, the Haar-random matrix \mQ can be realized as a product 
\begin{equation*}
        \mQ = \mQ_1 \bmat{\Id_t & \mat0 \\ \mat0 & \mQ_2} \quad \text{for } \mQ_1 \sim \Haar(\bbO(n)) \text{ and } \mQ_2 \sim \Haar(\bbO(n-t)) \text{ independent}.
    \end{equation*}
The matrix $\mQ_2$ is independent of the first $t$ columns of $\mQ$ and thus is independent of \mT.
    So, by extended rotational invariance, $\mQ_2\mT$ is independent of $\mQ_1$.
    We conclude again by extended rotational invariance that 
\begin{equation*}
        \mQ \bmat{\Id_t & \mat0 \\
        \mat0 & \mT} = \mQ_1 \bmat{\Id_t & \mat0 \\
        \mat0 & \mQ_2\mT} \sim \Haar(\bbO(n)).
        \qedhere
    \end{equation*}
\end{proof}

With this result in place, we can now prove the hidden Haar theorem.

\begin{proof}[Proof of \cref{thm:hidden-haar}]
We begin by building the matrices $\mV$ and $\mU$.
    Assume without loss of generality that the queries \(\{\vz_i\}\) are orthonormal.\footnote{Indeed, let $\vz_1,\ldots,\vz_t$ denote a general linearly independent query vectors, and reduce it to an orthonormal set $\tilde\vz_1,\ldots,\tilde\vz_t$ by Gram--Schmidt orthonormalization.
    Every vector $\vz_i$ is a linear combination of the orthonormal query vectors $\{\tilde\vz_j\}$, so the transcript $\cT = (\vz_1,\mQ\vz_1,\ldots,\mQ\vz_t)$ is a deterministic function of $\tilde\cT = (\tilde\vz_1,\mQ\tilde\vz_1,\ldots,\mQ\tilde\vz_t)$.} 
For each $i=1,\ldots,t$, define the query matrix
    \[
        \mZ_i \coloneqq \flatbmat{\vz_1 & \cdots & \vz_i} \in \bbR^{n \times i},
    \]
    and introduce the response matrix 
    \[
        \mV_i \coloneqq  \flatbmat{\vv_1 & \cdots & \vv_i} = \flatbmat{\mQ\vz_1 & \cdots & \mQ\vz_t} \in \bbR^{n \times i}.
    \]
Use any deterministic procedure to complete $\mZ_i$ into a square orthogonal matrix
    \[
        \mU_i = \flatbmat{\mZ_i^{\vphantom{\perp}} & \mZ_i^\perp} \in \bbO(n).
    \]
    Since the algorithm is deterministic, $\mZ_{i+1}$ (and thus $\mU_{i+1}$) are both fully determined by the first $i$ responses $\mV_i$.
Moreover, since the first \(i\) columns of $\mU_{i+1}$ and $\mU_i$ are the equal, there is an orthogonal matrix \(\mT_i \in \bbO(n-i)\) such that
\begin{equation} \label{eq:U-iterative}
        \mU_{i+1} = \mU_i \bmat{\mI_i & \mat0 \\ \mat0 & \mT_i}.
    \end{equation}
Observe that the matrix $\mT_i$ is also fully determined by the responses \(\mV_i\).
We define $\mU \coloneqq \mU_t$ and $\mV \coloneqq \mV_t$ and observe that
\begin{equation*}
        \mQ\mU = \flatbmat{\mV & \mW} \quad \text{for some } \mW \in \bbR^{n\times (n-t)}.
    \end{equation*}
    
    As desired, $\mU$ and $\mV$ depend only on the transcript \cT.
    Next, we must establish that $\mQ\mU = \mQ\mU_t \sim \Haar(\bbO(n))$.
    To do so, we show by induction that $\mQ\mU_i \sim \Haar(\bbO(n))$ for each $i=0,\ldots,t$.
    The base case $i = 0$ is immediate. 
    Next, we assume that $\mQ\mU_i \sim \Haar(\bbO(n))$ and prove $\mQ\mU_{i+1} \sim \Haar(\bbO(n))$.
    By \cref{eq:U-iterative}, it holds that
\begin{equation*}
        \mQ\mU_{i+1} = (\mQ\mU_i)\bmat{\mI_i & \mat0 \\ \mat0 & \mT_i}.
    \end{equation*}
The matrix $\mT_i$ depends only on $\mV_i$, which consists of the first $i$ columns of $\mQ\mU_i$.
    By \cref{lem:haar-first-t-cols}, we conclude that $\mQ \mU_{i+1} \sim \Haar(\bbO(n))$, as desired.
    
    Because the algorithm is deterministic, there is a one-to-one correspondence between transcripts \cT and response matrices \mV.
    Therefore, conditioning on \cT is the same as conditioning on \mV, and we conclude that \mW has the specified conditional distribution.
\end{proof}

 \section{Proof of the fine-grained lower bound} \label{app:fine-grained}

In this section, we establish \cref{c:fine-grained}, which gives a more fine-grained lower bound on the two-sided matrix--vector complexity of linear systems, as parametrized by a number $k$ of outlying singular values and a reduced condition number $\kappa_k = \sigma_{k+1}(\mA) / \sigma_{\mathrm{min}}(\mA)$.
We use \cref{thm:intro-two-sided-lower-bound} together with a reduction from 
\ifcolt\citet[Lem.~34]{derezinski2025fine}
\else Dereziński \etal \cite[Lem.~34]{derezinski2025fine}\fi
. 

We begin by importing the following lower bound from \citet{chewi2024query}:
\begin{importedtheorem}[Trace-inverse: $\Omega(n)$ bound; \protect{\citet[Thm.~4.2]{chewi2024query}}]
    \label{impthm:chewi-n-lower-bound}
    There is a universal constant \(\delta > 0\) such that
    any algorithm that produces an estimate \(\tilde\tr\) satisfying
    \[
        \bigl|\tilde\tr - \tr(\mA^{-1})\bigr| \leq \frac12 \tr(\mA^{-1})
        \qquad
        \text{with prob. } \geq 1-\delta
    \]
    for all SPD matrices \(\mA\in\bbR^{n \times n}\) requires at least \(\Omega(n)\) matvecs.
\end{importedtheorem}

From this lower bound for the trace-inverse, we derive a lower bound for the linear system problem.

\begin{lem}[Linear systems: $\Omega(n)$ lower bound]
    \label{lem:unconditioned-n-lower-bound}
    Any algorithm that takes as input \vb, makes two-sided matrix--vector queries with \mA, and outputs a vector \(\tilde\vx\) such that
    \begin{equation}
        \norm{\mA\tilde\vx - \vb}_2 \leq \frac1{6n} \norm{\vb}_2
        \qquad
        \text{with prob. } \geq \frac56
        \label{eq:fine-grained-error}
    \end{equation}
    for all SPD matrices \(\mA\in\bbR^{n \times n}\) must use at least \(\Omega(n)\) matvecs.
\end{lem}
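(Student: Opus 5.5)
We reduce to the trace-inverse lower bound of \cref{impthm:chewi-n-lower-bound}, following the same template as the proof of \cref{cor:lin-sys-lb} through \cref{prop:trace-upper-bound}. Suppose an algorithm satisfies \cref{eq:fine-grained-error} using $m$ matvecs. We will turn it into a trace-inverse estimator using $m$ matvecs that achieves relative error $\tfrac12$ with constant probability. Then $m=\Omega(n)$ follows, provided we also match the success probability $1-\delta$.

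\emph{Step 1 (residual to forward error).} Let $\tilde\vx$ be the output and $\vx=\mA^{-1}\vb$. Then
\[
\norm{\tilde\vx-\vx}_2=\norm{\mA^{-1}(\mA\tilde\vx-\vb)}_2\le \norm{\mA^{-1}}_2\cdot\tfrac1{6n}\norm{\vb}_2 .
\]
Since $\mA$ is SPD, $\norm{\mA^{-1}}_2\le\tr(\mA^{-1})$, and also $\norm{\mA^{-1}}_2\le\norm{\mA^{-1}}_{\mathrm F}$. Hence the forward error is at most $\tfrac1{6n}\norm{\mA^{-1}}_{\mathrm F}\norm{\vb}_2$, which is the oracle hypothesis of \cref{prop:trace-upper-bound} with a slightly better constant.

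\emph{Step 2 (trace estimation).} Draw $\vb=\vg\sim\cN(\vzero,\Id)$ and output $\tilde\tr=\vg^\top\tilde\vx$. Then
\[
|\tilde\tr-\tr(\mA^{-1})|\le |\vg^\top\mA^{-1}\vg-\tr(\mA^{-1})|+\norm{\vg}_2\norm{\tilde\vx-\vx}_2 .
\]
For the first term, SPD gives $\Var(\vg^\top\mA^{-1}\vg)=2\norm{\mA^{-1}}_{\mathrm F}^2\le 2\tr(\mA^{-1})^2$. The second term is at most $\tfrac{\norm{\vg}_2^2}{6n}\tr(\mA^{-1})$, and $\norm{\vg}_2^2\le 3n$ with high probability, so it is at most $\tfrac12\tr(\mA^{-1})$ up to constants. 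The Hutchinson term, however, only satisfies a constant-factor bound $|\vg^\top\mA^{-1}\vg-\tr|\le C\tr(\mA^{-1})$, which is not below $\tfrac12$. This is the main obstacle.

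\emph{Step 3 (fixing the constants).} I would average $k=O(1)$ independent Hutchinson samples: run the solver on $k$ independent right-hand sides $\vg_1,\dots,\vg_k$ and return $\frac1k\sum_j \vg_j^\top\tilde\vx_j$. Chebyshev with variance $2\tr(\mA^{-1})^2/k$ gives error $\le\tfrac14\tr(\mA^{-1})$ with high probability once $k$ is a large constant. Each solve errs by at most about $\tfrac14\tr(\mA^{-1})$ as in Step 2, and $\vg_j^\top(\tilde\vx_j-\vx_j)\le\norm{\vg_j}_2\norm{\tilde\vx_j-\vx_j}_2$ is controlled by a union bound over the $k$ runs.

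To reach success probability $1-\delta$ despite per-run failure probability $\tfrac16$, I would repeat in $O(\log(1/\delta))$ independent groups and take the median. This boosts the $\ge 5/6$ per-estimate success to $1-\delta$; if $\delta$ is tiny it is still only a constant number of repetitions, since $\delta$ is a universal constant. The total cost is $O(m)$ matvecs. By \cref{impthm:chewi-n-lower-bound} this is $\Omega(n)$, hence $m=\Omega(n)$.
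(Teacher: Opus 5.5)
Your route is the paper's: convert the residual bound to a forward-error bound, average $k=O(1)$ Hutchinson samples, boost the success probability to $1-\delta$, and invoke \cref{impthm:chewi-n-lower-bound}. The gap is in the probability bookkeeping of Step 3.

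\textbf{Why Step 3 fails.} For Chebyshev to give error $\le\frac14\tr(\mA^{-1})$, $k$ must be a large constant: the failure probability is $32/k$, so $k\gtrsim 100$. But each of the $k$ solves fails with probability up to $\frac16$, so your union bound over the $k$ runs gives a failure bound of $k/6>1$, which is vacuous. This is not just looseness. A failed solve may output anything, so a single failure can ruin the average. In the worst case a group is good only when all $k$ solves succeed, which can have probability as small as $(5/6)^k$. So a group estimate does not succeed with probability $\ge\frac56$, or even $>\frac12$. The median over $O(\log(1/\delta))$ groups therefore cannot boost it, because the median trick needs each group to be good with probability bounded away above $\frac12$.

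\textbf{The fix.} Boost each individual solve before averaging, which is what the paper does. For each right-hand side, run the solver $O(\log(k/\delta))=O(1)$ times independently and combine the runs so that the per-solve failure probability is at most $\delta/(2k)$. Then union bound over the $k$ right-hand sides together with the Hutchinson event.

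The paper combines runs by a geometric median of the output vectors (Minsker), which inflates the residual by a factor $1.5$. That is why it starts from $\frac{1}{6n}$ and uses Rademacher vectors with $\|\vb_j\|_2=\sqrt n$ exactly, making the solve-error term $\frac{1.5}{6}\|\mA^{-1}\|_2=\frac14\|\mA^{-1}\|_2$.

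In your setting an even simpler combiner works, with no constant loss. For fixed $\vg_j$, consider the scalars $\vg_j^\top\tilde\vx_j^{(\ell)}$ over the repeated runs. Their median lies within $\frac{\|\vg_j\|_2^2}{6n}\|\mA^{-1}\|_2$ of $\vg_j^\top\mA^{-1}\vg_j$ whenever more than half the runs succeed, which happens with probability $1-e^{-cr}$ for $r$ runs.

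\textbf{A smaller point.} With Gaussian vectors you need $\|\vg_j\|_2^2\le 1.5n$, not $3n$. This makes the solve-error term at most $\frac14\tr(\mA^{-1})$ and leaves room for the Hutchinson error within the total budget $\frac12\tr(\mA^{-1})$. The bound holds for all $j$ with probability $1-ke^{-cn}$.
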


\begin{proof}
    The proof uses a similar linear-systems-to-trace-inverse reduction to \cref{cor:lin-sys-lb}, but there are additional complications associated with reducing the error to level \(\frac12\) and with boosting the success probability to level $1-\delta$, as specified in the trace-inverse lower bound in \cref{impthm:chewi-n-lower-bound}.
    
    Let \(\delta > 0\) be the universal constant in \cref{impthm:chewi-n-lower-bound}, and let $\vb_1,\ldots,\vb_k$ be iid vectors, each with iid Rademacher entries.
    Then by standard analysis of the Girard--Hutchinson trace estimator \cite[Lem.~2.1]{meyer_hutch_2021}, setting \(k=\cO(\log(1/\delta)) = \cO(1)\) suffices to achieve 
    \begin{equation} \label{eq:trace-inv-gh}
        \left|\frac1k \sum_{i=1}^k \vb_i^\top\mA^{-1}\vb_i - \tr(\mA^{-1})\right| \leq \frac14 \norm{\mA^{-1}}_{\mathrm F}
        \qquad
        \text{with prob. } \geq 1-\frac\delta2.
    \end{equation}

    Suppose there is an algorithm that attains the residual guarantee \cref{eq:fine-grained-error} while using at most at most \(t\) matvecs.
    By taking the geometric median of \(\cO(\log(k/\delta)) = \cO(1)\) independent executions of the algorithm, the success probability of the algorithm can be boosted to $1-\delta/2k$ at the cost of inflating the residual by a factor of at most $1.5$ \citep[Thm.~3.1]{minsker2015geometric}.
    Applying this success-probability boosted procedure to each $\vb_i$ (and noting that \(\norm{\vb_i}_2 = \sqrt n\)), we obtain vectors $\tilde\vx_i$ such that, for each $i$,
\begin{equation*}
        \norm{\mA\tilde\vx_i - \vb_i}_2 \leq \frac1{4\sqrt{n}} 
        \qquad
        \text{with prob. } \geq 1-\frac{\delta}{2k}.
    \end{equation*}
The cost is $\order(tk\log(k/\delta)) = \order(t)$ matvecs.
    By union-bounding over these $k$ outcomes, we see that, with probability at least $1-\nicefrac\delta2$,
\begin{equation*}
        \max_{i\in[k]} \norm{\tilde\vx_i - \mA^{-1}\vb_i}_2 \le \norm{\smash{\mA^{-1}}}_2 \max_{i\in[k]}\norm{\mA\tilde\vx_i - \vb_i}_2 \le \frac1{4\sqrt{n}} \norm{\smash{\mA^{-1}}}_{\mathrm{F}}.
    \end{equation*}
In turn, we find that
\(\tilde\tr \defeq \frac1k \sum_{i=1}^k \vb_i^\top\tilde\vx_i\) satisfies
    \begin{align*}
        \left| \tilde\tr - \tfrac1k\sum_{i=1}^k\vb_i^\top\mA^{-1}\vb_i \right|
        &\leq \frac1k \sum_{i=1}^k \bigl| \vb_i^\top\tilde\vx_i - \vb_i^\top\mA^{-1}\vb_i \bigr| \leq \frac1k \sum_{i=1}^k \norm{\vb_i}_2 \norm{\tilde\vx_i - \mA^{-1}\vb_i}_2 \\
        &\leq \frac1k \sum_{i=1}^k \left(\sqrt{n}\cdot \frac{1}{4\sqrt{n}} \norm{\mA^{-1}}_{\mathrm F}\right) = \frac14 \norm{\mA^{-1}}_{\mathrm F}.
    \end{align*}
Finally, combining with guarantee \cref{eq:trace-inv-gh} and union bounding over each bound succeeding, we see that with probability at least $1-\delta$, it holds that
\[
        \bigl| \tilde\tr - \tr(\mA^{-1}) \bigr|
        \leq \bigl| H_k - \tr(\mA^{-1}) \bigr| + \bigl| \tilde\tr - H_k \bigr|
        \leq \frac12 \norm{\mA^{-1}}_{\mathrm F}
        \leq \frac12 \tr(\mA^{-1}).
    \]
The last inequality is the comparison $\norm{\mM}_{\rm F} \le \tr(\mM)$, which is valid for any positive semidefinite matrix \mM.
    We have computed an estimator satisfying the guarantee in \cref{impthm:chewi-n-lower-bound} using $\order(t)$ matvecs.
    We conclude that $t\ge \Omega(n)$, as desired.
\end{proof}

The proof of \cref{c:fine-grained} is a corollary of the \(\Omega(\kappa\log(1/\eps))\) matvecs lower bound in \cref{thm:intro-two-sided-lower-bound}, the \(\Omega(n)\) lower bound in \cref{lem:unconditioned-n-lower-bound}, and a result from prior work explained below.

\begin{definition}[Complexity]
    Fix any set of square matrices \(\cA \subseteq \bbR^{n \times n}\).
    Then the \emph{matrix--vector complexity} for solving linear systems in \cA, denoted \(T_{\cA}(\eps)\), is the smallest number \(t\) such that there exists a two-sided matrix--vector algorithm which computes at most \(t\) matrix--vector products and returns a vector \(\tilde\vx\) such that
    \[
        \norm{\mA\tilde\vx - \vb}_2 \leq \eps\norm{\vb}_2
        \qquad\text{with prob. } \geq \frac56
    \]
    for all \(\mA\in\cA\) and all \(\vb\in\bbR^n\).
\end{definition}

For instance, if \(\cM(n)\) denotes the set of all \(n \times n\) matrices, then \cref{lem:unconditioned-n-lower-bound} says that \(T_{\cM(n)}(\nicefrac{1}{6n}) = \Omega(n)\).
Similarly, if \(\cB(n,\kappa)\) denotes the set of all \(n \times n\) with condition number at most \(\kappa\), then \(T_{\cB(n,\kappa)}(1/n) = \Omega(\kappa\log(1/n))\) by \cref{cor:lin-sys-lb}.
For the purposes of fine-grained complexity, we define \(\cF(n,k,\kappa_k)\) to be the set of all matrices \mA with \(\sigma_{k+1}(\mA) / \sigma_{\min}(\mA) \leq \kappa_k\).
\citet{derezinski2025fine} shows how we can relate the matvec complexity of the fine-grained problem to that of the coarser problems:

\begin{importedtheorem}[Complexity reduction; \protect{\citet[Lem.~27 and Rem.~8]{derezinski2025fine}}]
    \label{impthm:fine-grained-reduction}
    With \(\cM(n)\), \(\cB(n,\kappa)\), and \(\cF(n,k,\kappa_k)\) as defined above,
    \[
        T_{\cF(n,k,\kappa_k)}(\eps)
        \gtrsim
        T_{\cM(k)}(\eps) + T_{\cB(n-k,\kappa_k)}(\eps).
    \]
\end{importedtheorem}

That is, the worst-case fine-grained matvec complexity of linear system solving is the sum of the matvec complexity of solving an arbitrary \(k \times k\) linear system plus the matvec complexity of solving a \((n-k) \times (n-k)\) linear system with condition number \(\kappa_k\).
We can now prove \cref{c:fine-grained}.

\begin{proof}[Proof of \protect{\cref{c:fine-grained}}]
    Fix an accuracy level $\varepsilon > 0$, and choose $k$ and $n$ sufficiently large so that we can instate \cref{cor:lin-sys-lb} and so that $\nicefrac1{6k} < \varepsilon$ and $\nicefrac1{(n-k)} < \varepsilon$.
    Clearly, the complexity is decreasing as a function of the error tolerance \(\eps\).
    Therefore, \cref{lem:unconditioned-n-lower-bound} implies that $T_{\cM(k)}(\varepsilon) = \Omega(k)$ and \cref{cor:lin-sys-lb} implies that $T_{\cB(n-k,\kappa_k)}(\varepsilon) = \Omega(\kappa_k \log(\nicefrac{1}{\eps})$.
    Invoking \cref{impthm:fine-grained-reduction} yields the desired conclusion:
    \[
        T_{\cF(n,k,\kappa_k)}(\varepsilon) = \Omega(k + \kappa_k\log(1/\eps)). \qedhere 
    \]
\end{proof}

 \section{Deferred proofs}

In this section, we provide additional proofs of statements from the main text.

\subsection{Proof of \cref{prop:trace-upper-bound}} \label{sec:trace-upper-bound}

We let \(\vb\in\bbR^{n}\) be a uniformly random vector such that \(\norm{\vb}_2=\sqrt n\).
    Then, by \citet[Prop.~3.1]{girard_algorithme_1987}\footnote{This article is in French. See, e.g., \citet[Fact.~13.2]{epperly2025make} for a statement in English.}, we know that
    \[
        \E[\vb^\top f(\mA) \vb] = \tr(f(\mA))
        \qquad\text{and}\qquad
        \Var[\vb^\top f(\mA) \vb] \leq 2\norm{f(\mA)}_{\mathrm F}^2.
    \]
    So, with probability at least \(5/6\), we have \(|\vb^\top f(\mA)\vb - \tr(\mA)| \leq \sqrt{12}\norm{f(\mA)}_{\mathrm F}\).
    Then the output \(\tilde\vx\) of our oracle has
    \[
        |\vb^\top f(\mA)\vb - \vb^\top\tilde\vx|
        \leq\norm{\vb}_2 \norm{f(\mA)\vb-\tilde\vx}_2
        \leq \frac1n \norm{\vb}_2^2 \norm{f(\mA)}_{\mathrm F}
        = \norm{f(\mA)}_{\mathrm F}
    \]
    with probability at least \(\nicefrac56\).
    We conclude that the trace estimator \(\tilde\tr = \vb^\top\tilde\vx\) has
    \[
        |\tilde\tr - \tr(f(\mA))|
        \leq |\tr(f(\mA)) - \vb^\top f(\mA)\vb| + |\vb^\top f(\mA)\vb - \vb^\top\tilde\vx|
        \leq (1+\sqrt{12})\norm{f(\mA)}_{\mathrm F} 
        \leq 5\norm{f(\mA)}_{\mathrm F}
    \]
    with probability at least \(\nicefrac23\).

\subsection{Proof of \cref{cor:lin-sys-lb}} \label{sec:lin-sys-lb}

Before proving \cref{cor:lin-sys-lb}, we instate a common technique from the literature.

\begin{prop}[Iterative refinement] \label{prop:iterative-refinement}
    Suppose we are given access to an oracle which, given inputs \mA and \vb, returns a vector \(\tilde\vx\) such that
\begin{equation*}
        \norm{\mA\tilde\vx - \vb}_2 \le \frac{1}{2} \norm{\vb}_2 \quad \text{with prob.\ } 1-\delta. 
    \end{equation*}
Then, there exists an algorithm which uses the oracle $r$ times and performs $r$ one-sided matrix--vector products to produce a solution $\tilde\vx$ satisfying
\begin{equation*}
        \norm{\mA\tilde\vx - \vb}_2 \le 2^{-r} \norm{\vb}_2 \quad \text{with prob.\ } \ge 1-r\delta.
    \end{equation*}
\end{prop}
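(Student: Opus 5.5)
The plan is classical iterative refinement on the residual. Set $\vx_0 \coloneqq \vzero$ and $\vr_0 \coloneqq \vb$. For $k = 0,1,\ldots,r-1$, call the oracle on the pair $(\mA,\vr_k)$ to obtain a correction $\vdelta_k$. Update $\vx_{k+1} \coloneqq \vx_k + \vdelta_k$, and compute the new residual $\vr_{k+1} \coloneqq \vb - \mA\vx_{k+1}$. Output $\tilde\vx \coloneqq \vx_r$.

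The residual can be updated with a single one-sided matvec per step:
\[
    \vr_{k+1} = \vr_k - \mA\vdelta_k .
\]
This is valid because $\vb - \mA\vx_{k+1} = (\vb - \mA\vx_k) - \mA\vdelta_k$. Hence the whole procedure uses exactly $r$ oracle calls and $r$ products with \mA. The last residual update is not strictly needed to form $\tilde\vx$, so one could even save a matvec, but $r$ suffices for the claim.

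For the error analysis, define $\cE_k$ to be the event that the $k$-th oracle call succeeds, i.e.,
\[
    \norm{\mA\vdelta_k - \vr_k}_2 \le \tfrac12\norm{\vr_k}_2 .
\]
On $\cE_k$ we have $\norm{\vr_{k+1}}_2 = \norm{\vr_k - \mA\vdelta_k}_2 \le \frac12 \norm{\vr_k}_2$. By induction on $k$, on $\cE_0\cap\cdots\cap\cE_{r-1}$ this gives
\[
    \norm{\mA\tilde\vx - \vb}_2 = \norm{\vr_r}_2 \le 2^{-r}\norm{\vr_0}_2 = 2^{-r}\norm{\vb}_2 .
\]

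The only subtle point is the probability bound, since the right-hand side $\vr_k$ is itself random and depends on earlier oracle outputs. I would handle this by conditioning. The oracle's guarantee holds for \emph{every} fixed input, using fresh internal randomness. So conditional on the history (and hence on $\vr_k$), $\Pr\{\cE_k^c \mid \text{history}\} \le \delta$, and therefore $\Pr\{\cE_k^c\}\le\delta$ unconditionally. A union bound then gives $\Pr\{\cE_0\cap\cdots\cap\cE_{r-1}\} \ge 1 - r\delta$. If some $\vr_k = \vzero$, the guarantee forces $\mA\vdelta_k = \vzero$, and the bound holds trivially from then on. I expect no real obstacle beyond stating this conditioning step carefully.
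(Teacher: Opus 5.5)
Your proposal is correct and follows the same iterative-refinement scheme the paper sketches: call the oracle on the current residual, add the correction to the iterate, and update the residual with one product with $\mA$. You also fill in the details the paper omits, namely the induction $\norm{\vr_{k+1}}_2 \le \frac12\norm{\vr_k}_2$ and the conditioning-plus-union-bound argument for the $1-r\delta$ probability; your separate remark about $\vr_k=\vzero$ is unnecessary, since the event $\cE_k$ already covers that case (and ``forces'' holds only on $\cE_k$).
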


\noindent This type of result and its proof are standard \cite[Sec.~2.1]{greenbaum_iterative_1997}.
The idea is to use the oracle to produce an approximate solution $\tilde\vx_1 \approx \mA^{-1}\vb$.
Then, we form the residual $\vr_1 = \vb - \mA\tilde\vx_1$, use the oracle to produce an approximate solution $\tilde\vdelta_1 \approx \mA^{-1}\vr_1$, and form the corrected solution $\tilde\vx_2 = \tilde\vx_1 + \vdelta_1$.
Continuing this procedure for $r$ rounds produces a solution $\tilde\vx_r$ satisfying the guarantee; we omit a detailed proof.
Using this result, we prove \cref{cor:lin-sys-lb}.

\begin{proof}[Proof of \cref{cor:lin-sys-lb}]
    Fix any matrix \(\mA\) of condition number at most \(\kappa\) and any vector \vb.
    Let \(\tilde\vx\) be any vector returned by a two-sided matrix--vector algorithm such that
    \[
        \norm{\mA\tilde\vx - \vb}_2 \leq \frac1n \norm{\vb}_2
        \qquad
        \text{with prob. } \geq \frac23.
    \]
    Then, we can ensure that
    \[
        \norm{\tilde\vx - \vx}_2
        = \norm{\mA^{-1}\mA\tilde\vx - \mA^{-1}\vb}_2
        \leq \norm{\mA^{-1}}_2\norm{\mA\tilde\vx-\vb}_2
        \leq \frac1n \norm{\mA^{-1}}_{\mathrm F}\norm{\vb}_2.
    \]
    also holds with probability at least \(\nicefrac23\).
    \cref{prop:trace-upper-bound} tell tells us that this matrix--vector algorithm can estimate \(\tr(\mA^{-1})\) with probability at least \(\nicefrac23\), and \cref{thm:matvec-trace-lower-bound-large-kappa} tells us that our matrix--vector algorithm must have used at least \(\frac14\kappa\log(n)\) matvecs, completing the first part of the proof.

    For the second part of the proof, suppose that a two-sided matrix--vector algorithm uses at most \(t\) matvecs to satisfy \cref{eq:lin-sys-lb-2}.
    We then apply \cref{prop:iterative-refinement} with \(r = \Theta(\log(n))\), resulting in a vector \(\tilde\vx\) that satisfies \cref{eq:lin-sys-lb-1} with a total of \(r(t+1)\) matvecs.
    Since it takes \(\Omega(\kappa\log n)\) matvecs to satisfy \cref{eq:lin-sys-lb-1}, we know that our low-accuracy solver must have used \(t = \Omega(\kappa)\) matvecs.
\end{proof}

 \section{Additional lower bounds}
\label{app:other-metrics}

In this section, we briefly note how our methods give variably tight lower bounds for a variety of other matrix--vector complexities in and around linear system solving.

\subsection{\texorpdfstring{$\ell_2$}{L2} error guarantees}

Using a variant of the proof of \cref{cor:lin-sys-lb}, we obtain lower bounds for the \emph{forward error} measured in the $\ell_2$ norm.

\begin{cor}[Linear systems lower bound in \(\ell_2\) norm] \label{cor:lin-sys-lb-backwards}
    Suppose that \(n \gtrsim \kappa^{31}\).
    Any algorithm using fewer than \(\frac15\kappa\log n\) two-sided matvecs with a matrix \mA cannot return a vector \(\tilde\vx\) such that
\begin{equation*}
        \label{eq:lin-sys-lb-different}
        \norm{\smash{\tilde\vx - \mA^{-1}\vb}}_2 \leq \frac1n \norm{\mA^{-1}\vb}_2 \quad \text{with prob.\ } \ge \frac{5}{6}
    \end{equation*}
for all symmetric matrices $\mA \in \bbR^{n\times n}$ with \(\cond(\mA)\leq\kappa\) and all vectors \(\vb\in\bbR^n\).
    Moreover, attaining the guarantee
\begin{equation*}
\norm{\smash{\tilde\vx - \mA^{-1}\vb}}_2 \leq \frac{1}{2} \norm{\mA^{-1}\vb}_2 \quad \text{with prob.\ } \ge 1 - \frac{1}{10 \log n}
    \end{equation*}
for the same class of problems requires $\Omega(\kappa)$ matvecs. 
\end{cor}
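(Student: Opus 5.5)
We follow the proof of \cref{cor:lin-sys-lb} but replace the residual guarantee with the forward-error guarantee, and feed it into the trace-inverse reduction. Suppose an algorithm returns $\tilde\vx$ with $\norm{\tilde\vx - \mA^{-1}\vb}_2 \le \frac1n\norm{\mA^{-1}\vb}_2$ with probability $\ge\nicefrac56$. Since $\norm{\mA^{-1}\vb}_2 \le \norm{\mA^{-1}}_2\norm{\vb}_2 \le \norm{\mA^{-1}}_{\mathrm F}\norm{\vb}_2$, this implies
\[
\norm{\tilde\vx - f(\mA)\vb}_2 \le \tfrac1n \norm{f(\mA)}_{\mathrm F}\norm{\vb}_2, \qquad f(x)=\nicefrac1x,
\]
which is exactly the oracle hypothesis of \cref{prop:trace-upper-bound}. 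That proposition then yields a trace-inverse estimator with error $5\norm{\mA^{-1}}_{\mathrm F}$ with probability $\ge\nicefrac23$, using the same number of matvecs.

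We then invoke \cref{thm:matvec-trace-lower-bound-large-kappa} with a concrete choice of parameters. Choosing $\eta = \nicefrac15$ gives the constant $\frac{1-\eta}{4} = \frac15$, and requires $n \gtrsim \kappa^{(6+\delta)/\eta} = \kappa^{30+5\delta}$. Taking $\delta = \nicefrac15$ makes this $n\gtrsim\kappa^{31}$. So fewer than $\frac15\kappa\log n$ matvecs cannot suffice, which proves the first claim.

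For the low-accuracy claim we use an iterative-refinement argument adapted to forward error. Given a solver with $\norm{\tilde\vx-\mA^{-1}\vb}_2\le\frac12\norm{\mA^{-1}\vb}_2$, we set $\tilde\vx_1$ to its output, form the residual $\vr_1=\vb-\mA\tilde\vx_1$ (one matvec), solve for $\tilde\vdelta_1\approx\mA^{-1}\vr_1$, and update. The new error is
\[
\mA^{-1}\vb - \tilde\vx_1 - \tilde\vdelta_1 = \mA^{-1}\vr_1 - \tilde\vdelta_1,
\]
whose norm is at most $\frac12\norm{\mA^{-1}\vr_1}_2 = \frac12\norm{\mA^{-1}\vb-\tilde\vx_1}_2$. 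The forward error therefore halves each round, exactly as in \cref{prop:iterative-refinement}. After $r=\lceil\log_2 n\rceil$ rounds the forward error is at most $\frac1n\norm{\mA^{-1}\vb}_2$. By a union bound the failure probability is at most $r/(10\log n)\le\nicefrac16$ for $n$ large, since $\log_2 n\approx1.44\log n$. The total cost is $r(t+1)$ matvecs. The first part then forces $r(t+1)\ge\frac15\kappa\log n$, hence $t=\Omega(\kappa)$.

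The main obstacle is bookkeeping rather than ideas. We must check that the constants in $n\gtrsim\kappa^{31}$ and $\frac15$ are consistent with \cref{thm:matvec-trace-lower-bound-large-kappa}'s hypotheses, and that the oracle's success probability $\nicefrac56$ matches what \cref{prop:trace-upper-bound} demands. Both appear to line up directly.
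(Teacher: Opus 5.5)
Your proof is correct and follows the paper's route. The bound $\norm{\mA^{-1}\vb}_2 \le \norm{\mA^{-1}}_{\mathrm F}\norm{\vb}_2$ turns the forward-error guarantee into the oracle hypothesis of \cref{prop:trace-upper-bound}, and taking $\eta=\delta=\nicefrac15$ in \cref{thm:matvec-trace-lower-bound-large-kappa} gives exactly the constants $\frac15$ and $\kappa^{31}$; your forward-error version of iterative refinement (the error halves because $\mA^{-1}\vr_1 = \mA^{-1}\vb - \tilde\vx_1$) correctly supplies the step the paper leaves as ``nearly the same'' as \cref{cor:lin-sys-lb}.
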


The proof of \cref{cor:lin-sys-lb-backwards} is omitted; the proof is nearly the same as the proof of \cref{cor:lin-sys-lb} found in \cref{sec:lin-sys-lb}.
The key step is realizing the guarantee \cref{eq:lin-sys-lb-different} implies
\begin{equation*}
    \norm{\smash{\tilde\vx - \mA^{-1}\vb}}_2 \le \frac{1}{n} \norm{\smash{\mA^{-1}\vb}}_2 \le \frac{1}{n} \norm{\smash{\mA^{-1}}_2\norm{\vb}}_2 \le \frac{1}{n} \norm{\smash{\mA^{-1}}}_{\mathrm{F}}\norm{\vb}_2,
\end{equation*}
which was the same bound established in the proof of \ref{cor:lin-sys-lb}.
\Cref{cor:lin-sys-lb-backwards} shows that resolving a linear system to relative forward error \(\eps\) also requires \(\Omega(\kappa\log(\nicefrac1\eps))\) matvecs.
The best-known upper bounds for achieving an $\epsilon$-small relative forward error require at most \(\cO(\kappa\log(\kappa/\eps))\) matvecs.
The extra $\log(\kappa)$ factor comes from a comparison between the norms $\norm{\mA^{-1}\vb}_2$ and $\norm{\vb}_2$.
To sharpen the lower bound to match the upper bound, it would suffice to produce a lower bound of \(\kappa\log(\kappa)\) matvecs being needed in the low-accuracy regime \(\eps \asymp 1\).

\subsection{Lower bound for SPD systems}
\label{sec:lin-sys-lower-bound-psd}

Lastly, we show an $\Omega(\sqrt\kappa \log(\nicefrac1\eps))$ lower bound for SPD matrices.
An $\Omega(\sqrt\kappa \log(\nicefrac1\eps))$ lower bound is known for deterministic algorithms \citep[Sec.~7.2]{nemirovskij1983problem} and an $\Omega(\sqrt\kappa / \polylog(\kappa))$ lower bound is known for randomized algorithms \cite{BHSW20}.

\begin{theorem}[Linear systems: Lower bound, PSD Variant]
    Fix any \(\delta,\eta\in(0,1)\) and suppose that \(n \gtrsim \kappa^{(6+\delta)/\eta}\).
    Any randomized algorithm using fewer than \(\frac{1-\eta}4\sqrt\kappa\log n\) two-sided matvecs with a matrix \mA cannot return a vector \(\tilde\vx\) such that
\begin{equation*} 
        \norm{\smash{\mA\tilde\vx - \vb}}_2 \leq \frac1n \norm{\vb}_2 \quad \text{with prob.\ } \ge \frac{5}{6}
    \end{equation*}
for all SPD matrices $\mA \in \bbR^{n\times n}$ with \(\cond(\mA)\leq\kappa\) and all vectors \(\vb\in\bbR^n\).
    Moreover, attaining the guarantee
\begin{equation*}         \norm{\smash{\mA\tilde\vx - \vb}}_2 \leq \frac{1}{2} \norm{\vb}_2 \quad \text{with prob.\ } \ge 1 - \frac{1}{10 \log n}
    \end{equation*}
for the same class of problems requires $\Omega(\sqrt\kappa)$ matvecs.
\end{theorem}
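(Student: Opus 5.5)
The plan is to repeat the argument for \cref{cor:lin-sys-lb} verbatim, replacing the split interval $[-\kappa,-1]\cup[1,\kappa]$ by the single interval $\cS=[1,\kappa]$ and the inapproximability bound \cref{thm:inverse-inapprox} by \cref{implem:non-split-inapprox}. Applying \cref{implem:non-split-inapprox} with $b=\kappa$ shows that $f(x)=\nicefrac1x$ is $(t,\alpha)$-inapproximable on $[1,\kappa]$ with $\alpha \ge \frac12 e^{-4(t+1)/\sqrt\kappa}$. This is the same form as in \cref{thm:inverse-inapprox}, but with $\kappa/2$ replaced by $\sqrt\kappa/4$. Every symmetric matrix with eigenvalues in $[1,\kappa]$ is SPD with $\cond(\mA)\le\kappa$, so \cref{thm:black-box-matvec-lower-bound} applies within the SPD class: the hard pair $\mA=\mU\mD\mU^\top$, $\tilde\mA=\mU\tilde\mD\mU^\top$ from \cref{cons:indistinguishable-pair} has eigenvalues drawn from the discretized points $\lambda_i\in[1,\kappa]$.

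Next I would prove an SPD analogue of \cref{thm:matvec-trace-lower-bound-large-kappa}. Fix $t \le \frac{1-\eta}{4}\sqrt\kappa\log n - 1$. Then $4(t+1)/\sqrt\kappa \le (1-\eta)\log n$, so $\alpha \ge \frac12 n^{-(1-\eta)}$. This exponent is twice what the split-interval case gave, so the required size condition becomes $t^6/\alpha^2 \lesssim \kappa^3 n^{2(1-\eta)}\log^6 n \lesssim n$. That needs $n^{1-2\eta}$ to be dominated, which fails for $\eta<1/2$. This is the main obstacle: the constant $\frac{1-\eta}{4}$ does not fall out directly. I would first try to use the sharper generic form \cref{thm:black-box-matvec-lower-bound-generic}, whose hypothesis is $n\gtrsim t^4\max\{t^2,1/\alpha\}$, i.e.\ only $1/\alpha$ rather than $1/\alpha^2$. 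Then I would need $t^4 n^{1-\eta}\lesssim n$, which holds once $n^{\eta}\gtrsim \kappa^2\log^4 n$, and this is implied by $n\gtrsim\kappa^{(6+\delta)/\eta}$. The price is that the accuracy guarantee in the generic theorem is $\alpha\sqrt n/t^3\cdot\norm{f(\mA)}_{\mathrm F}$ rather than $5\norm{f(\mA)}_{\mathrm F}$, and $\alpha\sqrt n/t^3 \gtrsim n^{\eta-1/2}/t^3$ may be below $5$. If so, I would fall back on tightening \cref{implem:non-split-inapprox}: its exact form $(1-\frac{2}{\sqrt\kappa+1})^{t+1}$ behaves like $e^{-2(t+1)/\sqrt\kappa}$ for large $\kappa$, which restores exponent $\frac{1-\eta}{2}$ and the same bookkeeping as \cref{thm:matvec-trace-lower-bound-large-kappa}. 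I expect the intended proof is this: the rate $e^{-2t/\sqrt\kappa}$ on $[1,\kappa]$ matches the rate $e^{-2t/\kappa}$ that the split interval produces via $[1,\kappa^2]$. I would adjust constants accordingly, absorbing $\log$ factors into $n^{\delta}$.

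Given the trace-inverse bound, the first claim follows exactly as in \cref{sec:lin-sys-lb}. An SPD solver achieving residual $\frac1n\norm{\vb}_2$ yields $\norm{\tilde\vx-\mA^{-1}\vb}_2\le\frac1n\norm{\mA^{-1}}_{\mathrm F}\norm{\vb}_2$. \Cref{prop:trace-upper-bound} then turns this into a trace-inverse estimate with error $5\norm{\mA^{-1}}_{\mathrm F}$, which contradicts the SPD trace bound. For the second claim, I would apply iterative refinement (\cref{prop:iterative-refinement}) with $r=\Theta(\log n)$ rounds. The failure probability is at most $r/(10\log n)\le 1/6$ when the constant in $r$ is chosen appropriately, so $t$ matvecs per round yield a $\frac1n$-accurate solver using $r(t+1)$ matvecs. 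The first claim then forces $r(t+1)\gtrsim\sqrt\kappa\log n$, so $t=\Omega(\sqrt\kappa)$. Note that the residual systems in refinement use the same SPD $\mA$, so the reduction stays in the SPD class.
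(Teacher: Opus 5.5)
Your plan is the paper's own proof, which is the single sentence ``rerun \cref{cor:lin-sys-lb} with \cref{implem:non-split-inapprox} in place of \cref{thm:inverse-inapprox}.'' You are right that this step, read literally with $\alpha\ge\frac12e^{-4(t+1)/\sqrt\kappa}$, cannot give the constant. You are also right that the exact factor $q^{t+1}$, $q=\frac{\sqrt\kappa-1}{\sqrt\kappa+1}$, recovers $\alpha\gtrsim n^{-(1-\eta)/2}$ at $t+1=\frac{1-\eta}{4}\sqrt\kappa\log n$, for $\kappa$ large in terms of $1/\eta$. The reduction through \cref{prop:trace-upper-bound} and the iterative-refinement argument for the $\Omega(\sqrt\kappa)$ claim are fine.

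The gap is a second factor of $2$. \Cref{thm:black-box-matvec-lower-bound} turns $(t,\alpha)$-inapproximability into a lower bound of $\lfloor t/2\rfloor$ matvecs, not $t$. The reason is that $\ell$ matvecs already reveal bilinear powers up to degree $2\ell$ (\cref{cor:diag-dtv-to-matvec-dtv}). You invoke ``the same bookkeeping as \cref{thm:matvec-trace-lower-bound-large-kappa},'' but that proof fixes a degree $t\le\frac{1-\eta}{4}\kappa\log n-1$ and then reports $t$ matvecs, so it silently drops this halving. With your choice of $t$, you therefore prove only about $\frac{1-\eta}{8}\sqrt\kappa\log n$ matvecs.

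No sharper inapproximability estimate can close this. To certify $\frac{1-\eta}{4}\sqrt\kappa\log n$ matvecs you need degree $t+1\approx\frac{1-\eta}{2}\sqrt\kappa\log n$. Chebyshev's exact best-approximation error for $1/x$ on $[1,\kappa]$ is $\frac{(1+\sqrt\kappa)^2}{2\kappa}q^{t+1}\le 2e^{-2(t+1)/\sqrt\kappa}$, which equals $2n^{-(1-\eta)}$ at that degree. So every admissible $\alpha$ satisfies $t^6/\alpha^2\gtrsim t^6n^{2(1-\eta)}\gg n$ for $\eta<\frac12$. This is precisely the obstruction you found, reappearing one step later.

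The split-interval case escapes this for a specific reason. There the exact rate is $\bigl(\frac{\kappa-1}{\kappa+1}\bigr)^{(t+1)/2}\approx e^{-(t+1)/\kappa}$, twice as good as \cref{thm:inverse-inapprox} states, and that spare factor pays for the halving. On $[1,\kappa]$ you have already spent it.

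So this route proves the first claim with $\frac{1-\eta}{8}$ in place of $\frac{1-\eta}{4}$. That is still $\Omega(\sqrt\kappa\log n)$, so the second claim is intact. But for $\eta<\frac12$, in particular in the $\eta\to0$ regime behind the headline constant, the stated $\frac{1-\eta}{4}$ needs a different argument. Separately, the statement needs some lower bound on $\kappa$: at $\kappa=1$ a single matvec solves the system exactly.
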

\begin{proof}
    The proof matches that of \cref{cor:lin-sys-lb}, except that it uses \cref{implem:non-split-inapprox} in place of \cref{thm:inverse-inapprox}.
\end{proof}
Lower bounds for the forward error in the \(\ell_2\)- and \(\mA\)-norms are also possible, showing that \(\Omega(\sqrt \kappa\log(1/\eps))\) matvecs are needed in both cases.

\bibliographystyle{halpha}

\begin{thebibliography}{CdDPL{\etalchar{+}}24}
\expandafter\ifx\csname url\endcsname\relax
  \def\url#1{\texttt{#1}}\fi
\expandafter\ifx\csname doi\endcsname\relax
  \def\doi#1{\burlalt{doi:#1}{http://dx.doi.org/#1}}\fi
\expandafter\ifx\csname urlprefix\endcsname\relax\def\urlprefix{}\fi
\expandafter\ifx\csname href\endcsname\relax
  \def\href#1#2{#2}\fi
\expandafter\ifx\csname burlalt\endcsname\relax
  \def\burlalt#1#2{\href{#2}{#1}}\fi

\bibitem[ACK{\etalchar{+}}26]{amsel_fixed-sparsity_2024}
Noah Amsel, Tyler Chen, Feyza~Duman Keles, Diana Halikias, Cameron Musco, and
  Christopher Musco.
\newblock Fixed-sparsity matrix approximation from matrix-vector products.
\newblock {\em SIAM Journal on Matrix Analysis and Applications},
  47(2):483--511, June 2026.
\newblock \doi{10.1137/25M1742710}.

\bibitem[AL86]{axelsson1986rate}
Owe Axelsson and Gunhild Lindskog.
\newblock On the rate of convergence of the preconditioned conjugate gradient
  method.
\newblock {\em Numerische Mathematik}, 48:499--523, 1986.
\newblock \doi{10.1007/BF01389448}.

\bibitem[BCW22]{bakshi_low-rank_2022}
Ainesh Bakshi, Kenneth~L. Clarkson, and David~P. Woodruff.
\newblock Low-rank approximation with $\varepsilon^{1/3}$ matrix-vector
  products.
\newblock In {\em Proceedings of the 54th {Annual} {ACM} {SIGACT} {Symposium}
  on {Theory} of {Computing}}, pages 1130--1143. ACM, June 2022.
\newblock \doi{10.1145/3519935.3519988}.

\bibitem[BHOT24]{boulle_operator_2024}
Nicolas Boullé, Diana Halikias, Samuel~E. Otto, and Alex Townsend.
\newblock Operator learning without the adjoint.
\newblock {\em Journal of Machine Learning Research}, 25(364):1--54, 2024.
\newblock \urlprefix\url{http://jmlr.org/papers/v25/24-0162.html}.

\bibitem[BHSW20]{BHSW20}
Mark Braverman, Elad Hazan, Max Simchowitz, and Blake Woodworth.
\newblock The gradient complexity of linear regression.
\newblock In {\em Conference on {{Learning Theory}}}, pages 627--647. PMLR,
  2020.
\newblock \urlprefix\url{https://proceedings.mlr.press/v125/braverman20a.html}.

\bibitem[BN23]{bakshi_krylov_2023}
Ainesh Bakshi and Shyam Narayanan.
\newblock {Krylov} methods are (nearly) optimal for low-rank approximation.
\newblock In {\em 2023 {IEEE} 64th {Annual} {Symposium} on {Foundations} of
  {Computer} {Science}}, pages 2093--2101. IEEE, 2023.
\newblock \doi{10.1109/FOCS57990.2023.00128}.

\bibitem[BV04]{boyd2004convex}
Stephen Boyd and Lieven Vandenberghe.
\newblock {\em Convex Optimization}.
\newblock Cambridge University Press, 2004.
\newblock \doi{10.1017/CBO9780511804441}.

\bibitem[Can22]{canonne_topics_2022}
Clément~L. Canonne.
\newblock {\em Topics and Techniques in Distribution Testing}.
\newblock Now Publishers, 2022.
\newblock \doi{10.1561/9781638281016}.

\bibitem[CdDPL{\etalchar{+}}24]{chewi2024query}
Sinho Chewi, Jaume de~Dios~Pont, Jerry Li, Chen Lu, and Shyam Narayanan.
\newblock Query lower bounds for log-concave sampling.
\newblock {\em Journal of the ACM}, 71(4):1--42, 2024.
\newblock \doi{10.1145/3673651}.

\bibitem[CDHS20]{carmon2020lower}
Yair Carmon, John~C. Duchi, Oliver Hinder, and Aaron Sidford.
\newblock Lower bounds for finding stationary points {I}.
\newblock {\em Mathematical Programming}, 184(1):71--120, 2020.
\newblock \doi{10.1007/s10107-019-01406-y}.

\bibitem[CW09]{clarkson_numerical_2009}
Kenneth~L. Clarkson and David~P. Woodruff.
\newblock Numerical linear algebra in the streaming model.
\newblock In {\em Proceedings of the forty-first annual {ACM} symposium on
  {Theory} of computing}, {STOC} '09, pages 205--214. Association for Computing
  Machinery, May 2009.
\newblock \doi{10.1145/1536414.1536445}.

\bibitem[CW17]{clarkson_low-rank_2017}
Kenneth~L. Clarkson and David~P. Woodruff.
\newblock Low-rank approximation and regression in input sparsity time.
\newblock {\em Journal of the ACM}, 63(6):1--45, February 2017.
\newblock \doi{10.1145/3019134}.

\bibitem[DL93]{devore1993constructive}
Ronald~A DeVore and George~G Lorentz.
\newblock {\em Constructive approximation}, volume 303.
\newblock Springer Science \& Business Media, 1993.

\bibitem[DLNR25]{derezinski2025fine}
Michal Derezi{\'n}ski, Daniel LeJeune, Deanna Needell, and Elizaveta Rebrova.
\newblock Fine-grained analysis and faster algorithms for iteratively solving
  linear systems.
\newblock {\em Journal of Machine Learning Research}, 26(144):1--49, 2025.
\newblock \urlprefix\url{https://www.jmlr.org/papers/v26/24-1906.html}.

\bibitem[DM24]{derezinski2024recent}
Micha{\l} Derezi{\'n}ski and Michael~W Mahoney.
\newblock Recent and upcoming developments in randomized numerical linear
  algebra for machine learning.
\newblock In {\em Proceedings of the 30th ACM SIGKDD Conference on Knowledge
  Discovery and Data Mining}, pages 6470--6479, 2024.
\newblock \doi{10.1145/3637528.3671461}.

\bibitem[DMR22]{devroye_total_2022}
Luc Devroye, Abbas Mehrabian, and Tommy Reddad.
\newblock The total variation distance between high-dimensional {Gaussians}
  with the same mean.
\newblock {\em arXiv preprint
  \href{http://arxiv.org/abs/1810.08693v7}{arXiv:1810.08693v7}}, February 2022.

\bibitem[DMY25]{derezinski_faster_2025}
Micha\l Derezi\'ski, Christopher Musco, and Jiaming Yang.
\newblock Faster linear systems and matrix norm approximation via multi-level
  sketched preconditioning.
\newblock In {\em Proceedings of the 2025 {Annual} {ACM}-{SIAM} {Symposium} on
  {Discrete} {Algorithms}}, pages 1972--2004. SIAM, January 2025.
\newblock \doi{10.1137/1.9781611978322.62}.

\bibitem[DS26]{DerezinskiSidfordSODA26}
Micha{\l} Derezinski and Aaron Sidford.
\newblock Approaching optimality for solving dense linear systems with low-rank
  structure.
\newblock In {\em Proceedings of the 37th Annual ACM-SIAM Symposium on Discrete
  Algorithms (SODA 2026)}. SIAM / ACM, 2026.
\newblock \doi{10.1137/1.9781611978971.37}.

\bibitem[DTT98]{driscoll_potential_1998}
Tobin~A. Driscoll, Kim-Chuan Toh, and Lloyd~N. Trefethen.
\newblock From potential theory to matrix iterations in six steps.
\newblock {\em SIAM Review}, 40(3):547--578, January 1998.
\newblock \doi{10.1137/S0036144596305582}.

\bibitem[DY24]{derezinski2024solving}
Micha{\l} Derezi{\'n}ski and Jiaming Yang.
\newblock Solving dense linear systems faster than via preconditioning.
\newblock In {\em Proceedings of the 56th Annual ACM Symposium on Theory of
  Computing}, pages 1118--1129, 2024.
\newblock \doi{10.1145/3618260.3649694}.

\bibitem[Epp25]{epperly2025make}
Ethan~N. Epperly.
\newblock {\em Make the most of what you have: Resource-efficient randomized
  algorithms for matrix computations}.
\newblock PhD thesis, California Institute of Technology, 2025.
\newblock \doi{10.7907/pef3-mg80}.

\bibitem[Gir87]{girard_algorithme_1987}
Didier Girard.
\newblock Un algorithme simple et rapide pour la validation croisée
  généralisée sur des problèmes de grande taille, 1987.
\newblock
  \urlprefix\url{https://membres-ljk.imag.fr/Didier.Girard/TR-665-M-IMAG.pdf}.

\bibitem[Gre97]{greenbaum_iterative_1997}
Anne Greenbaum.
\newblock {\em Iterative methods for solving linear systems}.
\newblock Number~17 in Frontiers in applied mathematics. Society for Industrial
  and Applied Mathematics, 1997.
\newblock \doi{10.1137/1.9781611970937}.

\bibitem[Hal25]{halikias_structured_2025}
Diana Halikias.
\newblock {\em Structured Matrix Recovery and Approximation from Matrix-Vector
  Products}.
\newblock {PhD}, Cornell University, 2025.

\bibitem[HHL09]{harrow_quantum_2009}
Aram~W. Harrow, Avinatan Hassidim, and Seth Lloyd.
\newblock Quantum algorithm for linear systems of equations.
\newblock {\em Physical Review Letters}, 103(15):150502, October 2009.
\newblock \doi{10.1103/PhysRevLett.103.150502}.

\bibitem[JPWZ21]{jiang2021optimal}
Shuli Jiang, Hai Pham, David Woodruff, and Richard Zhang.
\newblock Optimal sketching for trace estimation.
\newblock {\em Advances in Neural Information Processing Systems},
  34:23741--23753, 2021.
\newblock \urlprefix\url{https://dl.acm.org/doi/10.5555/3540261.3542079}.

\bibitem[KVZ10]{kraus2010polynomial}
Johannes~K Kraus, Panayot~S Vassilevski, and Ludmil~T Zikatanov.
\newblock Polynomial of best uniform approximation to {$x^{-1}$} and smoothing
  in two-level methods.
\newblock {\em arXiv preprint
  \href{http://arxiv.org/abs/1002.1859v3}{arXiv:1002.1859v3}}, 2010.

\bibitem[LP17]{levin2017markov}
David~A Levin and Yuval Peres.
\newblock {\em {Markov} chains and mixing times}, volume 107.
\newblock American Mathematical Soc., 2017.

\bibitem[MA26]{meyer2026hutchinson}
Raphael~A. Meyer and Haim Avron.
\newblock {Hutchinson's} estimator is bad at {Kronecker-Trace-Estimation}.
\newblock {\em SIAM Journal on Matrix Analysis and Applications}, 47(1):1--32,
  2026.
\newblock \doi{10.1137/23M1595180}.

\bibitem[MDM{\etalchar{+}}23]{randlapack_book}
Riley Murray, James Demmel, Michael~W Mahoney, N~Benjamin Erichson, Maksim
  Melnichenko, Osman~Asif Malik, Laura Grigori, Piotr Luszczek, Micha{\l}
  Derezi{\'n}ski, Miles~E Lopes, Tianyu Liang, Hengrui Luo, and Jack Dongarra.
\newblock Randomized numerical linear algebra: A perspective on the field with
  an eye to software.
\newblock {\em arXiv preprint
  \href{http://arxiv.org/abs/2302.11474v2}{arXiv:2302.11474v2}}, 2023.

\bibitem[Mey24]{meyer2024towards}
Raphael~Arkady Meyer.
\newblock {\em Towards Optimal Matrix-Vector Complexity in Numerical Linear
  Algebra}.
\newblock PhD thesis, New York University Tandon School of Engineering, 2024.
\newblock \urlprefix\url{https://ram900.com/assets/thesis-main_final_v3.pdf}.

\bibitem[Mey26]{meyer2026non}
Raphael~A. Meyer.
\newblock A non-asymptotic bound on the {TV} distance between a {Wishart}
  matrix and an appropriately scaled {GOE} matrix.
\newblock {\em arXiv preprint
  \href{https://arxiv.org/abs/2606.16018}{arXiv:2606.16018}}, 2026.

\bibitem[Min15]{minsker2015geometric}
Stanislav Minsker.
\newblock Geometric median and robust estimation in {Banach} spaces.
\newblock {\em Bernoulli}, pages 2308--2335, 2015.
\newblock \doi{10.3150/14-BEJ645}.

\bibitem[MMMW21]{meyer_hutch_2021}
Raphael~A. Meyer, Cameron Musco, Christopher Musco, and David~P. Woodruff.
\newblock {Hutch++}: Optimal stochastic trace estimation.
\newblock In {\em Proceedings of the 2021 Symposium on Simplicity in
  Algorithms}, pages 142--155. SIAM, 2021.
\newblock \doi{10.1137/1.9781611976496.16}.

\bibitem[MR95]{motwani_randomized_1995}
Rajeev Motwani and Prabhakar Raghavan.
\newblock {\em Randomized Algorithms}.
\newblock Cambridge University Press, 1st edition edition, August 1995.
\newblock \doi{10.1017/CBO9780511814075}.

\bibitem[MRT18]{mohri2018foundations}
Mehryar Mohri, Afshin Rostamizadeh, and Ameet Talwalkar.
\newblock {\em Foundations of Machine Learning}.
\newblock MIT press, 2018.

\bibitem[MT20]{martinsson2020randomized}
Per-Gunnar Martinsson and Joel~A. Tropp.
\newblock Randomized numerical linear algebra: Foundations and algorithms.
\newblock {\em Acta Numerica}, 29:403--572, 2020.
\newblock \doi{10.1017/S0962492920000021}.

\bibitem[MW17]{musco_sublinear_2017}
C.~Musco and D.~P. Woodruff.
\newblock Sublinear time low-rank approximation of positive semidefinite
  matrices.
\newblock In {\em 2017 {IEEE} 58th {Annual} {Symposium} on {Foundations} of
  {Computer} {Science}}, pages 672--683, October 2017.
\newblock \doi{10.1109/FOCS.2017.68}.

\bibitem[NRT92]{nachtigal_how_1992}
Noël~M. Nachtigal, Satish~C. Reddy, and Lloyd~N. Trefethen.
\newblock How fast are nonsymmetric matrix iterations?
\newblock {\em SIAM Journal on Matrix Analysis and Applications},
  13(3):778--795, July 1992.
\newblock \doi{10.1137/0613049}.

\bibitem[NY83]{nemirovskij1983problem}
Arkadij~Semenovi{\v{c}} Nemirovsky and David~Borisovich Yudin.
\newblock {\em Problem Complexity and Method Efficiency in Optimization}.
\newblock Wiley-Interscience, 1983.
\newblock
  \urlprefix\url{https://www2.isye.gatech.edu/~nemirovs/Nemirovskii_Yudin_1983.pdf}.

\bibitem[PS82]{paige_lsqr_1982}
Christopher~C. Paige and Michael~A. Saunders.
\newblock {LSQR}: {An} algorithm for sparse linear equations and sparse least
  squares.
\newblock {\em ACM Transactions on Mathematical Software}, 8(1):43--71, March
  1982.
\newblock \doi{10.1145/355984.355989}.

\bibitem[PV24]{peng_solving_2024}
Richard Peng and Santosh~S. Vempala.
\newblock Solving sparse linear systems faster than matrix multiplication.
\newblock {\em Commun. ACM}, 67(7):79--86, July 2024.
\newblock \doi{10.1145/3615679}.

\bibitem[Riv81]{rivlin_introduction_1981}
Theodore~J. Rivlin.
\newblock {\em An Introduction to the Approximation of Functions}.
\newblock Dover Books on Advanced Mathematics. Dover, 1981.

\bibitem[RT08]{rokhlin_fast_2008}
Vladimir Rokhlin and Mark Tygert.
\newblock A fast randomized algorithm for overdetermined linear least-squares
  regression.
\newblock {\em Proceedings of the National Academy of Sciences},
  105(36):13212--13217, September 2008.
\newblock \doi{10.1073/pnas.0804869105}.

\bibitem[SEAR18]{simchowitz2018tight}
Max Simchowitz, Ahmed El~Alaoui, and Benjamin Recht.
\newblock Tight query complexity lower bounds for {PCA} via finite sample
  deformed {Wigner} law.
\newblock In {\em Proceedings of the 50th Annual ACM SIGACT Symposium on Theory
  of Computing}, pages 1249--1259, 2018.
\newblock \doi{10.1145/3188745.3188796}.

\bibitem[Spi24]{spielman_solve_2024}
Daniel Spielman.
\newblock Solve for x: {Technical} perspective.
\newblock {\em Commun. ACM}, 67(7):78, July 2024.
\newblock \doi{10.1145/3643835}.

\bibitem[SS86]{saad_gmres_1986}
Youcef Saad and Martin~H. Schultz.
\newblock {GMRES}: {A} generalized minimal residual algorithm for solving
  nonsymmetric linear systems.
\newblock {\em SIAM Journal on Scientific and Statistical Computing},
  7(3):856--869, July 1986.
\newblock \doi{10.1137/0907058}.

\bibitem[ST09]{spielman2009smoothed}
Daniel~A. Spielman and Shang-Hua Teng.
\newblock Smoothed analysis: An attempt to explain the behavior of algorithms
  in practice.
\newblock {\em Communications of the ACM}, 52(10):76--84, 2009.
\newblock \doi{10.1145/1562764.1562785}.

\bibitem[TWW88]{traub_information_1988}
J.~F. Traub, G.~W. Wasilkowski, and H.~Wo\'{z}niakowski.
\newblock {\em Information-Based Complexity}.
\newblock Academic Press Professional, Inc., 1988.

\bibitem[WZZ22]{woodruff2022optimal}
David Woodruff, Fred Zhang, and Richard Zhang.
\newblock Optimal query complexities for dynamic trace estimation.
\newblock {\em Advances in Neural Information Processing Systems},
  35:35049--35060, 2022.
\newblock \urlprefix\url{https://dl.acm.org/doi/10.5555/3600270.3602810}.

\bibitem[ZDW13]{zhang2013divide}
Yuchen Zhang, John Duchi, and Martin Wainwright.
\newblock Divide and conquer kernel ridge regression.
\newblock In {\em Conference on Learning Theory}, pages 592--617. PMLR, 2013.
\newblock \urlprefix\url{https://proceedings.mlr.press/v30/Zhang13.html}.

\end{thebibliography}
\newcommand{\etalchar}[1]{$^{#1}$}

\end{document}